\newcommand{\und}{\underline}
\newcounter{myequation}
\theoremstyle{plain}%
\newtheorem{theorem}{Theorem}
\newtheorem{proposition}[theorem]{Proposition}%
\newtheorem{lemma}[theorem]{Lemma}
\newtheorem{corollary}[theorem]{Corollary}
\theoremstyle{remark}%
\theoremstyle{definition}%
\newtheorem{definition}{Definition}%
\DeclareMathOperator{\supp}{supp}
\DeclareMathOperator{\dist}{dist}
\newcommand{\di}{\partial}
\DeclareMathOperator{\Tr}{Tr}
\newcommand{\g}{\gamma}
\newcommand{\al}{\alpha}
\newcommand{\Cb}{\mathbb{C}}
\newcommand{\Rb}{\mathbb{R}}
\newcommand{\N}{\mathbb{N}}
\newcommand{\om}{\omega}
\renewcommand{\l}{\lambda} 
\newcommand{\abs}[1]{\ensuremath{\left\lvert#1\right\rvert}}
\newcommand{\norm}[1]{\ensuremath{\left\lVert#1\right\rVert}}
\newcommand{\sbr}[1]{\left[#1\right]}
\newcommand{\Set}[1]{\left\{#1\right\}}
\newcommand{\md}[6]{\ensuremath{
		\ifinner
		\tfrac{\partial{^{#2}}#1}{\partial{#3^{#4}}\partial{#5^{#6}}}
		\else
		\tfrac{\partial{^{#2}}#1}{\partial{#3^{#4}}\partial{#5^{#6}}}
		\fi
}}
\newcommand{\del}[1]{\left(#1\right)}
\newcommand{\thmref}[1]{Theorem~\ref{#1}}
\newcommand{\lemref}[1]{Lemma~\ref{#1}}
\newcommand{\figref}[1]{Figure~\ref{#1}}
\newcommand{\corref}[1]{Corollary~\ref{#1}}
\definecolor{green}{rgb}{0.0, 0.5, 0.5}
\definecolor{lgray}{gray}{0.9}
\definecolor{llgray}{gray}{0.95}
\definecolor{lllgray}{gray}{0.975}
\newcommand{\cB}{\mathcal{B}}
\newcommand{\cD}{\mathcal{D}}
\newcommand{\cF}{\mathcal{F}}
\newcommand{\cG}{\mathcal{G}}
\newcommand{\cJ}{\mathcal{J}}
\newcommand{\cK}{\mathcal{K}}
\newcommand{\cL}{\mathcal{L}}
\newcommand{\cM}{\mathcal{M}}         
\newcommand{\cX}{\mathcal{X}}
\newcommand{\nc}{\newcommand}
\nc{\h}{\delta}
\nc{\G}{\Gamma}
\nc{\et}{\eta} 
\nc{\gam}{\gamma}
\nc{\ka}{\kappa}
\nc{\lam}{\lambda}
\nc{\Lam}{\Lambda}
\nc{\ta}{\tau}
\nc{\w}{\omega}
\nc{\io}{\iota}
\nc{\s}{\sigma}
\nc{\vphi}{\varphi}
\nc{\e}{\epsilon}
\renewcommand{\k}{\kappa}
\nc{\ran}{\rangle}
\nc{\lan}{\langle}
\newcommand{\tr}{\mathrm{Tr}}
\nc{\bfone}{{\bf 1}}
\nc{\dd}{\mathrm{d}}
\newcommand{\DETAILS}[1]{}
\newcommand{\astlo}{\hat\chi}
\renewcommand{\cp}{\mathrm{c}}
\DeclareMathOperator{\dG}{\mathrm{d}\Gamma}
\newcommand{\SD}{\mathrm{SD}}
\newcommand{\Norm}[1]{\norm{#1}}
\DeclareMathOperator{\WC}{WC}
\begin{document}

\preprint{APS/123-QED}

	\title[Propagation of information]{Information propagation in long-range quantum many-body systems }
	
	\author{Marius Lemm}
		\email{marius.lemm@uni-tuebingen.de}
	\affiliation{Department of Mathematics, University of T\"ubingen, 72076 T\"ubingen, Germany }
		\author{Carla Rubiliani}
				\email{carla.rubiliani@uni-tuebingen.de}
	\affiliation{Department of Mathematics, University of T\"ubingen, 72076 T\"ubingen, Germany }
	\author{Israel Michael Sigal}
\affiliation{Department of Mathematics, University of Toronto, Toronto, ON M5S 2E4, Canada }
\email{im.sigal@utoronto.ca}

\author{Jingxuan Zhang}
\affiliation{Yau Mathematical Sciences Center, Tsinghua University, Beijing 100084, China }
\affiliation{Department of Mathematical Sciences, University of Copenhagen, Copenhagen 2100, Denmark}
\email{jingxuan@tsinghua.edu.cn}

\date{December 17, 2023}
\keywords{Lattice bosons; maximal propagation speed; Lieb-Robinson bounds; quantum messaging; quantum state control}

\begin{abstract}
We study general lattice bosons with long-range hopping and long-range interactions decaying as $|x-y|^{-\alpha} $ with $\alpha\in (d+2,2d+1)$. We find a linear light cone for the information propagation starting from suitable initial states. We apply these bounds to estimate the minimal time needed for quantum messaging, for the propagation of quantum correlations, and for quantum state control. The proofs are based on the ASTLO method (adiabatic spacetime localization observables). Our results pose previously unforeseen limitations on the applicability of fast-transfer and entanglement-generation protocols developed for breaking linear light cones in long-range and/or bosonic systems.
\end{abstract}

\maketitle

The finite speed of information propagation is an empirical fact of Nature. In relativistic theory, the existence of the light cone is a fundamental requirement with wide-ranging consequences. It is remarkable that similar effective ``light'' cones also constrain the non-relativistic quantum theory that governs condensed-matter physics. 
The existence of such an effective ``light'' cone was discovered by Lieb and Robinson (\cite{LR}) 50 years ago in quantum spin systems. 

In the earliy 2000s, starting with the work of Hastings (\cite{H1}) on  the Lieb-Schultz-Mattis theorem, 
several works leveraged Lieb-Robinson bounds to great effect to control ground state correlation properties and study topological quantum phase transitions \cite{BH,BHV,EisOsb,HastKoma,H2,H3,HW,NachOgS,NachS}. These important developments revolutionized our understanding of the information content of quantum matter at zero temperature.
Since then, inspired by these works, an active research area 
dealing with dynamics of quantum information 
has sprung to life.  A variety of improvements and extensions of the original LRB have been found, e.g., to long-range spin interactions~\cite{Fossetal,TranEtAl1, TranEtAl3,TranEtal4}, lattice fermions \cite{MR3869135,WH}, open quantum lattice systems \cite{NachVerZ,Pou}, and anomalous transport \cite{DLLY,DLLY2}  have been obtained. Moreover, the applications of LRBs and related propagation bounds have been expanded and deepened to include, e.g.,  
quantum state transport \cite{EpWh, FLS2} and error bounds on quantum simulation algorithms, e.g., \cite{PhysRevLett.115.130401,MR3457926,TranEtAl1, TranEtAl3},  equilibration times \cite{GogEis} in condensed matter physics,  and scrambling times relevant to high-energy physics (\cite{CL,KS_he,RS}).   
See the survey papers \cite{CLY,KGE, NachS2,NSY2} for further background on LRBs and \cite{cheneau2012light,Cheneau} for their experimental observation.

The present paper focuses on lattice bosons where it had been a long-standing problem to derive useful propagation bounds for general initial states because interactions are effectively unbounded. This problem has recently seen rapid progress \cite{FLS, FLS2,KS2,KVS,YL,WH} especially for nearest-neighbor boson hopping \cite{KVS}. Nonetheless, open problems have remained particularly concerning propagation in systems of bosons with long-range hopping and long-range interactions. Long-range interactions are subtle in quantum spin systems --- see \cite{TranEtAl1} for the phase diagram for LRBs in long-range quantum spin systems and \cite{Bose,	DefenuEtAl} for  reviews of the effect of the long-range interactions  on the transmission of quantum information and the relation to the bosonic case is subtle \cite{CGS}. For bosons, the long-range hopping and long-range interactions are effectively unbounded and the challenge is to derive LRBs with performance guarantees that are independent of ad-hoc truncation of the local particle number.




To tackle this problem, we consider a broad class of many-boson Hamiltonians with \textit{long-range hopping and long-range interactions} on lattices. We present new results on the minimal time for quantum messaging and the propagation of quantum correlations  
  (including non-existence of fast	scrambling, cf.~\cite{KS_he,CL}) 
 and control of states. 
Concretely, fixing 
  a finite lattice $\Lambda\subset \Rb^d,\,d\ge1$, we consider many-boson Hamiltonians of the form
\begin{align}\label{H} 
H_\Lam &=  \sum_{x,y \in \Lambda} h_{xy} a_x^*a_y + \frac12\sum_{\substack{x,y\in \Lam}}a^*_xa^*_y w_{xy}a_ya_x,
\end{align} 
acting on the bosonic Fock spaces $\cF_\Lam$ over the $1$-particle space $\ell^2(\Lam)$. Here $a_x$ and $a^*_x$ are bosonic annihilation and creation operators, respectively, and $h_{xy}$ is a Hermitian $|\Lambda|\times |\Lambda|$ matrix representing a $1$-particle Hamiltonian $h$ (where $|\Lambda|$ denotes the number of vertices of $\Lambda$) and $w_{xy}$ is a real-valued $|\Lambda|\times |\Lambda|$ matrix representing a $2$-particle pair potential. The results we present here extend to fermionic lattice systems.

The class of Hamiltonians \eqref{H} goes beyond Bose-Hubbard type Hamiltonians because the hopping matrix $h_{xy}$ and the two-particle interaction $w_{xy}$ can be of infinite range. 
	Our propagation estimates take into account the decay properties of the hopping matrix and the interaction through moments of the position operator. Namely, we assume that there exists some integer $p\geq 1$ such that
\begin{align}\label{k-cond}
	&\kappa_p=\norm{h}_p+\norm{w}_{p}<\infty,\\
	\nonumber	&\textnormal{where } \norm{u}_p=\sup _{x\in\Lam} \sum_{y\in\Lam} \abs{u_{xy}}\abs{x-y}^{p+1}.
\end{align} 
For example, in the short-range (e.g. finite-range or subexponentially decaying) regime, condition \eqref{k-cond} holds for arbitrarily large $p\ge1$. Importantly, condition \eqref{k-cond} holds also for long-range interactions $\abs{h_{xy}},\abs{w_{xy}}\leq C (1+|x-y|)^{-\alpha}$ with decay rate $\alpha>d+2$ for any $p<\alpha-d-1$.




Our first main result is the maximal velocity bound (MVB), \textbf{\thmref{thmMVB}} (i), which holds for completely general initial states. It essentially says that even bosons with long-range hopping and long-range interactions cannot propagate particles super-ballistically.  We also extend an idea of \cite{FLS2} to bound the speed of propagation of macroscopic particle clouds in \textbf{\thmref{thmPT}}. 

Next, we use the MVB to derive  \textbf{Theorem \ref{thmLCA}} on  the light-cone approximation of quantum dynamics, which, in turn, yields the weak LRB,  \textbf{\thmref{thmLRB}} for localized initial states. It is remarkable that a linear light cone for information propagation can be proved for a subclass of initial states for such general long-range bosonic Hamiltonians because such a linear light cone is expected to break for bosonic Hamiltonians with only nearest-neighbor hopping in general \cite{KVS}. Another reason the result is surprising is that for long-range spin interactions with decay rates $\alpha \in(d+2,2d+1)$ there exist fast-transfer and entanglement-generation protocols \cite{KS1,EldredgeEtAl,TranEtal4,TranEtal5} that break any linear light cone. Nonetheless, our results yield a linear light cone for such decay rates $\alpha \in(d+2,2d+1)$. Hence, our \thmref{thmLRB} places unforeseen constraints on the applicability of these fast-transfer and entanglement-generation protocols for localized initial states. We emphasize that our results are not in contradiction to these fast-transfer protocols because the Hamiltonians
we treat in \eqref{H} only involve boson density-density interactions and
we require the initial state to be suitably localized. 


We then turn to applications in  \textbf{Theorems \ref{thm5}--\ref{thm:gap}}. These results 
  provide general constraints on  propagation/creation of correlation, quantum messaging, state control times, and the relation between a spectral gap and the decay of correlations for the broad class of lattice boson Hamiltonians we consider. These physically meaningful consequences can be derived from the main results Theorems \ref{thmLCA} and \ref{thmLRB} through by now standard arguments  \cite{BHV,H0,H1,HastKoma,NachS,NachOgS} highlighting the ``unreasonable effectiveness of Lieb-Robinson bounds''; see also \cite{CLY,KGE, NachS2,NSY2}. 
Of specific interest is \textbf{Theorem \ref{thm5}} for pure states which bounds the minimal time for \textit{creation of quantum entanglement} between different regions. 

%

Our work builds on a completely different approach to propagation bounds originally introduced in \cite{SigSof2} for few-body quantum mechanics in continuous space and further developed in \cite{Skib, HeSk, BoFauSig,APSS,BFLS,BFLOSZ}.
The method was recently extended to  Bose-Hubbard Hamiltonians with long-range hopping in \cite{FLS, FLS2} and we draw on the insights developed therein.


There has recently been intensive research activity concerning propagation bounds for bosonic lattice systems. Results similar to Theorem \ref{thmMVB}, \ref{thmLCA} and \ref{thmLRB} 
have 
been obtained in \cite{SHOE}, \cite{YL} and \cite{KS2,KVS} for the case of nearest-neighbor hopping. Earlier influential works derived Lieb-Robinson bounds for systems of harmonic oscillators \cite{PhysRevLett.111.230404} which can be coupled to a finite-dimensional quantum system \cite{MR2544074} or perturbed by an anharmonic interaction \cite{MR2472026}; see also \cite{PhysRevLett.115.130401,MR3457926}.


Of particular interest is \cite{KVS} which considers nearest-neighbor hopping and derives superlinear light cones $|x|\sim t \log\,t$ for particle propagation respectively $\abs{x}\sim t^d\,\mathrm{polylog}\,t$ for information propagation, where $d$ is the lattice dimension. Regarding particle propagation, Theorem \ref{thmMVB} Part (ii) extends the first-moment bound in \cite{KVS} from nearest-neighbor interactions to any finite range and it improves the light cone to exactly linear $|x|\sim t$ under a mild initial-density assumption. Considering information propagation, we obtain much stronger linear light cones $|x|\sim t$  (and various information-theoretic consequences) for general long-range interactions under the assumption of initially localized states. The results thus point to a certain dichotomy in the information-propagation capabilities of even long-range lattice bosons depending on the localization of the initial state, an effect first found in \cite{SHOE}.

%

\section{Setup and main results}\label{sec:setup}


When testing observables against quantum states, we identify the density matrix $\rho$ with the linear functional 
$\om(A)\equiv \om_\rho(A)=\Tr(A \rho)$ on observables $A$. We consider the time evolution of observables in the Heisenberg picture
	\begin{equation}\label{1.5}
		\al_t(A)=e^{itH}Ae^{-itH}\text{, so that }\om_t(A)=\om(\al_t(A)).
	\end{equation}
	We consider initial quantum states satisfying 
	\begin{equation}\label{g0-cond}
	\om\in\cD ,\quad 	\om(N^2)<\infty.
	\end{equation}
	\noindent where $\mathcal D$ is the domain of the commutator with $H$ \footnote{{Here is the rigorous
definition of $D$. Let $S(\mathcal{F})$ be the Schatten class  of quantum states, i.e.~positive operators $\rho$ on the bosonic Fock space $\mathcal{F}$ with $\mathrm{Tr}(\rho)=1$. 
				We denote by $\mathcal{D}$	the domain of $A\mapsto [A,H]$, given by $\mathcal{D}=\{\rho\in S( \mathcal{F} ) \mid \rho \mathcal{D}(H)\subset\mathcal{D}(H)\text{ and }[H,\rho]\in S(\mathcal{F})\}$,
				and we write $\omega\in\mathcal{D}$ if $\omega=\omega_\rho$ for some $\rho\in\mathcal{D}$}}. Examples include initial states of fixed particle number.


For any region $X\subset \Lam$, we write $d_X(x)=\inf_{y\in X}\abs{x-y}$ for the associated distance function and $X_\eta=\Set{x:d_X(x)\le \eta}$ for the $\eta$-enlarged set with $\eta>0$. 

 A key role is played by the first moment of the hopping matrix  
	\begin{equation}\label{kappa}
		\kappa=\sup _{x\in\Lam} \sum_{y\in\Lam} \abs{h_{xy}}\abs{x-y}.
	\end{equation}
	which as we will see gives an explicit, calculable bound on the maximal velocity (i.e., the light cone slope).
	We recall that our standing assumption is that \eqref{k-cond} holds for some $p\geq 1$.

Our first result is a general maximal velocity bound (MVB) for particle transport. Here and below we say that the interaction is of finite range $R>0$ if $
	h_{xy}=0$ for all $x,\,y\in\Lam$ with $|x-y|>R$.

\begin{theorem}[MVB]\label{thmMVB}
	For every $v > \kappa$,  the following holds:

		(i) There exists $C=C(p,\|h\|_p,v )>0$ such that  for all $\eta\ge1$, $X\subset \Lam$, 
	and any initial state $\omega$,  \begin{equation}
			\label{MVE}
	\sup_{	\abs{t}< \eta/v}		 \omega_t(N_{X})
			\le 
			C(\omega(N_{X_\eta}) +\eta^{-p} \omega(N)).
		\end{equation}
	
		(ii) Suppose that the interaction is of finite range $R$. Then there exists $C=C(p,\|h\|_p,v ,R)>0$ such that for any $X\subset \Lam$, there exists $\eta_0=\eta_0(X)\ge1$ such that for any $\eta\ge\eta_0$ and initial state $\omega$ of controlled density (i.e., $0<a\leq \om(n_x)\leq b<\infty$ for all $x$), we have
			 		\begin{align}
	 	\label{TSMVE} &\sup_{	\abs{t}< \eta/v}\om_t(N_{X})
	 	\le 
	 	C	\frac{b^2}{a^2}\om(N_{X_\eta}) 	 \end{align}	
\end{theorem}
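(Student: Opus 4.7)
The plan is to apply the ASTLO (adiabatic spacetime localization observable) method, tracking a smooth, time-dependent one-particle observable that majorizes $N_X$ at time $t$ while being localized on $X_\eta$ at time $0$. Pick a smooth, nondecreasing cutoff $\chi:\R\to[0,1]$ with $\chi\equiv 0$ on $(-\infty,0]$, $\chi\equiv 1$ on $[1,\infty)$, chosen so that $\sqrt{\chi'}$ is also smooth (a standard ASTLO profile, cf.\ \cite{SigSof2,FLS,FLS2}). For $s\in[0,t]$, set
$$q_s(x) := 1 - \frac{d_X(x) - v(t-s)}{\eta},\qquad B_s := \sum_x \chi(q_s(x))\, n_x.$$
Then $B_t\ge N_X$ (since $q_t(x)=1$ on $X$ so $\chi\circ q_t\ge\mathbf{1}_X$), while $\chi\circ q_0$ is supported in an $O(\eta)$-neighborhood of $X$ which, after rescaling the width of $\chi$, can be placed inside $X_\eta$. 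The goal is to show that $s\mapsto\omega(\alpha_s(B_s))$ is nearly non-increasing, so that $\omega_t(N_X)\le\omega(\alpha_t(B_t))$ is controlled by $\omega(B_0)\lesssim\omega(N_{X_\eta})$ plus a small error.

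Compute the covariant Heisenberg derivative $D_s B_s:=\partial_s B_s+i[H,B_s]$. The decisive observation is that the pair interaction in \eqref{H} is diagonal in the number basis ($a_x^*a_y^*a_y a_x = n_x n_y-\delta_{xy}n_x$), so it commutes with every $n_z$ and hence with $B_s$. Only the hopping contributes: $i[H,B_s]=i\sum_{x,y}h_{xy}\bigl(\chi(q_s(y))-\chi(q_s(x))\bigr)a_x^* a_y$. A first-order Taylor expansion of $\chi\circ q_s$, combined with the Hermiticity of $h$ and a Cauchy--Schwarz argument built on the $\sqrt{\chi'}$-factorization, yields the ``ASTLO-covariant'' pointwise bound
$$\pm i[h,B_s]\le\frac{\kappa}{\eta}\sum_x\chi'(q_s(x))\,n_x + R_s,$$
while $\partial_s B_s=-\frac{v}{\eta}\sum_x\chi'(q_s(x))\,n_x$. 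Adding these and using $v>\kappa$ kills the leading term,
$$D_s B_s\le\frac{\kappa-v}{\eta}\sum_x\chi'(q_s(x))\,n_x+R_s\le R_s,$$
and the remainder $R_s$ has the same ASTLO shape but with one extra factor of $1/\eta$ traded for one additional moment of $h$ from \eqref{k-cond}. Iterating this estimate $p$ times produces a telescoping cascade whose final residual is bounded, as an operator, by $C(p,\|h\|_p,v)\,\eta^{-p-1}N$. Integrating in $s$ over $|s|\le t<\eta/v$ and using $\omega(\alpha_t(B_t))\ge\omega_t(N_X)$ and $\omega(B_0)\le C\,\omega(N_{X_\eta})$ gives \eqref{MVE}.

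For part (ii), finite range $R$ confines the commutator $[h,B_s]$---and hence every residual produced in the cascade---to the tubular $R$-neighborhood of the moving front $\{d_X(x)\approx v(t-s)\}$. Choosing $\eta_0(X)$ large enough that this tube lies inside $X_\eta$ throughout $s\in[0,t]$, every term of the cascade is localized on $X_\eta$. Under the controlled-density hypothesis, two comparisons become available simultaneously: $\omega(n_x)\le b$ on the tube (to bound the residuals from above) and $\omega(n_y)\ge a$ for every $y\in X_\eta$ (to express a multiple of the support volume as a fraction of $\omega(N_{X_\eta})$). Applied together in the bookkeeping of the iteration---once to estimate the final residual, once to re-express it as a fraction of $\omega(N_{X_\eta})$---these produce the $b^2/a^2$ prefactor and replace the $\eta^{-p}\omega(N)$ error by a multiple of $\omega(N_{X_\eta})$, yielding \eqref{TSMVE}.

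The main obstacle is carrying out the commutator estimate with the \emph{sharp} leading constant $\kappa/\eta$ and with a remainder that re-enters the ASTLO hierarchy; only then does the $p$-fold iteration close on itself and deliver the exponent $\eta^{-p}$ dictated by \eqref{k-cond}. For part (ii), the additional subtlety is to translate the support constraint on the residuals into the $b^2/a^2$ factor without accidentally picking up volume factors of $|\Lambda|$ through the total number operator in the residual.
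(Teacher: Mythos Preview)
Your argument for part (i) is essentially correct and matches the paper's approach: the ASTLO observable, the recursive monotonicity estimate via Taylor expansion of $\chi$ with the $\sqrt{\chi'}$--Cauchy--Schwarz symmetrization, and the $p$-fold iteration leaving a residual $C\eta^{-p-1}N$ that, after time integration, becomes $C\eta^{-p}\omega(N)$ because $N$ is conserved.

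For part (ii) there is a genuine gap. The residuals $R_s$ sit inside the Heisenberg derivative $D_sB_s$, so after integrating in $s$ they contribute $\int_0^t \omega_s(R_s)\,ds$, i.e.\ they are evaluated in the \emph{evolved} state $\omega_s$. It is true that for finite-range $h$ and compactly supported $\chi$ the residual is supported in a tube inside $X_\eta$; but that only gives $\int_0^t \omega_s(N_{\mathrm{tube}_s})\,ds$. Unlike the total $N$ in part (i), $N_{\mathrm{tube}_s}$ is not conserved, and the controlled-density hypothesis $a\le\omega(n_x)\le b$ is assumed only at $s=0$. You therefore have no handle on $\omega_s(n_x)$ for $s>0$, and the step ``$\omega(n_x)\le b$ on the tube to bound the residuals from above'' does not apply to the quantity that actually appears.

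The paper's mechanism is entirely different. The idea is to send the iteration order $n\to\infty$ so that the global residual $\sim (R/\eta)^n\omega(N)$ disappears in the limit; finite range is what makes $\kappa^{(n)}\le\kappa^{(0)}R^n$ only exponential in $n$. The obstruction is that for compactly supported $\chi$ the constants in the cascade carry $\|\chi^{(n)}\|_\infty\sim n!$, which defeats the limit. The paper therefore replaces $\chi$ by \emph{band-limited} cutoffs (profiles of the form $\int_{-\infty}^x\check\varphi^4$ with $\varphi\in C_c^\infty$), for which $\|\chi^{(n)}\|_\infty\lesssim C^n$ uniformly, and develops a symmetrized Taylor expansion adapted to this class. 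The price is that $\chi$ loses compact support; the controlled-density hypothesis is then used at time $0$ (not at time $s$) to bound the tail of $\omega(B_0)$ outside $X_\eta$ in terms of $\omega(N_{X_\eta})$, and once more to pass from balls to a general $X$. This is where the two factors of $b/a$ originate.
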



The key steps in the proof of Theorem 1 are given in Section III below with details relegated to \cite{SiZh,SM}.


Theorem \ref{thmMVB} part (i) says that even bosons with long-range hopping cannot propagate super-ballistically. The error term $\eta^{-p} N$ grows in the thermodynamic limit and so the propagation is only controlled on macroscopic length scales $\eta\gg N^{-1/p}$. This restriction is removed in part (ii) under additional assumptions.

	
	
The $N$-dependence of the remainder can also be removed if we focus on macroscopic particle transport as follows.
	For a given $S\subset \Lam$, we define  the (macroscopic)  local relative particle numbers as $
		\bar N_S=\frac{N_S}{N_\Lam}. $
	For $0\le \nu\le 1$, we write $P_{\bar N_S\le \nu},\,P_{\bar N_{S^\cp}\ge \nu}$ for spectral projections associated to $\bar N_S$.
	\begin{theorem}[MVB for macroscopic particle transport]\label{thmPT}
		Suppose the initial state $\om$  satisfies  $ \om(P_{\bar N_{X_\eta}\ge \nu})=0$
		with some $\eta\geq 1,\,\nu\ge0,\, X\subset \Lam$. Then, for all $	\nu'>\nu$, $v>\kappa,$ there exists $C=C(p,\kappa_p,v,\nu'-\nu)>0$ such that 
		$\abs{t}< \eta/v$:
		\begin{equation}\label{6.2}
			\om_t\del{P_{\bar N_{X}\ge \nu'}}\le C\eta^{-p}.
		\end{equation}
	\end{theorem}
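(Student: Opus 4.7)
The plan is to reduce the claim to a smoothed maximal-velocity estimate for a \emph{nonlinear} observable, extending the ASTLO argument used for Theorem~\ref{thmMVB}(i). Choose once and for all a smooth non-decreasing function $F\in C^\infty(\R)$ with $F\equiv 0$ on $(-\infty,\nu]$, $F\equiv 1$ on $[\nu',\infty)$, $0\le F\le 1$, and $\supp F'\subset[\nu,\nu']$. Functional calculus for the bounded self-adjoint operator $\bar N_X$ then gives the Chebyshev-type bound
\[
P_{\bar N_X\ge\nu'}\;\le\; F(\bar N_X),
\]
so it suffices to estimate $\omega_t(F(\bar N_X))$. Since $N_\Lam$ is conserved (hence commutes with all $N_S$) and $X\subset X_\eta$ yields $\bar N_X\le\bar N_{X_\eta}$ as commuting operators, monotonicity of $F$ combined with the hypothesis $\omega(P_{\bar N_{X_\eta}\ge\nu})=0$ forces $\omega(F(\bar N_X))=0$ at $t=0$. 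It remains to bound the growth of $\omega_t(F(\bar N_X))$ over $|t|<\eta/v$ by $C\eta^{-p}$.

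For the propagation step I would work with a smoothly dressed ASTLO. Introduce a monotone family of cutoffs $\chi_s:\Lam\to[0,1]$, $s\in[0,|t|]$, interpolating from $\chi_0\ge\one_X$ (a mollification of $\one_X$) at $s=0$ to $\chi_{|t|}\le\one_{X_\eta}$ (supported in $X_\eta$) at $s=|t|$, with front speed set to $v>\kappa$. Set $A_s:=N_\Lam^{-1}\sum_{x\in\Lam}\chi_s(x)\,n_x$, so that $A_0\ge\bar N_X$ and $A_{|t|}\le\bar N_{X_\eta}$, and consider the ASTLO quantity $\phi(s):=\omega(\alpha_{t-s}(F(A_s)))$. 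Via a Helffer--Sj\"ostrand representation
\[
F(A_s)=\frac{1}{\pi}\int_\C \bar\partial\tilde F(z)\,(z-A_s)^{-1}\,dL(z),
\]
the commutator $[H,F(A_s)]$ and the derivative term $F'(A_s)\,\partial_s A_s$ reduce, via the resolvent identity $[H,(z-A_s)^{-1}]=(z-A_s)^{-1}[H,A_s](z-A_s)^{-1}$, to expressions linear in $[H,A_s]$ and $\partial_s\chi_s$, both localized in the thin receding front where $\partial_s\chi_s\ne 0$. The moment hypothesis \eqref{k-cond} together with $v>\kappa$ then delivers, exactly as in the derivation of \eqref{MVE}, the pointwise operator inequality
\[
-i[H,F(A_s)]+F'(A_s)\,\partial_s A_s\;\ge\;-C\,\eta^{-(p+1)},
\]
which, integrated over $s\in[0,|t|]$ with $|t|\le\eta/v$, yields $\omega_t(F(\bar N_X))\le\omega(F(A_{|t|}))+C\eta^{-p}\le\omega(F(\bar N_{X_\eta}))+C\eta^{-p}$. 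Combined with the first paragraph this proves \eqref{6.2}.

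The principal obstacle is precisely this nonlinear ASTLO step. For the linear observable $N_X$ of Theorem~\ref{thmMVB}(i), $[H,A_s]$ is itself supported in the annular front where $\partial_s\chi_s\ne 0$, so the moment weight $|x-y|^{p+1}$ in \eqref{k-cond} directly produces the $\eta^{-p}$ gain. For $F(A_s)$ this localization must be transported through every term of the resolvent expansion, and one must verify — using $[H,N_\Lam]=0$ to absorb the $N_\Lam^{-1}$ normalization in $A_s$ — that no volume factor $|\Lam|$ or $N_\Lam$ survives in the remainder. This sectorwise cancellation is what keeps \eqref{6.2} $N$-independent, as required for a macroscopic-transport statement.
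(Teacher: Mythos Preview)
Your approach is essentially the one the paper uses: replace the spectral projector by a smooth cutoff $F$ (``second-order ASTLO via spectral calculus'') and run the ASTLO monotonicity argument on $F(A_s)$ with $A_s=\hat\chi_s/N_\Lambda$, so that the $N_\Lambda^{-1}$ normalization kills the $\omega(N)$ in the linear remainder and yields an $N$-independent $O(\eta^{-p})$ bound. Your reduction in the first paragraph and your endpoint identifications $F(A_0)\ge F(\bar N_X)$, $F(A_{|t|})\le F(\bar N_{X_\eta})$ are correct, and you correctly isolate the crux as controlling the Helffer--Sj\"ostrand remainders.

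One point deserves sharpening. The displayed operator inequality $-i[H,F(A_s)]+F'(A_s)\,\partial_s A_s\ge -C\eta^{-(p+1)}$ does not follow ``exactly as in the derivation of \eqref{MVE}'' in one stroke: the multiple-commutator terms $\mathrm{ad}_{A_s}^k(i[H,A_s])$ coming out of the resolvent expansion are only $O(\eta^{-(k+1)})$ for $k\ge 1$, not $O(\eta^{-(p+1)})$. What actually happens (and what the paper means by ``similar to the proof of \thmref{thmMVB}(i)'') is that these subleading terms are again of ASTLO type---they carry factors $(\chi_s(x)-\chi_s(y))^{k+1}$ localized on the front---so one obtains a recursive monotonicity estimate for $F(A_s)$ of exactly the same shape as \eqref{RMB}, and bootstraps it $p$ times just as in the linear case. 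Once you write the leading contribution as $F'(A_s)^{1/2}\,(DA_s)\,F'(A_s)^{1/2}$ and feed in the linear RME, the recursion closes with an $N$-independent remainder because each power of $N$ in the linear estimate is paired with an $N_\Lambda^{-1}$ from $A_s$. This is the ``sectorwise cancellation'' you anticipate in your last paragraph; it is not an obstacle but precisely the mechanism, and the paper (via \cite{FLS} and \cite[Sect.~4]{SiZh}) carries it out in this way.
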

		
	In words, this result \eqref{6.2} says that macroscopic particle clouds of $(\nu'-\nu)N$ particles travel at most with speed $\kappa$. 
	Theorems \ref{thmMVB}--\ref{thmPT} hold without Assumption \eqref{k-cond} on $w$.


Let us now suppose that all the bosons are initially localized in a region $X$, i.e., the initial condition $\om$ satisfies $\om(N_{X^\cp})=0$. Then, for all $\abs{t}<\eta/v$ we can apply Theorem \ref{thmMVB} to complements to obtain
 \begin{align}
		\om_t(N_{X^\cp_\eta})=\om(\al_t(N_{X^\cp_\eta}))\le C\eta^{-p}\om(N_X).
	\end{align} 
	Here we write $X^\cp{=\Lam\setminus X}$ and set
	 $X_\eta^\cp\equiv(X_\eta)^\cp$.

	It is this setup that we will use for proving the light cone approximation and LRBs next. 
	We say that an operator $A$ acting on $\cF$ is \textit{localized} in $X\subset \Lam$ (in symbols, $\supp A\subset X$) if $
		\sbr{A,a_x^\sharp}=0 $ for all $ x\in X^\cp,$ where $a_x^\#$ stands for either $a_x$ or $a_x^*$.
%
In \thmref{thmLCA} below, we show that the evolution of initially localized observables under \eqref{1.5} is approximated by a family of observables localized within the LC of the initial support. 
	
For any subset $S\subset \Lam$, we define the \textit{localized evolution} of observables as $
		\al_t^S(A)=e^{itH_S}Ae^{-itH_S},$
	where $H_S$ is defined in  \eqref{H} with $S$ in place of $\Lam$	 and
	\begin{equation}\label{BX}
		\cB_S=\Set{A\in\cB(\cF):[A,N]=0,\, \supp A\subset S}
	\end{equation}
denotes bounded particle number conserving operators localized in $S$.
	\begin{theorem}[LC approximation of quantum evolution]\label{thmLCA}
		Suppose that the initial state $\om$ satisfies \eqref{g0-cond}  and 
		\begin{equation}\label{emptyshellcond}
			{ \om(N_{X^\cp})=0,}
		\end{equation}
		for some $X\subset \Lam$.
{Then, for every $v>2\kappa$, there exists $C=C(p,\kappa_p,v)>0$ such that for all }$\xi\ge1$ and $A\in \cB_X$,   the full evolution $ \al_t(A)$ is approximated by the local evolution $ \al_t^{X_\xi}(A)$, for all $\abs{t}<\xi/v$, as
		\begin{equation}\label{lcae} 
			\begin{aligned}
				\abs{\om{	\del{\al_t(A)-\al_t^{X_\xi}(A)} }} \le C\abs{t}\xi^{-p}\Norm{A}\om(N_X^2).
			\end{aligned}
		\end{equation}
		
	\end{theorem}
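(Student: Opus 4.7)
The plan is to derive Theorem~\ref{thmLCA} from the maximal velocity bound of Theorem~\ref{thmMVB}(i) via a Duhamel interpolation. Setting $B := H - H_{X_\xi}$ for the "boundary" Hamiltonian collecting all terms of \eqref{H} that couple $X_\xi$ to its complement, differentiation of the interpolating family $s \mapsto \alpha_{t-s}(\alpha_s^{X_\xi}(A))$ would give the identity
\begin{equation*}
\alpha_t(A) - \alpha_t^{X_\xi}(A) \;=\; i\int_0^t \alpha_{t-s}\bigl([B, \alpha_s^{X_\xi}(A)]\bigr)\,ds.
\end{equation*}
Pairing with $\omega$ reduces the task to a uniform pointwise bound on $|\omega_{t-s}([B, \alpha_s^{X_\xi}(A)])|$ followed by integration in $s$.

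The key structural observation I would exploit is that $\alpha_s^{X_\xi}(A)$ is \emph{strictly} localized in $X_\xi$ for every $s$, since $A \in \cB_X \subset \cB_{X_\xi}$ and $H_{X_\xi}$ acts trivially on $X_\xi^\cp$. Decomposing $B = B_h + B_w$, the pure $X_\xi^\cp$-terms commute with $\alpha_s^{X_\xi}(A)$, so only boundary pieces with one index in $X_\xi$ and one in $X_\xi^\cp$ contribute. For a hopping term $h_{xy}a_x^*a_y$ with $y \in X_\xi^\cp$, the CCR identity $[a_x^*a_y,\alpha_s^{X_\xi}(A)] = [a_x^*,\alpha_s^{X_\xi}(A)]\,a_y$ isolates the unbounded $a_y$ at the far site (the symmetric case is treated analogously). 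Cauchy-Schwarz in the state $\omega_{t-s}$ together with the particle-number conservation of $A$ would then yield a bound of the schematic form
\begin{equation*}
|\omega_{t-s}([h_{xy}a_x^*a_y,\alpha_s^{X_\xi}(A)])| \;\lesssim\; |h_{xy}|\,\|A\|\,\omega_{t-s}(n_y)^{1/2}\,\omega_{t-s}(N)^{1/2},
\end{equation*}
with an analogous quartic version for the $B_w$-contribution obtained via $a_x^*a_y^*a_ya_x = n_x n_y$. Summing over boundary pairs and using the tail estimate $\sum_{y:|x-y|\ge\xi}|h_{xy}| \le \xi^{-(p+1)}\|h\|_p$ from \eqref{k-cond} converts these pointwise estimates into a global bound controlled by $\omega_{t-s}(N_{X_\xi^\cp})$ and $\omega_{t-s}(N_X)$.

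The final ingredient is the MVB. Since $\omega(N_{X^\cp}) = 0$, Theorem~\ref{thmMVB}(i) applied to $X_\xi^\cp$ with enlargement radius $\eta = \xi/2$ (chosen so that $(X_\xi^\cp)_\eta \subset X^\cp$ kills the leading MVB term) yields $\omega_{t-s}(N_{X_\xi^\cp}) \lesssim \xi^{-p}\omega(N_X)$ uniformly for $|t-s| < \xi/(2v_1)$ with any $v_1 > \kappa$, which is precisely where the doubling $v > 2\kappa$ appears in the hypothesis. The quadratic factor $\omega(N_X^2)$ on the right-hand side of \eqref{lcae} arises because the Cauchy-Schwarz splitting leaves a residual $\omega(N) \le \omega(N_X)$ alongside the MVB-controlled factor $\omega(N_X)$, and $\omega(N_X)^2 \le \omega(N_X^2)$ by another application of Cauchy-Schwarz. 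The main obstacle throughout is the rigorous handling of the unbounded bosonic boundary commutators: unlike in the spin-system Lieb-Robinson argument, $B$ has no finite operator norm and must be tamed sector by sector. The number-conservation of $A$ (which makes $[a_x^\#,\alpha_s^{X_\xi}(A)]$ controllable on each fixed-particle-number subspace) combined with the quantitative particle-transport estimate from the MVB is the technical backbone that closes the argument.
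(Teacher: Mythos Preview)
Your Duhamel strategy, the localization of $\alpha_s^{X_\xi}(A)$ in $X_\xi$, the reduction to boundary terms of $R'=H-H_{X_\xi}$, and the application of the MVB to $\omega_{t-s}(N_{X_\xi^\cp})$ are exactly the paper's approach (the paper writes the interpolation as $r\mapsto\alpha_r(\alpha_{t-r}^{X_\xi}(A))$, which is your formula after the substitution $s\mapsto t-r$). Your explanation of the doubling $v>2\kappa$ and of the origin of the quadratic moment $\omega(N_X^2)$ is also consistent with the intended argument.

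There is one genuine imprecision in your summation step. You invoke the tail estimate $\sum_{y:|x-y|\ge\xi}|h_{xy}|\le\xi^{-(p+1)}\|h\|_p$ ``over boundary pairs'', but this is not available in general: a boundary pair has $x\in X_\xi$ and $y\in X_\xi^\cp$, and when $x$ lies in the shell $X_\xi\setminus X$ the distance $|x-y|$ can be arbitrarily small. Only pairs with $x\in X$ automatically satisfy $|x-y|>\xi$. The $\xi^{-p}$ decay therefore cannot be extracted from the hopping tail alone; it must come from the MVB applied to the far leg $y\in X_\xi^\cp$. Relatedly, your schematic Cauchy--Schwarz bound yields the square-root factor $\omega_{t-s}(n_y)^{1/2}$, and summing these naively produces either only $\xi^{-p/2}$ or spurious volume factors. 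Obtaining the full $\xi^{-p}$ requires a sharper organization of the commutator estimate (for instance, using that the boundary commutator annihilates the projection onto $\{N_{X_\xi^\cp}=0\}$ from both sides, or an intermediate-scale splitting combined with the MVB); this is the nontrivial part the paper defers to the companion reference. Your outline is on the right track, but as written the summation step would not close with the stated power.
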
 
%
	

	We remark that the term $\omega(N_X^2)$ which appears here and in the following is controlled by the state's local energy density and particle density near $X$ \footnote{{Fix a region $X$. We can apply Cauchy-Schwarz twice (once to the hopping term and once to $n_xn_y\leq \frac{n_x^2+n_y^2}{2}$) to obtain the operator inequality			$N_X^2 \leq C|X| (H_X+N_{X_1})$. This shows that for finite regions $X$ and for states of locally bounded energy and particle density, the expectation of $N_X^2$ is well-controlled.}}.

	\begin{theorem}[Weak Lieb-Robinson bound] 
		\label{thmLRB}
		Suppose the assumptions of \thmref{thmLCA} hold with $n\ge1,\,X\subset\Lam$. Then, for every $v>2\kappa$, there exists $C=C(p,\kappa_p,v)>0$ such that for all $\xi\ge1$,  $Y\subset\Lam$ with $\dist(X, Y)\ge2\xi$, and  operators $A\in \cB_X,\,B\in \cB_Y$, 
		we have, for all $\abs{t}< \xi/v$:
		\begin{align}
			\abs{\om\del{[ \al_t(A), B] }}  
			\le C\abs{t}\Norm{A}\Norm{B}\xi^{-p}\om(N_X^2).\label{LRB}
		\end{align}
	\end{theorem}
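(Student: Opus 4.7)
The plan is to reduce Theorem~\ref{thmLRB} to Theorem~\ref{thmLCA} via a support argument. First, I would observe that the local evolution $\alpha_t^{X_\xi}(A)=e^{itH_{X_\xi}}Ae^{-itH_{X_\xi}}$ remains in $\cB_{X_\xi}$ for all $t$, since $A\in\cB_X\subset\cB_{X_\xi}$ and $H_{X_\xi}$ is the Hamiltonian on the sub-Fock space over $X_\xi$. The assumption $\dist(X,Y)\ge 2\xi$ forces $\dist(X_\xi,Y)\ge\xi>0$, so $X_\xi\cap Y=\emptyset$ and the supports of $\alpha_t^{X_\xi}(A)$ and $B\in\cB_Y$ are disjoint. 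Hence $[\alpha_t^{X_\xi}(A),B]=0$.

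Setting $D_t:=\alpha_t(A)-\alpha_t^{X_\xi}(A)$, the identity above gives
\[
\omega\bigl([\alpha_t(A),B]\bigr)=\omega([D_t,B])=\omega(D_tB)-\omega(BD_t).
\]
To bound each term, I would interpret $\psi_1(\cdot)=\omega(\cdot\,B)$ and $\psi_2(\cdot)=\omega(B\,\cdot\,)$ as bounded linear functionals on $\cB(\cF)$ of norm at most $\|B\|$, and apply the functional-valued extension of Theorem~\ref{thmLCA}. A careful inspection of the ASTLO-based proof of that theorem shows that every estimate involves $\omega$ only through its linearity and a single normalization, so the LCA bound extends to the form $|\psi_i(D_t)|\le C|t|\,\xi^{-p}\|A\|\,|\psi_i(N_X^2)|$ for $i=1,2$. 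Finally, since $N_X^2$ is supported in $X$ and $B$ in $Y$, they commute, and a Cauchy--Schwarz argument using $B^*B\le\|B\|^2\,\mathbf{1}$ yields $|\psi_i(N_X^2)|\le\|B\|\,\omega(N_X^2)$. Combining these estimates produces \eqref{LRB}.

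The main obstacle is the extension in the previous paragraph: passing from the state-based LCA bound of Theorem~\ref{thmLCA} to an analogous estimate for the non-positive linear functionals $\omega(\cdot\,B)$ and $\omega(B\,\cdot\,)$. Making this rigorous requires either verifying that each step in the ASTLO argument underlying Theorem~\ref{thmLCA} depends on $\omega$ only via its linearity and total-variation norm and not via its positivity, or equivalently upgrading Theorem~\ref{thmLCA} to an operator-valued refinement bounding $D_t$ between Fock-space norms weighted by $N_X$. Once such a refinement is in place, all remaining steps reduce to the support considerations in Step~1 and a Cauchy--Schwarz estimate, so the commutator bound follows in a few lines.
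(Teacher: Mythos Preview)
Your reduction is the same as the paper's: derive the weak LRB from the light-cone approximation via the support observation $[\alpha_t^{X_\xi}(A),B]=0$, so that $\omega([\alpha_t(A),B])=\omega([D_t,B])$. You also correctly identify that Theorem~\ref{thmLCA} as stated bounds only $\omega(D_t)$, and that passing to $\omega(D_tB)$ and $\omega(BD_t)$ requires additional input.

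Of your two proposed resolutions, option (a) does not work as written. The ASTLO machinery underlying the MVB, and hence Theorem~\ref{thmLCA}, rests on operator inequalities---the recursive monotonicity estimate in Proposition~\ref{prop:RME} is of the form $D\hat\chi_{ts}\le(\cdots)$---and turning these into scalar bounds requires the functional to be positive. The functionals $\psi_i(\cdot)=\omega(\cdot\,B)$ and $\omega(B\,\cdot)$ are not positive, so the chain of inequalities breaks already at this step. Your claim that the argument uses ``only linearity and a single normalization'' is therefore incorrect.

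Resolution (b) is the right one and is what the paper (through the companion reference for Sections~5--6 and Appendix~D) carries out. The identity \eqref{At-differ} is an operator identity, and $[R',A_{t-r}^\xi]$ is estimated as a sesquilinear form: each contributing term contains at least one $a_y^\sharp$ with $y\in X_\xi^\cp$, so the form is controlled by quantities of the type $\|N_{X_\xi^\cp}^{1/2}\varphi\|\cdot\|N\psi\|$. With such a form bound in hand one can insert $B$ (which commutes with everything localized in $X_\xi$) and still apply the MVB to the genuine state $\omega_r=\omega\circ\alpha_r$, never to the non-positive $\psi_i$. Your final Cauchy--Schwarz step $|\omega(N_X^2B)|\le\|B\|\,\omega(N_X^2)$ is correct once you use $[N_X,B]=0$ to write $\omega(N_X^2B)=\omega(N_X\cdot BN_X)$ and apply Cauchy--Schwarz for the state $\omega$.
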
 

	We call a bound  of the form \eqref{LRB} the \textit{weak Lieb-Robinson bound (LRB)} because unlike the classical LRB, estimate \eqref{LRB} depends on a subclass of  states.


	\section{Applications}\label{secAppl}


	\paragraph{Propagation/creation of correlations.}
	
	Assuming a state $\om$ is weakly correlated in a domain $Z^\cp\subset \Lam$, how long does it take to create substantial correlations in $Z^\cp$? 	For subsets $X,Y,Z\subset \Lam$,  let $d_{XY}=\dist(X,Y)$ and  $
	{		d_{XY}^{Z^c}=\min (d_{XY}, d_{XZ^c}, d_{YZ^c}).}$

	\begin{definition}[Weakly correlated state]\label{def:WC'}
	

	We say a state $\om$ is weakly correlated  inside a subset $Z$ on the length scale $\l>0$ if there are $C>0$, $p\ge1$ such that for all $X,\,Y\subset Z$ with $d_{XY}^{Z^c}>0$ and operators $A\in\cB_X,\,B\in\cB_Y$ (see \eqref{BX}), we have	\begin{equation}
		|\om^c(A,B)|\le C (d_{XY}^{Z^c}/\l)^{1-p} \norm{A}\norm{B},
	\end{equation}
	where $\om^c(A,B)=\om(AB)-\om(A)\om(B)$. In this case, we write $\WC(Z, \l,C,p)$. 
	
	
\end{definition}

\begin{theorem}[Propagation/creation of correlation]\label{thm5}
		Let  $Z \subset \Lam$ and suppose the initial state $\om$ satisfies \eqref{g0-cond},   $
	\om(N_{Z})=0,$ and is $\WC(Z, \l,C,p)$. 
	
	Then  $\om_t$ is  $\WC(Z, 3\l, C\om(N_{Z^c}^2),p)$
	for  all $ \abs{t}<\l/3\kappa$.
	
\end{theorem}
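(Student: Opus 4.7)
The plan is to approximate the Heisenberg-evolved observables $\al_t(A),\al_t(B)$ by spatially localized surrogates using the light-cone approximation (\thmref{thmLCA}) and then invoke the initial weak-correlation hypothesis on $\om$ applied to the surrogates. Write $U,V\subset Z$ for the supports of $A,B$ and set $d:=d_{UV}^{Z^c}$. We may assume $d\ge 3\lam$, since for $d<3\lam$ one has $(d/(3\lam))^{1-p}\ge 1$ (because $1-p\le 0$) and the claimed bound is already supplied by the universal estimate $|\om_t^c(A,B)|\le 2\|A\|\|B\|$.

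Using that $\al_t$ is a $*$-automorphism, $\om_t^c(A,B)=\om^c(\al_t(A),\al_t(B))$. Choose $\xi:=d/3$. Three geometric facts then follow: $U_\xi\cap V_\xi=\emptyset$, $U_\xi,V_\xi\subset Z$, and $d_{U_\xi V_\xi}^{Z^c}\ge d/3$. For $v$ slightly above $2\ka$ and $|t|<\lam/(3\ka)$, the assumption $d\ge 3\lam$ secures $|t|<\xi/v$, so a variant of \thmref{thmLCA} (keyed to the initial-particle region $Z^c$) supplies localized approximants $A_\xi:=\al_t^{U_\xi}(A)\in\cB_{U_\xi}$ and $B_\xi:=\al_t^{V_\xi}(B)\in\cB_{V_\xi}$, each with LCA remainder of order $|t|\xi^{-p}\om(N_{Z^c}^2)\|A\|$ (respectively $\|B\|$).

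Next, decompose $\om^c(\al_t(A),\al_t(B))=\om^c(A_\xi,B_\xi)+R$, with $R$ collecting the mixed remainders $\om((\al_t(A)-A_\xi)\al_t(B))$, $\om(A_\xi(\al_t(B)-B_\xi))$ and their expectation-factor analogues. The main term is bounded directly: $A_\xi\in\cB_{U_\xi}$ and $B_\xi\in\cB_{V_\xi}$ are both supported in $Z$ with $d_{U_\xi V_\xi}^{Z^c}\ge d/3>0$, so the hypothesis $\WC(Z,\lam,C,p)$ yields
\begin{equation*}
|\om^c(A_\xi,B_\xi)|\le C(d_{U_\xi V_\xi}^{Z^c}/\lam)^{1-p}\|A\|\|B\|\le C(d/(3\lam))^{1-p}\|A\|\|B\|,
\end{equation*}
which is the target WC bound at scale $3\lam$. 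The remainder $R$ is controlled by combining the LCA bound with state-level Cauchy--Schwarz, using that $\|A_\xi\|=\|A\|$ by unitarity of the localized evolution. At $\xi=d/3$ and $|t|\le\lam/(3\ka)$, $R$ is of order $\lam d^{-p}\om(N_{Z^c}^2)\|A\|\|B\|$ up to constants, which is dominated by the main WC term once $d\ge 3\lam$.

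The hard part is twofold. First, \thmref{thmLCA} as stated applies to observables $A\in\cB_X$ with $\om(N_{X^c})=0$; in our setting $A,B$ live in $Z$, which is \emph{disjoint} from the initial-particle region $Z^c$, so the theorem does not apply verbatim. A complementary version --- tracking particles entering $U_\xi,V_\xi$ from $Z^c$ rather than leaving them --- is needed; this should follow by the same MVB-based scheme from \thmref{thmMVB} applied to complements (cf.~the discussion preceding \thmref{thmLCA}). Second, the mixed terms in $R$ are not single-observable expectations of the form controlled by \thmref{thmLCA}; closing this gap calls for a second-moment variance bound on $\om((\al_t(A)-A_\xi)^*(\al_t(A)-A_\xi))$, or equivalently a commutator expansion invoking \thmref{thmLRB}.
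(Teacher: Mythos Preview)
The paper does not prove \thmref{thm5} in the text; it only remarks that this is a ``relatively standard consequence of Lieb-Robinson bounds with some modifications required due to the bosonic nature of the Hamiltonian'' and defers the full argument to the companion paper \cite{SiZh}. Your outline follows exactly this standard template --- localize $\al_t(A),\al_t(B)$ via a light-cone approximation, apply the initial $\WC$ hypothesis to the surrogates, and control remainders --- so at the level of strategy it matches what the paper indicates.

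Both obstacles you flag are genuine and constitute precisely the ``modifications'' the paper alludes to. For the first, you are right that \thmref{thmLCA} as stated places $A$ in the particle region, whereas here $A\in\cB_U$ with $U\subset Z$ disjoint from the particles in $Z^c$; the fix is exactly the complementary version you describe, obtained by feeding the MVB-in-complements estimate (the display just before \thmref{thmLCA}, namely $\om_r(N_{(Z^c)_\eta^c})\le C\eta^{-p}\om(N_{Z^c})$) into the integral identity \eqref{At-differ} for $A_t-\al_t^{U_\xi}(A)$, which yields an LCA remainder of size $C|t|\xi^{-p}\|A\|\om(N_{Z^c}^2)$. For the second, one clean route that avoids second moments is to apply this complementary LCA not only to $A$ and $B$ but also to the product $AB\in\cB_{U\cup V}$: then the problematic piece $\om(\al_t(AB))-\om(A_\xi B_\xi)$ reduces (after one single-observable LCA) to comparing $\al_t^{(U\cup V)_\xi}(AB)$ with $\al_t^{U_\xi}(A)\,\al_t^{V_\xi}(B)=e^{it(H_{U_\xi}+H_{V_\xi})}(AB)e^{-it(H_{U_\xi}+H_{V_\xi})}$, which is itself an LCA estimate with remainder Hamiltonian $H_{(U\cup V)_\xi}-H_{U_\xi}-H_{V_\xi}$ consisting only of terms coupling $U_\xi$ and $V_\xi$ across distance $\ge d/3$. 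Your alternative via Cauchy--Schwarz and a variance bound on $\om((\al_t(A)-A_\xi)^*(\al_t(A)-A_\xi))$ is also viable but requires an honest upgrade of \eqref{lcae} from first to second moment, which is more than what the paper states.
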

		

	\paragraph{Constraint on the propagation of quantum signals.}
	
	The weak LRB  \eqref{LRB} imposes a direct constraint on the speed of quantum messaging (cf.~\cite{BHV, FLS2, Pou}).  Assume that Bob at a location $Y$ is in possession of a state $\rho$ and an observable $B$ and would like to send a signal through the quantum channel $\alpha_{t}'$ to Alice who is at $X$ and who possesses the same state $\rho$ and an observable $A$. {To send a message}, Bob 
	sends Alice the state $\rho_r=e^{-iBr}\rho e^{iBr}$. 
To see whether Bob sent his message, 
Alice computes the difference between the expectations of $A$ in the states $\al_{t}'(\rho_r)$ and $\al_{t}'(\rho)$, which we call the \textit{signal detector}: $\SD(t,r)=\tr\big[A\al_{t}'(\rho_r)-A\al_{t}'(\rho) \big]$.	
\DETAILS{	Assume that Alice at a location $X$ is in possession of a state $\rho$ and an observable $A$ and would like to send a signal through the quantum channel $\alpha_{t}'$ to Bob who is at $Y$ and who possesses the same state $\rho$ and an observable $B$. {To send a message}, Alice uses $A$ to evolve $\rho$ for a time, $r>0$, as $\rho_r=e^{-iAr}\rho e^{iAr}$, and then sends Bob the resulting state   $\alpha_{t}'(\rho_r)$. 
	To see whether Alice sent her message, 
	Bob computes the difference between the expectations of $B$ in the states $\al_{t}'(\rho_r)$ and $\al_{t}'(\rho)$, which we call the \textit{signal detector}, given by 
$ \SD(t,r)
		=\Tr\sbr{B\al_t'(\rho_r)-B\al_t'(\rho)} 
$.}
{\begin{theorem}[Bound on messaging time]\label{thm6}
	Under the assumptions of \thmref{thmLCA}, for every $v>4\kappa$, there exists $C=C(p,\kappa_p,v)>0$ such that for all  $\xi\ge2$,  $X,\,Y\subset\Lam$  with {$\dist(X, Y)\ge2\xi$,} and operators  $A\in\cB_X$, $B\in \cB_Y$ with  $\norm{B}_n<\infty$  (see after \eqref{k-cond}),
	we have, for all $r,\abs{t}< \xi/v$:
	\begin{equation}\label{7.2}
		\abs{	\SD(t,r)}\le Cr\abs{t}\xi^{-p}\Norm{A}\Norm{B}\Tr(N_X^2\rho).	\end{equation}
\end{theorem}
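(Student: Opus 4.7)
The plan is to reduce Theorem~\ref{thm6} to the weak Lieb--Robinson bound, Theorem~\ref{thmLRB}, applied along a one-parameter family of deformed initial states indexed by $s\in[0,r]$.

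First I would express the signal detector as an integral in $s$. Setting $\rho_s:=e^{-iBs}\rho e^{iBs}$ and differentiating gives $\rho_r-\rho=-i\int_0^r[B,\rho_s]\,ds$. Combining this with the duality $\Tr[A\,\alpha_t'(\rho)]=\Tr[\alpha_t(A)\rho]$ and cyclicity of the trace, I obtain
\begin{align*}
\SD(t,r)=\Tr[\alpha_t(A)(\rho_r-\rho)]=-i\int_0^r\omega_s\bigl([\alpha_t(A),B]\bigr)\,ds,
\end{align*}
where $\omega_s(\cdot):=\Tr(\cdot\,\rho_s)$. The task then becomes to estimate the integrand uniformly in $s$ via the weak LRB.

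Next I would verify that each $\omega_s$ satisfies the hypotheses of Theorem~\ref{thmLRB}. Since $B$ is bounded and $[B,N]=0$, conjugation by the unitary $e^{-iBs}$ preserves $\mathcal{D}$ and $\omega_s(N^2)=\omega(N^2)<\infty$. The hypothesis $\dist(X,Y)\ge 2\xi$ yields $Y\subset X^c$, so $N_{X^c}=N_Y+N_{X^c\setminus Y}$; support-disjointness gives $[B,N_{X^c\setminus Y}]=0$, and particle-number conservation gives $[B,N_Y]=[B,N]-[B,N_{Y^c}]=0$. Hence $[B,N_{X^c}]=0$ and, by the same argument, $[B,N_X]=0$. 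Conjugation-invariance then supplies the two key identities
\begin{align*}
\omega_s(N_{X^c})=\omega(N_{X^c})=0,\qquad \omega_s(N_X^2)=\omega(N_X^2),
\end{align*}
the first placing $\omega_s$ in the scope of Theorem~\ref{thmLRB} and the second ensuring the resulting bound is independent of $s$.

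Finally, choosing the intermediate velocity $v':=v/2>2\kappa$ (using $v>4\kappa$) so that $|t|<\xi/v\le\xi/v'$, the weak LRB applied to $\omega_s$ gives $|\omega_s([\alpha_t(A),B])|\le C|t|\,\|A\|\,\|B\|\,\xi^{-p}\,\omega(N_X^2)$ uniformly in $s\in[0,r]$. Integrating in $s$ produces the extra factor $r$ and yields \eqref{7.2} with $\omega(N_X^2)=\Tr(N_X^2\rho)$. The only mildly delicate point, and the one I expect to need the most care, is the bookkeeping in the middle step: the combination $Y\subset X^c$ and $[B,N]=0$ is precisely what allows the perturbation $e^{-iBs}$ to deform $\rho$ without spoiling the localization hypothesis $\omega(N_{X^c})=0$ on which Theorem~\ref{thmLRB} rests. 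The moment condition $\|B\|_n<\infty$ stated in the hypothesis does not seem to enter this bounded-$B$ argument; I expect it to be relevant only if one wishes to allow $B$ to be unbounded or of infinite range, in which case additional quantitative control of the family $\rho_s$ would be required.
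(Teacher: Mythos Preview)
Your proposal is correct and follows essentially the approach the paper has in mind: the paper states that Theorems~\ref{thm5}--\ref{thm:gap} are ``by now relatively standard consequences of Lieb-Robinson bounds'' and defers the details to \cite{SiZh}, and your reduction to Theorem~\ref{thmLRB} via the integral representation $\SD(t,r)=-i\int_0^r\omega_s([\alpha_t(A),B])\,ds$ together with the conjugation-invariance of $N_{X^\cp}$ and $N_X^2$ under $e^{-iBs}$ is precisely that standard argument. Your observation that the hypothesis $\|B\|_n<\infty$ is not used in the bounded-$B$ case, and that the threshold $v>4\kappa$ (rather than $2\kappa$) is stronger than your argument actually needs, is also accurate.
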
}

	\paragraph{Bound on quantum state control.}
	{For any $S\subset \Lam$, we denote by  $\cF_{S}$  the bosonic Fock space $\cF_S=\oplus_{n=0}^\infty \otimes^n_\mathrm{S} \ell^2(S),$ where $\otimes_\mathrm{S}$ stands for the symmetric tensor product. Due to the tensorial structure $\cF_\Lambda \simeq \cF_Y\otimes_\mathrm{S} \cF_{Y^\cp}$ (see \cite[App.~A]{SiZh}), we can define  the partial  trace $\mathrm{Tr}_{\cF_{Y^c}}$ over ${\cF_{Y^c}}$. We define the {\it restriction} of a state $\rho$ to the density operators on the local Fock  space $\cF_{Y}$, $Y\subset \Lam,$ 
		by $[\rho]_Y= \mathrm{Tr}_{\cF_{Y^c}} \rho$. 
	}
	
	Let  $\tau$ be a quantum map (or {\it state control
		map}) 
	supported in $X$. 
	Given  a density operator $\rho$, our task is to design $\tau$ so that  at some time $t$, the evolution $\rho^\tau_t=\al_t(\rho^\tau)$  of  the density operator $\rho^\tau= \tau(\rho)$  has the restriction $[\rho^\tau_t]_{Y}$ to $S(\cF_{Y})$, which is close to a desired state, say~$\s$. 	To measure the success of  the  transfer operation, one can maximize the figure of merit  $
		F([\rho^\tau_t]_Y, \s),$
	where $F(\rho,\sigma)=\|\sqrt{\rho}\sqrt{\sigma}\|_{1}$ is the fidelity. 
	
	To show that the state  transfer is impossible in a given time interval, we compare $\rho^\tau_t$ and $\rho_t=\alpha_t(\rho)$ by using   $
		F([\rho^\tau_t]_Y, [\rho_t]_Y),$
	as a figure of merit (cf.~\cite{EpWh,FLS2}), and try to show that it is close to $1$ for $t\le t_*$ and for all state preparation (unitary) maps  $\tau$   localized in $X$. If this is true, then using $\tau$'s   localized in $X$ does not affect states in $Y$.
	Let $\tau(\rho)=U\rho U^*\equiv \rho^U$, where $U$ is a unitary.
	
	\begin{theorem}[Quantum control bound]\label{thm:qst}

		Let $\om$ be a pure state. Under the assumptions of \thmref{thmLCA}, for every $v>8\kappa$, there exists $C=C(p,\kappa_p,v)>0$ such that for all  $\xi\ge4$, $Y\subset \Lam$ with $\dist(X,Y)\ge2\xi$, and unitary operator	 $U\in\cB_X$ (see \eqref{BX}), we have, for all $\abs{t}< \xi/v$:
		\begin{equation*}
			\begin{aligned}
				F([\al'_{t}(\rho)]_Y,[ \al'_{t}(\rho^U)]_Y) \geq& 1- C\abs{t}\xi^{-p}\Tr(N_X^2\rho).
			\end{aligned}
		\end{equation*}

	\end{theorem}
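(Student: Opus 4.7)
My plan is to combine Uhlmann's theorem with \thmref{thmLCA}. Since $\om$ is pure and $\om(N_{X^c})=0$, write $\rho=|\psi\rangle\langle\psi|$ with $|\psi\rangle\in \cF_X\otimes|0\rangle_{X^c}$ and set $\psi_t:=e^{-itH}\psi$, $U_{-t}:=\al_{-t}(U)$. Both $\al'_t(\rho)=|\psi_t\rangle\langle\psi_t|$ and $\al'_t(\rho^U)=|U_{-t}\psi_t\rangle\langle U_{-t}\psi_t|$ are then pure on $\cF_\Lam$, yielding natural purifications of $[\al'_t(\rho)]_Y$ and $[\al'_t(\rho^U)]_Y$ on $\cF_Y$. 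Introduce the local approximant $\tilde U_{-t}:=\al_{-t}^{X_\xi}(U)\in \cB_{X_\xi}$, which is unitary and---thanks to $\dist(X,Y)\ge 2\xi$---supported in $X_\xi\subset Y^c$; conjugation by $\tilde U_{-t}$ therefore preserves $[\,\cdot\,]_Y$, so $\tilde U_{-t}|\psi_t\rangle$ is another purification of $[\al'_t(\rho)]_Y$. By Uhlmann's theorem and the unit-vector identity $2\Re\langle\phi_1,\phi_2\rangle=2-\|\phi_1-\phi_2\|^2$,
\begin{equation*}
F([\al'_t(\rho)]_Y,[\al'_t(\rho^U)]_Y)\ge \bigl|\langle\tilde U_{-t}\psi_t, U_{-t}\psi_t\rangle\bigr|\ge 1-\tfrac12\|D\psi_t\|^2,
\end{equation*}
where $D:=U_{-t}-\tilde U_{-t}=(\al_{-t}-\al_{-t}^{X_\xi})(U)$. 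The task reduces to showing $\|D\psi_t\|^2\le C|t|\xi^{-p}\om(N_X^2)$.

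The crux is the algebraic identity $D^*D=U_{-t}^*D+D^*U_{-t}$, verified by direct expansion using the $*$-homomorphism property $\al_{-t}^{(\cdot)}(U^*U)=\mathbb{1}$. Taking the $\om_t$-expectation and using $\om_t=\om\circ\al_t$, $\al_t(U_{-t})=U$, and $\al_t^{X_\xi}(\tilde U_{-t})=U$ (whence $\al_t(D)=-(\al_t-\al_t^{X_\xi})(\tilde U_{-t})$) converts the norm-squared error into a single $\om$-expectation of LCA form:
\begin{equation*}
\|D\psi_t\|^2=\om_t(D^*D)=-2\Re\om\bigl(U^*(\al_t-\al_t^{X_\xi})(\tilde U_{-t})\bigr).
\end{equation*}

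The main obstacle is that \thmref{thmLCA} controls expectations of $(\al_t-\al_t^{X_\xi})(A)$ alone---not of products $U^*(\cdots)$---and further requires $A\in \cB_X$ whereas $\tilde U_{-t}\in \cB_{X_\xi}$. I would resolve both issues by combining polarization with the particle-number preservation of $U$. The polarization identity $\om(U^*A)=\tfrac14\sum_{k=0}^3 i^{-k}\langle\phi_k,A\phi_k\rangle$ with $\phi_k:=U\psi+i^k\psi$ reduces $\om(U^*(\cdot))$ to four pure-state expectations, and since $U\in \cB_X$ preserves $\cF_X\otimes|0\rangle_{X^c}$ and commutes with $N_X$ (particle-number conservation), each $|\phi_k\rangle$ is again localized in $X$ with $\langle\phi_k,N_X^2\phi_k\rangle\le 4\om(N_X^2)$ by Cauchy--Schwarz on $\langle U\psi,N_X^2\psi\rangle=\langle U N_X\psi,N_X\psi\rangle$. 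Applying \thmref{thmLCA} to each normalized state $|\phi_k\rangle\langle\phi_k|/\|\phi_k\|^2$ with observable $\tilde U_{-t}$---reinterpreting the state's localization as $X_\xi\supset X$ (valid since $\om(N_{X_\xi^c})=0$ and $\om(N_{X_\xi}^2)=\om(N_X^2)$) and rescaling the enlargement parameter so that the approximant becomes $\al_t^{X_\xi}$, with the multiplicative constants absorbed into the threshold $v>8\kappa$---yields $|\om(U^*(\al_t-\al_t^{X_\xi})(\tilde U_{-t}))|\le C|t|\xi^{-p}\om(N_X^2)$. Combining with the Uhlmann bound above produces the claimed fidelity estimate $F\ge 1-C|t|\xi^{-p}\Tr(N_X^2\rho)$.
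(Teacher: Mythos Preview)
Your overall strategy—Uhlmann's theorem plus the light-cone approximation—is the right template, and your setup through the identity $\|D\psi_t\|^2=-2\Re\,\om\bigl(U^*(\al_t-\al_t^{X_\xi})(\tilde U_{-t})\bigr)$ is correct. The polarization reduction to pure states localized in $X$ with controlled $N_X^2$-moment is also sound.

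The gap lies in your final invocation of \thmref{thmLCA}. That theorem, as stated, requires the observable to lie in $\cB_X$—the \emph{same} set where the state is localized—and produces the approximant $\al_t^{X_\xi}$ as an \emph{enlargement} by $\xi\ge 1$. Your observable $\tilde U_{-t}$ lives in $\cB_{X_\xi}$, not $\cB_X$. When you ``reinterpret the state's localization as $X_\xi$'' and apply the theorem with support set $X_\xi$, the approximant you obtain is $\al_t^{(X_\xi)_\eta}$ for some $\eta\ge 1$, never $\al_t^{X_\xi}$ itself. There is no admissible choice of enlargement parameter that collapses $(X_\xi)_\eta$ back to $X_\xi$; the phrase ``rescaling the enlargement parameter so that the approximant becomes $\al_t^{X_\xi}$'' does not correspond to any valid application of the theorem, and the velocity threshold $v>8\kappa$ cannot compensate for this structural mismatch.

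The gap is closable, but not for free. One route is to observe that the \emph{proof} of \thmref{thmLCA} (via the integral formula \eqref{At-differ} and the MVB) actually only uses that $\al_s^{X_\xi}(A)\in\cB_{X_\xi}$, so it delivers the same estimate for any $A\in\cB_{X_\xi}$ provided the state is localized in $X$; this is a genuine (mild) strengthening of the stated theorem and is what the companion paper \cite{SiZh} exploits. Alternatively, you can run a two-scale argument: bound $\al_t(\tilde U_{-t})-\al_t^{X_{2\xi}}(\tilde U_{-t})$ by \thmref{thmLCA} with support $X_\xi$ and enlargement $\xi$, then separately control $\al_t^{X_{2\xi}}(\tilde U_{-t})-U=\al_t^{X_{2\xi}}\bigl((\al_{-t}^{X_\xi}-\al_{-t}^{X_{2\xi}})(U)\bigr)$ by a second application inside the subsystem $X_{2\xi}$, which requires an additional ingredient (e.g.\ tracking how the evolved state under $e^{-itH_{X_{2\xi}}}$ still obeys an MVB). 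Either way, the missing step is substantive and should be spelled out rather than absorbed into a constant.
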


The estimate above imposes a lower bound on the time for the best-possible quantum control protocols for  quantum many-body dynamics and bounds quantum state transfer as in \cite{EpWh,FLS2}.

\paragraph{Spectral gap and decay of correlation.}


\begin{theorem}[Gap implies decay of correlations]
	\label{thm:gap}
	Suppose $H$ in \eqref{H} has a spectral gap of size $\g>0$ at the ground state energy. 
	Suppose the assumptions of \thmref{thmLCA} hold with $n\ge1,\,X\subset\Lam$,
	{and $\om$ is the ground state.} Then, there exists $C=C(n,\kappa_n)>0$ such that for all  $\xi\ge1$,  $Y\subset\Lam$ with $\dist(X, Y)\ge2\xi$, and  operators $A\in \cB_X,\,B\in \cB_Y$, we have:
	\begin{equation}\label{281}
		\abs{\om({BA})}\le C\norm{A}\norm{B}(\g^{-1}\xi^{-2}+\xi^{1-p}\om(N_X^2)).
	\end{equation}
\end{theorem}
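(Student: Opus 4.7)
The plan is to follow the Hastings--Koma strategy \cite{HastKoma}, combining the spectral-gap hypothesis with the weak LRB from \thmref{thmLRB}. I would first reduce to the case $\om(A)=0$ by replacing $A$ with $A - \om(A)\mathbf{1}$ (which remains in $\cB_X$ with norm $\le 2\|A\|$), so that it suffices to bound $|\om(BA)|$. Since $\om$ is the ground state with gap $\g > 0$, the two-point function $t \mapsto \om(B\al_t(A))$ has Fourier transform supported in $[\g, \infty)$. I would then choose a tempered kernel $W_\g$ whose Fourier transform is a smoothed step function interpolating between $\widehat{W_\g}(E) = 1$ on $(-\infty, -\g]$ and $\widehat{W_\g}(E) = 0$ on $[\g, \infty)$, with explicit decomposition
\[ W_\g(t) = \tfrac{1}{2}\delta(t) - \tfrac{i}{2\pi t}(\mathrm{p.v.}) + g_\g(t), \]
where $g_\g$ is a Schwartz correction of scale $\g^{-1}$. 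Fourier duality combined with $[A,B]=0$ (which holds because $\dist(X,Y)\ge 2\xi$) then yields the key identity
\[ \om(BA) = \int W_\g(t)\,\om([B, \al_t(A)])\,dt. \]

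Next I would split this integral at the Lieb--Robinson time scale $T_0 = \xi/v$ with $v > 2\k$. On the interior $|t|\le T_0$, \thmref{thmLRB} bounds the commutator by $C|t|\xi^{-p}\|A\|\|B\|\om(N_X^2)$, and the $|t|$ factor precisely cancels the $|t|^{-1}$-type singularity of $W_\g$ at the origin; the resulting interior estimate is $\lesssim T_0 \xi^{-p}\om(N_X^2)\|A\|\|B\| = \xi^{1-p}\om(N_X^2)\|A\|\|B\|$, producing the second term of the claimed bound. On the exterior $|t|>T_0$, the Schwartz part $g_\g$ decays rapidly at scale $\g^{-1}$ and, combined with the trivial bound $|\om([B,\al_t(A)])| \le 2\|A\|\|B\|$, produces a gap-dependent tail estimate. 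The $\mathrm{p.v.}\,t^{-1}$ part is only conditionally convergent and must be controlled by oscillatory integration by parts, exploiting that $\om([B,\al_t(A)])$ is a finite superposition of oscillations $e^{\pm it\tilde E_n}$ with frequencies $\tilde E_n \ge \g$ coming from the spectral decomposition above the gap. Careful book-keeping of these two exterior contributions yields the gap term $\g^{-1}\xi^{-2}\|A\|\|B\|$, producing the first term of the claimed bound.

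The main obstacle will be the rigorous control of the conditionally convergent principal-value tail integral. Since $t^{-1}$ is not absolutely integrable at infinity, an absolute $L^1$ bound is unavailable, and one must instead exploit the discrete spectral structure of $\om([B, \al_t(A)])$. The gap $\g$ enters through oscillatory integration by parts, producing inverse powers of the minimum frequency, while the decay rate $\xi^{-2}$ arises from either a second IBP or a refined construction of the kernel $W_\g$ with extra smoothness. All remaining estimates are by now standard \cite{HastKoma}, and an entirely analogous argument applies to $|\om(AB)|$ or directly to the correlator difference $|\om(BA)-\om(B)\om(A)|$ without reducing to $\om(A)=0$.
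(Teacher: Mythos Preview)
Your overall strategy---the Hastings--Koma filtering argument combined with the weak LRB of \thmref{thmLRB}, split at the Lieb--Robinson time $T_0=\xi/v$---is the approach the paper points to (the full proof is delegated to \cite[Section~10]{SiZh} and described as ``by now relatively standard''). The reduction to $\om(A)=0$, the spectral identity for $\om(BA)$ as a time integral of the commutator, and your interior estimate yielding $\xi^{1-p}\om(N_X^2)\norm{A}\norm{B}$ from the factor $\abs{t}$ in \eqref{LRB} cancelling the $1/t$ singularity of the kernel are all correct and match the intended argument.

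The gap is in your exterior estimate. First, your kernel decomposition is not quite right: if $\widehat{W_\g}$ is a smooth step on $[-\g,\g]$, then $\widehat{g_\g}=\widehat{W_\g}-\theta(-\cdot)$ inherits a unit jump at $E=0$, so $g_\g$ is \emph{not} Schwartz---it decays only like $1/t$. The correct picture is $W_\g(t)=c\,\check\psi(\g t)/t$ for a Schwartz $\check\psi$, which looks like $\mathrm{p.v.}\,1/t$ near the origin but decays rapidly only on the scale $\g^{-1}$. Second, and more seriously, neither mechanism you propose yields the precise combination $\g^{-1}\xi^{-2}$. A single integration by parts of the tail against the antiderivative $G_+$ of $\om(B\al_t(A))$ (which satisfies $\abs{G_+}\le\g^{-1}\norm{A}\norm{B}$ by the gap) produces a boundary term $\abs{W_\g(T_0)G_+(T_0)}\lesssim\g^{-1}T_0^{-1}\sim\g^{-1}\xi^{-1}$ and a bulk term of the same order; a second IBP improves the bulk to $\g^{-2}\xi^{-2}$ but cannot touch the first boundary term. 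Your suggestion of ``a refined construction of the kernel $W_\g$ with extra smoothness'' does not help either, since the spectral constraint (step transition on $[-\g,\g]$) pins the large-$t$ decay scale to $\g^{-1}$. So as written, your tail argument delivers $\g^{-1}\xi^{-1}$, not the $\g^{-1}\xi^{-2}$ stated in \eqref{281}; obtaining the extra power of $\xi^{-1}$ requires a genuinely different organization of the gap contribution (as in \cite{SiZh}) that you have not supplied.
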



%

\section{Key steps in the proofs}\label{secShortPfs}
	
We sketch the key ideas involved in the proofs and refer to \cite{SM} and \cite{SiZh} for the full details. The method is an adaptation of the ASTLO (adiabatic spacetime localization observables) approach developed in \cite{FLS,FLS2}. 

\paragraph{\thmref{thmMVB}.}
For a function $f:\Lambda\to\Cb$, we define its second quantization
$\hat f= \dG(f)=\sum_{x\in\Lambda}f(x)a_x^*a_x. 
$
As in \cite{FLS, FLS2}, we control the time evolution  associated to \eqref{H} by recursive monotonicity estimates for the \textit{ASTLOs}:
\begin{equation}\label{chi-ts}
\hat\chi_{t s}=\dG(\chi_{t s}),\quad 	\chi_{ts}=\chi\del{\tfrac{d_X-v't}{s}}, 
\end{equation}        
where $ s> t \ge0$, $d_X$ is the distance function to $X$, $v'=\frac{v+\kappa}{2}$ and $\chi$ and $\chi:\mathbb R\to[0,1]$ belongs to the set $\cX$ of smooth monotonic cutoff functions which interpolate between $0$ and $1$ and satisfy $\sqrt\chi’\in C^\infty$ and $\mathrm{supp}\, \chi’\subset (0,v-v’)$.

For a differentiable path of observables, define the Heisenberg derivative 
$D A(t)=\frac{\partial}{\partial t}A(t)+i[H, A(t)],$ with 
\begin{align}	\label{dt-Heis}
&\di_t\al_t(A(t)) =\al_t(DA(t)). 
\end{align}

\begin{proposition}[RME]\label{prop:RME} 
	Suppose the assumptions of \thmref{thmMVB} hold.
	Then, for every $\chi\in\cX$, there exist $C=C(p,\kappa_p,\chi)>0$ and functions $\xi^k=\xi^k(\chi)\in\cX,\,k=2,\ldots, p$, 
	 such that for all $s,t>0$, we have the operator inequality
	\begin{align}\label{RMB}
		D\hat\chi_{ts} 
		\leq &-\frac{(v' -\kappa)\astlo'_{ts}}{s} +C\sum_{k=2}^{p}\frac{\widehat{(\xi^k)'}_{ts}}{s^k}+ \frac{CN}{s^{p+1}}.
	\end{align}
\end{proposition}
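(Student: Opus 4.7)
The plan is to compute $D\hat\chi_{ts}$ explicitly, sort the contributions by powers of $1/s$, and absorb the time derivative into the dissipative leading term. The remaining corrections up to order $1/s^p$ are then identified as second quantizations of auxiliary cutoffs $\xi^k\in\cX$, with a final Taylor remainder controlled by the $p$-th moment of $h$. First, by \eqref{dt-Heis} and the chain rule, $\partial_t\hat\chi_{ts}=-v'\widehat{\chi'_{ts}}/s$. Because $\hat\chi_{ts}=\dG(\chi_{ts})$ is diagonal in the occupation-number basis and the two-body term $\tfrac12\sum_{xy}w_{xy}a^*_xa^*_ya_ya_x$ is a polynomial in $\{n_z\}$, its commutator with $\hat\chi_{ts}$ vanishes identically, so
\begin{equation*}
i[H_\Lam,\hat\chi_{ts}]=i[T,\hat\chi_{ts}]=i\sum_{x,y}h_{xy}\bigl(\chi_{ts}(y)-\chi_{ts}(x)\bigr)a^*_xa_y,
\end{equation*}
where $T=\sum_{xy}h_{xy}a^*_xa_y$.

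Next, writing $\phi(u)=\chi((u-v't)/s)$ so that $\chi_{ts}(x)=\phi(d_X(x))$, Taylor's theorem at $u=d_X(x)$ yields
\begin{equation*}
\chi_{ts}(y)-\chi_{ts}(x)=\sum_{k=1}^{p}\frac{\phi^{(k)}(d_X(x))}{k!}\bigl(d_X(y)-d_X(x)\bigr)^k+R_p(x,y),
\end{equation*}
with $|R_p(x,y)|\le Cs^{-(p+1)}|x-y|^{p+1}$, using $\phi^{(k)}=s^{-k}\chi^{(k)}((\cdot-v't)/s)$ and the Lipschitz bound $|d_X(y)-d_X(x)|\le|x-y|$.

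The $k=1$ term is the crucial one. Using the Hermiticity $h_{xy}=\overline{h_{yx}}$ together with the factorization $\chi'=(\sqrt{\chi'})^2$, valid because $\sqrt{\chi'}\in C^\infty$, the standard ASTLO symmetrization yields the operator inequality
\begin{equation*}
\frac{i}{s}\sum_{xy}h_{xy}\chi'_{ts}(x)(d_X(y)-d_X(x))a^*_xa_y\le \frac{\kappa}{s}\widehat{\chi'_{ts}}+(\text{lower order in }1/s),
\end{equation*}
where $\kappa$ is the first moment \eqref{kappa} and the lower-order error comes from shifting the argument of $\sqrt{\chi'}$ between the sites $x$ and $y$. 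Combined with the chain-rule contribution, this produces exactly $-(v'-\kappa)\widehat{\chi'_{ts}}/s$. For $2\le k\le p$, the $k$-th Taylor term is analogously bounded in operator norm by $C\widehat{(\xi^k)'_{ts}}/s^k$ for some $\xi^k\in\cX$ built from $\chi$ and its derivatives in the manner of \cite{FLS,FLS2}; the requisite moments $\sup_x\sum_y|h_{xy}||x-y|^k$ are finite since $k\le p$ and $\|h\|_p\le\kappa_p$. Finally, the Taylor remainder obeys $|\sum_{xy}h_{xy}R_p(x,y)a^*_xa_y|\le C\kappa_pN/s^{p+1}$ by the $(p+1)$-th moment bound together with $|\sum_{xy}M_{xy}a^*_xa_y|\le (\sup_x\sum_y|M_{xy}|)N$.

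The main obstacle is extracting the sharp coefficient $\kappa$ at $k=1$: any looser Cauchy--Schwarz estimate would yield a larger constant and sever the announced connection to the light-cone slope $v'$. This is where $\sqrt{\chi'}\in C^\infty$ and the Hermiticity of $h$ enter in an essential way. The construction of the cutoffs $\xi^k\in\cX$ for $k\ge 2$ is more routine given the framework of \cite{FLS,FLS2} but still requires checking that the derived cutoffs remain in $\cX$, which is why the auxiliary family is indexed by $\chi$.
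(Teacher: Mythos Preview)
Your sketch is correct and follows essentially the same route as the paper. The paper itself does not give a detailed proof of \propref{prop:RME} in the main text, deferring to \cite{SiZh} and \cite{FLS,FLS2}; however, the argument you outline---vanishing of the two-body commutator, Taylor expansion of $\chi_{ts}(y)-\chi_{ts}(x)$ in $d_X(y)-d_X(x)$, symmetrization of the $k=1$ term via the factorization $\chi'=(\sqrt{\chi'})^2$ followed by Cauchy--Schwarz to extract the sharp coefficient $\kappa$, and bounding the higher-order and remainder terms using the moment condition on $h$---is precisely the ASTLO machinery the paper invokes, and mirrors the detailed computation carried out in the Supplementary Material (Theorem~\ref{theorem bound on integral of exp of N}) for the band-limited function class.
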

Since the second  term on the r.h.s. is of the same form as the leading, negative term, estimate \eqref{RMB} can be bootstrapped to obtain an integral inequality with $O(s^{-p})$ remainder which gives part  \thmref{thmMVB}  (i). For the details see \cite[Sect.~3]{SiZh}. 
The proof of  \thmref{thmMVB} part (ii) uses novel techniques (which we call band-limited ASTLOs) and is fully contained in \cite{SM}.
The proof of \thmref{thmPT} is similar to the proof of \thmref{thmMVB} part (i) except that we introduce a ``second-order'' ASTLO by using spectral calculus to approximate the spectral projector as in \cite{FLS}. See \cite[Section 4]{SiZh} for details.


\paragraph{\thmref{thmLCA}.}

Let $A_t=\al_t(A)$ and $A_t^\xi\equiv \al_t^{X_\xi}(A)$.
By the fundamental theorem of calculus, we have $	A_t-A_t^\xi=\int_0^t \di_r\al_r(\al_{t-r}^{X_\xi}(A))\,dr.$	 Using identity \eqref{dt-Heis} for $\al_r$ and $\al_{t-r}^{X_\xi}$, we find 
\begin{align}\label{At-differ}A_t-A_t^\xi=i\int_0^t\al_r([R',A_{t-r}^\xi])\,dr, \end{align}
where $R'=H-H_{X_\xi}$. Since $A_s^\xi$ is localized in $X_\xi$, only terms in $R'$ which connect $X_\xi$ and $X_\xi^\cp$ contribute to $[R',A_{t-r}^\xi]$ (see \figref{fig:splitting}). Then we apply the  MVB \eqref{MVE} to estimate $[R',A_{t-r}^\xi]$ similar to \cite{FLS2}. We defer further details to \cite[Sects.~5-6, Appd.~D]{SiZh}.

As mentioned before, Theorems \ref{thm5}--\ref{thm:gap} are by now relatively standard consequences of Lieb-Robinson bounds with some modifications required due to the bosonic nature of the Hamiltonian. In \cite{SiZh}, these results are stated as Theorems 2.4-2.7 and complete proofs are given in \cite[Sections 7-10]{SiZh} respectively.

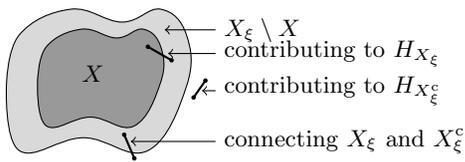
\begin{figure}[t]
	\centering
	\begin{tikzpicture}[scale=.45]
		
		
				
				
%

						\draw[fill=gray!30] plot[shift={(-0.2,-0.25)}, scale=.76,smooth, tension=.7] coordinates {(-3,0.5) (-2.5,2.5) (-.5,3.5) (1.5,3) (3,3.5) (4,2.5) (4,0.5) (2.5,-2) (0,-1.5) (-2.5,-2) (-3,0.5)};
						
				\draw[fill=gray!80]  plot[scale=.52,smooth, tension=.7] coordinates {(-3,0.5) (-2.5,2.5) (-.5,3.5) (1.5,3) (3,3.5) (4,2.5) (4,0.5) (2.5,-2) (0,-1.5) (-2.5,-2) (-3,0.5)};

				\node [right] at (-.5,.5) {$X$};
				
				\draw [->] (3.7,1.8)--(2.2,1.8);
		\node [right] at (3.7,1.8) {$X_\xi\setminus X$};

		\draw [->] (3.7,1.1)--(2.3,1.1);
		\node [right] at (3.7,1.1) {$\text{contributing to }H_{X_\xi}$};
		
		\draw [thick] (1.7,1.3) -- (2.4,.9);
		\draw [fill] (1.7,1.3) circle [radius=0.05];
		\draw [fill] (2.4,.9) circle [radius=0.05];
		
		\draw [->] (3.7,0)--(3.3,0);
		\node [right] at (3.7,0) {$\text{contributing to }H_{X_\xi^\cp}$};
		
		\draw [thick] (3.05,.-.2) -- (3.4,.3);
		\draw [fill] (3.05,.-.2) circle [radius=0.05];
		\draw [fill] (3.4,.3) circle [radius=0.05];
		
		\draw [->] (3.7,-1.5)--(1.2,-1.5);
		\node [right] at (3.7,-1.5) {$\text{connecting ${X_\xi}$ and ${X_\xi^\cp}$}$};
		
		\draw [thick] (1,-1.3) -- (1.3,-2);
		\draw [fill] (1,-1.3) circle [radius=0.05];
		\draw [fill] ((1.3,-2) circle [radius=0.05];

	\end{tikzpicture}
	\caption{Set $X_\xi$ and splitting of $H$.}
	\label{fig:splitting}
\end{figure} 




\section{Conclusions}

We have presented the existence of a linear light cone for many-boson system with long-range hopping and long-range interactions. The results provide practical constraints on sending quantum information in such systems. Our results identify good classes of initial states in which even bosons with long-range hopping and long-range interactions cannot propagate information super-ballistically. Our results thus complement the existence of fast-transfer protocols for nearest-neighbor Bose-Hubbard Hamiltonians \cite{KVS} and quantum spin systems with long-range interactions \cite{KS1,KS2,EldredgeEtAl,TranEtal4,TranEtal5}. 

  

The results can  be extended to time-dependent and few-body interactions, quantum spin systems and lattice fermions. Extensions to open systems of lattice bosons with applications to estimating the decoherence and thermalization times are also in reach using ideas from \cite{BFLS,BFLOSZ}.

\begin{acknowledgments}
The authors are grateful to J\'er\'emy Faupin and Avy Soffer for enjoyable and fruitful collaborations. The research of M.L.\ and C.R.~ is supported by the Deutsche Forschungsgemeinschaft (DFG, German Research Foundation) through grant TRR 352--470903074. The research of I.M.S.\ is supported in part by NSERC Grant NA7901.
J.Z.\ is supported by DNRF Grant CPH-GEOTOP-DNRF151, DAHES Fellowship Grant 2076-00006B, DFF Grant 7027-00110B, and the Carlsberg Foundation Grant CF21-0680. His research was also supported in part by NSERC Grant NA7901.  
Parts of this work were done while J.Z.\ was visiting MIT. 
We thank the referees for their helpful comments.
\end{acknowledgments}

\bibliography{MVBshortBibfile}

 \widetext
 \pagebreak
 \begin{center}
 	\textbf{\large Supplemental Material for\\ \textit{Information propagation in long-range quantum many-body systems}}
	
	\vspace{.35cm}
	{M.~Lemm, C.~Rubiliani, and J.~Zhang}
		\vspace{.25cm}
		
 \end{center}

\stepcounter{myequation}
 \setcounter{figure}{0}
 \setcounter{table}{0}
 \setcounter{section}{0}
  \setcounter{theorem}{0}
 \makeatletter
 \renewcommand{\theequation}{S\arabic{equation}}
  \renewcommand{\thetheorem}{S\arabic{theorem}}
 \renewcommand{\thefigure}{S\arabic{figure}}
 

%
%
%
%

In this Supplementary Material we prove Part (ii) of Theorem \ref{thmMVB}, which is the first thermodynamically stable particle number propagation bound for arbitrary finite-range hopping of Bose-Hubbard type.

The main idea is to take the effective choice of $p$ to depend on $N$ in order to absorb the term $\omega_t(N)$ in the $\eta^{-p}$ in Part (i) of Theorem \ref{thmMVB}. This simple idea turns out to be quite subtle because to avoid blow-up of the other constants in $p$, we have to work with band-limited analogs of cutoff functions. (We say a function is band-limited if its Fourier transform is compactly supported.) We build the ASTLOs using these band-limited functions which ensures uniform control on all derivatives. The price is that, by the uncertainty principle, these functions are no longer compactly supported. Resolving the ensuing technical complications by carefully tracking the localization properties is the content of this supplement. 

Recall that we say the interaction is of finite range $R>0$ if the hopping matrix satisfies $h_{xy}=0$   whenever $|x-y|>R.$. Our goal is to prove the following result. 

 \begin{theorem}[Part (ii) of Theorem \ref{thmMVB}]\label{CT main}
Suppose that the interaction is of finite range $R$. For every $v > \kappa$ 	there exists $C=C(R,v,d)>0$ such that the following holds. For any $X\subset \Lam$, there exists $\eta_0=\eta_0(X)\ge1$ such that for any $\eta\ge\eta_0$ and initial state $\omega$ of controlled density (i.e., $0<a\leq \om(n_x)\leq b<\infty$ for all $x$), we have
\begin{align}
	 &\sup_{	\abs{t}< \eta/v}\om_t(N_{X}) \le C\frac{b^2}{a^2}\om(N_{X_\eta})
  \end{align}	
 \end{theorem}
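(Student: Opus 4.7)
The strategy is to refine the ASTLO argument of Part~(i) so that the remainder no longer scales with $\omega(N)$. The obstruction in Part~(i) is that the RME of Proposition~\ref{prop:RME} produces constants bounded by $\|\chi^{(k)}\|_\infty$ for $k\leq p$; for standard compactly supported smooth cutoffs these blow up faster than any polynomial in $p$, preventing us from choosing $p$ large enough for $\eta^{-p}$ to absorb $\omega(N)$. I would instead work with \emph{band-limited} monotone cutoffs $\chi_*:\R\to[0,1]$ interpolating from $0$ to $1$ on $[0,v-v']$ with $\widehat{\chi_*'}$ of compact support of width $B$, constructed as a Paley--Wiener square so that $\sqrt{\chi_*'}$ is smooth (so the $\cX$-calculus still applies) while $\|\chi_*^{(k)}\|_\infty\lesssim B^k$ uniformly in $k$.

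With this replacement, the finite-range assumption $h_{xy}=0$ for $|x-y|>R$ lets one Taylor-expand the discrete difference $\chi_{ts}(y)-\chi_{ts}(x)$ across each edge $|x-y|\leq R$ to arbitrary order. This produces the usual negative drift $-(v'-\kappa)\widehat{\chi_*'}_{ts}/s$ plus an infinite series $\sum_{k\geq 2} C_k\,\widehat{(\xi_*^k)'}_{ts}/s^k$ with band-limited auxiliary bumps $\xi_*^k$ and coefficients $C_k\lesssim (RB)^k/k!$. For $s\geq s_0(R,B,v-\kappa)$ the series converges absolutely; bootstrapping as in \cite{SiZh} then gives $\omega_t(\hat\chi_{ts})\leq \omega(\hat\chi_{0s}) + \mathrm{Rem}(t,s)$ in which $\mathrm{Rem}$ carries \emph{no} inverse-polynomial-in-$s$ term of the $\omega(N)/s^{p+1}$ form, but only tails of $\chi_*$ integrated against the one-particle density.

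The final step is to extract the target bound from the controlled density $a\leq\omega(n_x)\leq b$. On the left, $\chi_{ts}(x)\geq c(v,v')>0$ for all $x\in X$, $|t|\leq \eta/v$ and $s\sim\eta$, so $\omega_t(N_X)\leq c^{-1}\omega_t(\hat\chi_{ts})$. On the right, the near-field sum satisfies $\sum_{x\in X_\eta}\chi_{0s}(x)\omega(n_x)\leq \omega(N_{X_\eta})$; the far-field sum is, by the Schwartz-type decay of $\chi_*$ combined with $\omega(n_x)\leq b$, bounded by a constant times $b$ once $\eta\geq\eta_0(X)$. Using $\omega(N_{X_\eta})\geq a|X_\eta|\geq a$ converts this $b$ into at most $(b/a)\,\omega(N_{X_\eta})$; pairing with the factor $c^{-1}$ and an analogous conversion for $\mathrm{Rem}$ produces the total factor $b^2/a^2$ stated in the theorem.

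The principal technical hurdle, I expect, is the band-limited ASTLO calculus itself: one needs $\chi_*'$ to simultaneously factor as a square of a smooth band-limited bump (ensuring nonnegativity of the ASTLO), be compatible with the Cauchy--Schwarz-type estimates underlying the RME, and produce $k$-th order remainder bumps $\xi_*^k$ that inherit the band-limitation and positivity, so that the series in the RME converges for $s\sim\eta$ with constants independent of $|\Lambda|$ and $N$. Balancing band-limitation, positivity and smooth square roots simultaneously is the delicate new ingredient compared with Part~(i).
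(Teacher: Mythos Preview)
Your proposal is essentially the paper's own strategy: replace the compactly supported cutoffs by band-limited ones so that derivative bounds grow only exponentially in the order, Taylor-expand across finite-range edges to arbitrary order, and use controlled density to convert the Schwartz tails into $\omega(N_{X_\eta})$. You also correctly identify the main technical hurdle as closure of the function class under the RME operations.

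A few points where the paper's execution differs from your sketch. First, squares are not enough: the paper takes $f'=\check\varphi^4$ with $\varphi\in C_c^\infty(B_1)$, so that $\sqrt{f'}=\check\varphi^2\in\mathcal G$ is itself a band-limited square; the same structure persists under differentiation because $f_k'=(\check\varphi^{(k)})^4$. This quartic structure is what makes Lemma~S2 work, bounding $|f^{(k)}|$ and $|u\,u^{(k)}|$ by sums of products $u_a(x)u_a(y)$ plus \emph{non-symmetric} remainders $u_a(x)(u_a(x)-u_a(y))$. Second, the RME you describe does not emerge in one shot: the non-symmetric pieces must themselves be Taylor-expanded, and the paper iterates this into a ``symmetrized Taylor expansion'' (Lemma~S3) with nested multi-index sums over $\mathcal K_{l,m}$ and $\mathcal J_{\underline k}$; the resulting constants are tracked explicitly and shown to be summable provided $s>C_{f,c,d}R$. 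The remainder at each finite order $n$ still carries an $\omega(N)$ term of size $(CR/s)^n$, and only after sending $n\to\infty$ under this largeness condition on $s$ does it vanish (Corollary~S8). Third, the paper does not run this directly on $X$: it first proves the bound for balls (yielding one factor of $b/a$), then for general bounded $X$ embeds $X\subset B_{\eta_0}$, applies the ball case, and uses controlled density again to compare $\omega(N_{B_{\eta_0+\eta}})$ with $\omega(N_{X_\eta})$---this second comparison produces the additional factor of $b/a$ and explains the $b^2/a^2$ in the statement (and why balls give only $b/a$).
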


We introduce the notation 
 \[
 \omega_t(A)=\langle A\rangle_t.
 \]
  For simplicity, we identify $\delta$ such that
 \[
\delta^{-1}\leq  \om(n_x)\leq \delta
 \]
and we work with the single parameter $\delta$ instead of $a$ and $b$ in the following. We remark that the constant $\frac{b^2}{a^2}$ can be replaced by $\frac{b}{a}$ if $X$ is assumed to be a ball.\\

\textbf{Proof outline.} 
We adapt the proof method of \cite[Theorem 1.7]{FLS2} to a new class of cutoff functions with compactly supported Fourier transforms that we introduce in Section \ref{ch prep lemm}. By the uncertainty principle, these functions have isolated zeros and are therefore not true cutoff functions and handling this point is a key technical challenge. As in \cite{FLS2}, we then define the Adiabatic Space Time Localization Observables (ASTLO) through second quantization with respect to suitably spacetime localized members of this function class. The key challenge is to derive a recursive monotonicity estimate (RME) which controls the time evolution of these observables relative to their initial expectation value. As in \cite{FLS2}, an important step in the proof is to separate the $x$ and $y$ dependence using the Cauchy-Schwartz inequality, this is possible thanks to the symmetry of the expansion. However, here this requires a Taylor expansion which is challenging in two ways compared to previous similar works:
\begin{itemize}
\item Since we need to absorb an $N$-dependent growing term, we need to perform that Taylor expansion to an extremely large order that diverges as the system size goes to infinity. For this, we use tight control on the the supremum norm of derivatives of our functions with bounded Fourier transforms.
\item The more challenging part of the RME is that we need a function class that is non-negative with non-negative derivatives and that, up to upper bounds with universal constants is preserved under differentiation. Ensuring these crucial aspects requires us to use a rather special choice of cutoff functions, essentially integrals over $\check\varphi^4$ with $\varphi\in C_c^\infty$. 
\end{itemize}
Our choice of cutoff function enables us to derive in \lemref{lemma symmetrized taylor} a symmetrized Taylor expansion which expresses these functions in terms of the admissible ones, which plays a fundamental role in managing the time evolution of the ASTLOs.  We believe that this class may be useful for constructing ASTLOs in other problems where very high derivatives need to be controlled. We are able to iterate the RME to enhance the dependency on the on the order of the Taylor expansion, $n$, leading to \corref{cor: iter bound}. 

The price we pay for the strong control on the derivatives of this class of cutoff functions compared to \cite{FLS2} is that we lose the compact support due to the uncertainty principle. Hence, we have to adapt the whole proof to this new setting, e.g., when we connect the averages of the ASTLO to those of the number operator, which is done in \thmref{N and ASLTO}. The main ingredients for this are the particular form of the functions we consider, especially the fast decay and the bound on the supremum norm, and the assumption of controlled density at the initial times. Applying this result to the one in \corref{cor: iter bound} we are able to obtain a bound that only depends on averages of the ASTLOs and the number operator at the initial time. 

At this point we can control the decay of the bound for $n$ that goes to infinity and in  \corref{bound on int with only initial time} we find some conditions on the adiabatic parameter that ensure that in the limit all the terms that depend on the number operator vanish. The whole construction  and the assumption of short range interactions are crucial to achieve this kind of decay.  Applying this result together with \thmref{N and ASLTO} to \thmref{theorem bound on integral of exp of N} gives us the desired bound for the special case of spheres. To control the number of particles in any bounded set $X$ we embed it in a ball and apply the previous result. \\ 

\textbf{Organization.} 
In Section \ref{ch prep lemm}, we introduce the new tailor-made class of band-limited cutoff functions, followed by two lemmas that give a bound on their derivatives and one about the symmetrized Taylor expansion. 
In Section \ref{ch bound astlo} we obtain a bound on the evolved ASTLO followed by a corollary obtained by iterating it. In Section \ref{main res} we establish a connection between the averages of the ASTLOs and those of the number operator restricted to a finite region. We then apply this result and obtain the bound that only contain averages at initial time. In this same section we give the conditions on the adiabatic parameter to ensure the decay. We conclude the proof in Section \ref{true main res} where we derive the bound for the special case of balls and we generalize it to bounded sets.

%

\subsection{Preliminaries}\label{ch prep lemm}


\subsubsection{Function Class and ASTLO}
Consider now the following function classes
\begin{align*}
     &\cF=\left\{\frac{\int_{-\infty}^x{\check g}}{\int_{-\infty}^\infty \check g}\,:\, \check g=\check \varphi^4, \,\varphi\in C_c^\infty\left (B_1(0)\right)\right\},\\
     &\cG=\left\{ u=\check \psi^2\text{ for some } \psi\in C_c^{\infty}\left (B_1(0)\right)\right\}.
\end{align*}
   \\
   Given a function $f\in\cF$, then there exists $\varphi\in C^{\infty}_c(B_1(0))$ such that $f'=\check\varphi^4>0$. Then we can define the following functions

       \begin{align}
       &u:=\sqrt{f'}=\check\varphi^2,\label{u}\\
       &u_k:=\left(\check\varphi^{(k)}\right)^2 ,\label{uk}\\
       &f_k:=\frac{\int_{-\infty}^xu_k^2}{\int_{-\infty}^\infty u_k^2}\quad \text{for all }k\in \mathbb{N}.\label{fk}
   \end{align}
   We notice that a function $f_k$ is still in the function class $\cF$. In fact, since $\varphi\in C_c^{\infty}\left (B_1(0)\right)$ then $\widehat{\check{\varphi}^{(k)}}(\xi)\propto \xi^k \varphi(\xi)\in C_c^{\infty}\left (B_1(0)\right)$, which implies that $u_k\in \cG$.\\
For such a function, $f\in\cF$, we can also define the following operator, which in \cite{FLS2,FLS} was referred to as ASTLO (adiabatic space time localization observable),
\begin{equation*}
N_{f,ts}:= d\Gamma\big(f(|x|_{ts})\big)\:=\:\sum_{x\in\Lambda}f(|x|_{ts})\:a_x^{*}a_x,    
\end{equation*}
where
\begin{equation*}
    |x|_{ts}:=\frac{\rho-vt-|x|}{s}.
\end{equation*}

\subsubsection{Properties of the Function Class}

In this section we show two important properties of elements of  the two function classes defined above. A key ingredient in the the proof is their particular structure, for instance we can bound their supremum norm thanks to the fact that their Fourier transform is compactly supported.

\begin{lemma}[Upper bounds on the supremum norm of the derivatives]\label{Upper bounds on the derivatives}
 For any $f\in\cF$ and for any $ 1\leq k,j\in\mathbbm{N}$, there exists a constant $C_f$, such that the following hold:
\begin{align}
    \label{bound der u} 
    &\quad||u_j^{(k)}||_{\infty}\leq \,2^k C_f,\\
    \label{bound der f}
    &\quad||f_j^{(k)}||_{\infty}\leq  \,4^{k-1}C_f. 
\end{align}

\end{lemma}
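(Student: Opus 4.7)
The plan is to exploit two structural features of the function classes $\cF$ and $\cG$ set up above. First, all iterated derivatives of $\check\varphi$ satisfy a pointwise bound that does \emph{not} deteriorate in the order: since the Fourier transform of $(\check\varphi)^{(k)}$ is $(i\xi)^k \varphi(\xi)$, which is supported in $B_1(0)$ and satisfies $|\xi|^k \leq 1$ there, Fourier inversion gives
\[
\|(\check\varphi)^{(k)}\|_\infty \;\leq\; \frac{1}{2\pi}\,\|(i\xi)^k \varphi\|_1 \;\leq\; \frac{1}{2\pi}\,\|\varphi\|_1 \;=:\; C_{\varphi,1},
\]
independently of $k$. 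This is the key gain from working with band-limited objects: even though they cannot be compactly supported in space, their high derivatives do not pick up factorials the way that general smooth compactly supported cutoffs would. Second, $u_j$ and $f_j'$ are simple polynomial expressions in these band-limited building blocks, so all their derivatives can be expanded via the Leibniz rule without incurring any factorial blow-up.

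For the first bound I would differentiate $u_j = ((\check\varphi)^{(j)})^2$ via Leibniz:
\[
u_j^{(k)} \;=\; \sum_{l=0}^k \binom{k}{l}\,(\check\varphi)^{(j+l)}\,(\check\varphi)^{(j+k-l)},
\]
take suprema, and apply the uniform bound above to every factor; the binomial sum contributes $2^k$, yielding $\|u_j^{(k)}\|_\infty \leq 2^k C_{\varphi,1}^2$, which is the first inequality with $C_f := C_{\varphi,1}^2$. For the second bound I would differentiate the definition of $f_j$ once to get $f_j' = u_j^2/\|u_j\|_2^2$ and hence $f_j^{(k)} = \|u_j\|_2^{-2}(u_j^2)^{(k-1)}$ for $k\geq 1$. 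A second Leibniz expansion combined with the first bound gives
\[
\|(u_j^2)^{(k-1)}\|_\infty \;\leq\; \sum_{l=0}^{k-1}\binom{k-1}{l}\,\|u_j^{(l)}\|_\infty\,\|u_j^{(k-1-l)}\|_\infty \;\leq\; C_f^2\cdot 2^{k-1}\cdot 2^{k-1} \;=\; 4^{k-1} C_f^2,
\]
and dividing by $\|u_j\|_2^2$ and redefining the constant yields the second inequality.

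The main obstacle is to make the constant in the second bound genuinely independent of $j$: the normalization $\|u_j\|_2^2 = \|(\check\varphi)^{(j)}\|_4^4$ can degenerate as $j$ grows, since the Fourier mass of $(i\xi)^j\varphi(\xi)$ is pushed toward $|\xi|=1$ and the $L^2$-norm of $|\xi|^j \varphi$ can decay. In the present paper only finitely many $j$ appear at any stage (those entering the Taylor expansion to a fixed order in the subsequent lemma), so $C_f$ may legitimately be taken to absorb this finite collection of normalization constants. A cleaner alternative, should one want a constant uniform in all $j$ simultaneously, is to rescale $u_j$ to $u_j/\|u_j\|_2$ at the outset; the Leibniz argument above then goes through verbatim with $C_f$ depending only on $\varphi$.
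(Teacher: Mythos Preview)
Your argument is correct and reaches the same conclusions, but via a genuinely different route from the paper. The paper works entirely on the Fourier side: it writes $\widehat{u_j^{(k)}}(\xi)=C\,\xi^k\,(\xi^j\varphi)\ast(\xi^j\varphi)$, observes that this convolution is supported in $B_2(0)$ and dominated pointwise by $|\varphi|\ast|\varphi|$ (since $|\xi^j|\le1$ on $\operatorname{supp}\varphi$), and then bounds $|\xi|^k\le 2^k$ on the support; similarly for $f_j^{(k)}$ with a four-fold convolution supported in $B_4(0)$, yielding $4^{k-1}$. You instead extract the band-limited gain at the level of the single factor $(\check\varphi)^{(m)}$ and then assemble the products by Leibniz in physical space. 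Both arguments hinge on the same structural fact---that $\varphi\in C_c^\infty(B_1(0))$ forces $\|(\check\varphi)^{(m)}\|_\infty$ to be bounded uniformly in $m$---and both produce the identical growth rates $2^k$ and $4^{k-1}$. Your route is arguably more elementary (no convolution support tracking), while the paper's is slightly more direct for $f_j^{(k)}$ since it avoids nesting two Leibniz expansions.

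Your remark about the normalization $\|u_j\|_2^2$ is well taken: the paper's proof silently drops this factor when passing from $\widehat{f_j'}$ to $\widehat{(\check\varphi^{(j)})^4}$, so the same $j$-dependence lurks there too. As you note, in the applications only finitely many $j$ enter at each stage, so absorbing these normalizations into $C_f$ is harmless; your proposed fix of pre-normalizing $u_j$ would also work and is cleaner if one wants a statement uniform in $j$.
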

\begin{proof}
1. We start showing inequality \eqref{bound der u}. Because of how $u_j$ is defined we obtain the following
    \begin{equation}\label{ineq 1}
        ||u_j^{(k)}||_{\infty}\leq C ||\widehat{u_j^{(k)}}||_1=C||\xi^k\widehat{\left(\check\varphi^{(j)}\right)}\ast\widehat{\left(\check\varphi^{(j)}\right)}||_1=C||\xi^k\left(\xi^j\varphi(\xi)\right)\ast \left(\xi^j\varphi(\xi)\right)||_1.
    \end{equation}
  Since $\varphi$ is supported on $B_1(0)$ 
  \begin{equation}\label{ineq 2}
      |\left(\xi^j\varphi\right)\ast \left(\xi^j\varphi\right)|\,=\big|\int_{\mathbbm{R}}\xi^j\varphi(\xi)(\xi-\nu)^j\varphi(\xi-\nu)d\xi\big|\leq \int_{\mathbbm{R}}\big|\varphi(\xi)\big|\big|\varphi(\xi-\nu)\big|d\xi = |\varphi|\ast |\varphi|,
  \end{equation}
  and since $|\varphi|\ast |\varphi|$ is supported on $B_2(0)$, applying \eqref{ineq 2} to \eqref{ineq 1}, it implies that 
  \begin{equation*}
    ||u_j^{(k)}||_{\infty}\leq  ||\xi^k\left(\xi^j\varphi(\xi)\right)\ast \left(\xi^j\varphi(\xi)\right)||_1\leq 2^k|||\varphi|\ast |\varphi|||_1.
  \end{equation*}
 2. To prove inequality \eqref{bound der f} we observe that
    \begin{align*}
       ||f_j^{(k)}||_{\infty}&\leq C||\widehat{f_j^{(k)}}||_1=C||\xi^{k-1} \widehat{f_j'}(\xi)||_1=C||\xi^{k-1} \widehat{(\check\varphi^{(j)})^4}(\xi)||_1\\
       &=C||\xi^{k-1} (\xi^j\varphi(\xi))\ast(\xi^j\varphi(\xi))\ast(\xi^j\varphi(\xi))\ast(\xi^j\varphi(\xi))||_1
    \end{align*}
    As before, since $\varphi$ is supported on $B_1(0)$ we obtain
    \begin{align*}
        |(\xi^j\varphi(\xi))\ast(\xi^j\varphi(\xi))\ast(\xi^j\varphi(\xi))\ast(\xi^j\varphi(\xi))|\leq |\varphi|\ast |\varphi|\ast |\varphi|\ast |\varphi|
    \end{align*}
   and that $\xi^{k-1}\varphi\ast\varphi\ast\varphi\ast\varphi(\xi)$ is supported on $B_4(0)$. Together they imply that 

    \begin{equation*}
     ||f_j^{(k)}||_{\infty}\leq ||\xi^{k-1} |\varphi|\ast |\varphi|\ast |\varphi|\ast |\varphi|(\xi)||_1\leq 4^{k-1} ||\varphi\ast\varphi\ast\varphi\ast\varphi(\xi)||_1,
    \end{equation*}
    which leads to the desired bound.\\
   
\end{proof}

\begin{lemma}[Bounding derivatives with products of admissible functions]\label{lemma bounding derivatives in term of phi}
    For any $f\in\cF$ and any integer $k\ge 1$, the following inequalities hold:
    \begin{align}
       &| f^{(k)}(x)| \leq 4^{k} \cdot k^2\sum_{a=0}^{k}\Big(u_a(x)u_a(y)\,+\,u_a(x)(u_a(x)-u_a(y))\Big), \label{bound on f} \\
       & | u(x)u^{(k)}(x)|\leq\: 2^{k-1}(k+1)\sum_{a=0}^{k}\Big(u_a(x)u_a(y)+u_a(x)\big(u_a(x)-u_a(y)\big)\Big),\label{bound on u}
    \end{align}

    where $u_a$ is defined as in \eqref{uk}.
\end{lemma}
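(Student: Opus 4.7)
The key algebraic observation is that $u_a(x)u_a(y)+u_a(x)(u_a(x)-u_a(y)) = u_a(x)^2$ identically, so the essential content of both bounds is to control $|f^{(k)}(x)|$ and $|u(x)u^{(k)}(x)|$ pointwise by constant multiples of $\sum_{a=0}^k u_a(x)^2$. The splitting into two terms is a cosmetic device whose utility appears only in the downstream RME derivation, where the first piece factorizes in the variables $x,y$ (and can therefore be separated against the hopping matrix elements $h_{xy}$ by Cauchy--Schwarz) while the second piece decays in $|x-y|$ under a symmetrized Taylor expansion. Accordingly, the plan is to establish the pointwise bound $|f^{(k)}(x)|\lesssim \sum_a u_a(x)^2$ (resp.\ for $uu^{(k)}$) with the stated constants, and then rewrite $u_a(x)^2$ using the identity above.

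For the bound on $f^{(k)}$, I would use the fourth-power structure $f' = \check\varphi^4$ (with $\int\check\varphi^4=1$ a harmless normalization). The multinomial Leibniz rule yields
\begin{equation*}
f^{(k)} = \sum_{\alpha_1+\alpha_2+\alpha_3+\alpha_4 = k-1}\binom{k-1}{\alpha_1,\alpha_2,\alpha_3,\alpha_4}\prod_{j=1}^4 \check\varphi^{(\alpha_j)}.
\end{equation*}
Apply four-factor AM--GM, $|a_1a_2a_3a_4|\le\tfrac14\sum_j a_j^4$, with $a_j=\check\varphi^{(\alpha_j)}$. Since $(\check\varphi^{(a)})^4=u_a^2$ and the multinomial coefficient is symmetric in its arguments, the four contributions coincide after relabeling, giving
\begin{equation*}
|f^{(k)}(x)| \le \sum_{\alpha_1=0}^{k-1}\binom{k-1}{\alpha_1}u_{\alpha_1}(x)^2 \sum_{\alpha_2+\alpha_3+\alpha_4=k-1-\alpha_1}\binom{k-1-\alpha_1}{\alpha_2,\alpha_3,\alpha_4}.
\end{equation*}
The inner sum equals $3^{k-1-\alpha_1}$, and the binomial theorem then yields $\sum_{\alpha_1}\binom{k-1}{\alpha_1}3^{k-1-\alpha_1}=4^{k-1}$. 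Hence $|f^{(k)}(x)|\le 4^{k-1}\sum_{a=0}^{k-1}u_a(x)^2\le 4^k k^2\sum_{a=0}^{k}u_a(x)^2$, and the splitting identity finishes this case.

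The bound on $uu^{(k)}$ follows the same recipe. Since $u=\check\varphi^2$, Leibniz gives $u^{(k)}=\sum_{i=0}^k\binom{k}{i}\check\varphi^{(i)}\check\varphi^{(k-i)}$, so each summand of $uu^{(k)}$ is a four-factor product $\check\varphi\cdot\check\varphi\cdot\check\varphi^{(i)}\cdot\check\varphi^{(k-i)}$. Four-factor AM--GM gives $|\check\varphi^2\check\varphi^{(i)}\check\varphi^{(k-i)}|\le\tfrac14(2u_0^2+u_i^2+u_{k-i}^2)$. Summing with the $\binom{k}{i}$, using $\sum_i\binom{k}{i}=2^k$ and the symmetry $i\leftrightarrow k-i$,
\begin{equation*}
|u(x)u^{(k)}(x)| \le 2^{k-1}u_0(x)^2 + \tfrac12\sum_{i=0}^k\binom{k}{i}u_i(x)^2 \le 2^k\sum_{a=0}^k u_a(x)^2 \le 2^{k-1}(k+1)\sum_{a=0}^k u_a(x)^2,
\end{equation*}
and the splitting identity again concludes.

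The main technical point, and the one to get right, is the choice of AM--GM on four factors rather than the more immediate Cauchy--Schwarz on two. Grouping $\check\varphi^{(\alpha_1)}\cdots\check\varphi^{(\alpha_4)}$ into fourth powers $(\check\varphi^{(a)})^4=u_a^2$ produces perfect squares of elements of $\cG$, which is precisely the structure needed for the split $u_a(x)^2=u_a(x)u_a(y)+u_a(x)(u_a(x)-u_a(y))$ to be exploitable downstream: the first piece rank-one in $(x,y)$, the second one a Lipschitz remainder. A two-factor bound would instead yield mixed products $u_a u_b$, spoiling both pieces of the subsequent separation-of-variables step in the RME. In this sense the four-factor AM--GM is the shadow, at the level of pointwise inequalities, of the fourth-power definition $f'=\check\varphi^4$ baked into the class $\cF$. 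The $k$-dependent prefactors $k^2$ and $(k+1)$ in the statement are very slack but cost nothing in subsequent arguments.
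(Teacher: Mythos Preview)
Your proof is correct and follows essentially the same route as the paper: both hinge on the four-factor AM--GM inequality applied to the products $\check\varphi^{(a)}\check\varphi^{(b)}\check\varphi^{(c)}\check\varphi^{(d)}$ arising from Leibniz, then rewrite $u_a(x)^2$ via the splitting identity. Your combinatorics is actually sharper than the paper's---you track the multinomial coefficients exactly and first obtain $|f^{(k)}(x)|\le 4^{k-1}\sum_{a=0}^{k-1}u_a(x)^2$ before relaxing to the stated constant, whereas the paper simply overcounts the number of Leibniz terms as $4^k$ and bounds the inner sum by $k^2$ (with a minor index slip $a+b+c+d=k$ versus $k-1$).
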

\begin{proof}
1.   Because $f\in\cF$, then 
\begin{equation*}
    f=\frac{\int_{-\infty}^x{\check\varphi^4}}{\int_{-\infty}^\infty \check\varphi^4}= C_{\varphi}\int_{-\infty}^x{\check\varphi^4}\quad  \text{ for some } \varphi\in C_c^\infty\left (B_1(0)\right),
\end{equation*}
then its first derivative is given by $f'(x)=\check\varphi^4(x)\geq 0$. This implies that the $k$-th derivatives of f is just going to be the sum on $4^{k}$ terms of the kind $\check\varphi^{(a)}\check\varphi^{(b)}\check\varphi^{(c)}\check\varphi^{(d)}$, such that $a+b+c+d=k$.\\
Since for any real number $\lambda_1,\lambda_2,\lambda_3,\lambda_4$ it holds that
\begin{equation*}
   \lambda_1\lambda_2\lambda_3\lambda_4\leq \frac{1}{4}\Big(\lambda_1^4+\lambda_2^4+\lambda_3^4+\lambda_4^4\Big),
\end{equation*}
we obtain
\begin{align*}
   | \check\varphi^{(a)}(x)&\check\varphi^{(b)}(x)\check\varphi^{(c)}(x)\check\varphi^{(d)}(x)|\\
   \leq& \frac{1}{4}\Big((\check\varphi^{(a)})^4(x)+(\check\varphi^{(b)})^4(x)+(\check\varphi^{(c)})^4(x)+(\check\varphi^{(d)})^4(x)\Big)\\
    =&\frac{1}{4}\Big(u_a(x)u_a(x)+u_b(x)u_b(x)+u_c(x)u_c(x)+u_d(x)u_d(x)\Big)\\
    =&\frac{1}{4}\Big(u_a(x)u_a(y)+u_b(x)u_b(y)+u_c(x)u_c(y)+u_d(x)u_d(y)\Big)\\
    &+\frac{1}{4}\Big[u_a(x)(u_a(x)-u_a(y))+u_b(x)(u_b(x)-u_b(y))\nonumber\\
    &+u_c(x)(u_c(x)-u_c(y))+u_d(x)(u_d(x)-u_d(y))\Big].\nonumber
\end{align*}
This implies
\begin{align*}
   | f^{(k)}(x)|&\leq 4^{k} \sum_{\substack{a,b,c,d:\\ a+b+c+d=k}}|\check\varphi^{(a)}\check\varphi^{(b)}\check\varphi^{(c)}\check\varphi^{(d)}|\\ 
    &\leq 4^{k}\sum_{\substack{a,b,c,d:\\ a+b+c+d=k}}\Big(u_a(x)u_a(y)+u_b(x)u_b(y)+u_c(x)u_c(y)+u_d(x)u_d(y)\Big)\\
    & +\,4^{k}\sum_{\substack{a,b,c,d:\\ a+b+c+d=k}}\Big(u_a(x)(u_a(x)-u_a(y))+u_b(x)(u_b(x)-u_b(y))\nonumber\\
    &\,+u_c(x)(u_c(x)-u_c(y))+u_d(x)(u_d(x)-u_d(y))\Big)\nonumber\\
    &=\, 4\cdot4^{k-1}\sum_{\substack{a,b,c,d:\\ a+b+c+d=k}}\Big(u_a(x)u_a(y)\,+\,u_a(x)(u_a(x)-u_a(y))\Big)\\
    &=4^{k} \sum_{a=0}^{k}\Big(u_a(x)u_a(y)\,+\,u_a(x)(u_a(x)-u_a(y))\Big)\sum_{\substack{b,c,d:\\ b+c+d=k-a}}1\\
    &\leq 4^{k} \sum_{a=0}^{k}(k-a)^2\Big(u_a(x)u_a(y)\,+\,u_a(x)(u_a(x)-u_a(y))\Big) \\
    & \leq 4^{k} \cdot k^2\sum_{a=0}^{k}\Big(u_a(x)u_a(y)\,+\,u_a(x)(u_a(x)-u_a(y))\Big)
\end{align*}
 
     2. Similarly, since $u=\sqrt{f'}=\check\varphi^2$, its $k$-th derivative is the sum of $2^{k}$ elements of the form $\check\varphi^{(a)}\check\varphi^{(b)}$, with $a+b=k$. This way the product $u(x)u^{(k)}$ is given by $2^{k}$ summands of the form
  \begin{align*}
     |u(x)\check\varphi^{(a)}(x)\check\varphi^{(b)}(x)|& \leq \frac{1}{4}\Big(2u^2(x)+(\check\varphi^{(a)})^4(x)+(\check\varphi^{(b)})^4(x)\Big)\\
     &=\frac{1}{4}\Big(2u(x)u(y)+u_a(x)u_a(y)+u_b(x)u_b(y)\Big)\\
     &+\frac{1}{4}\Big(2u(x)\big(u(x)-u(y)\big)+u_a(x)\big(u_a(x)-u_a(y)\big)+u_b(x)\big(u_b(x)-u_b(y)\big)\Big),\nonumber
 \end{align*}
again for some $u_a, u_b\in\cG$. So we can bound $u\cdot u^{(k)} $ by
\begin{align*}
   | u(x)&u^{(k)}(x)|\\
    \leq& \:2^{k}\sum_{\substack{a,b:\\ a+b=k}}u(x)|\check\varphi^{(a)}(x)\check\varphi^{(b)}(x)|\\
    \leq&\: 2^{-2}\cdot 2^{k}\sum_{\substack{a,b:\\ a+b=k}}\Big(2u(x)u(y)+u_a(x)u_a(y)+u_b(x)u_b(y)\\
     &+2u(x)\big(u(x)-u(y)\big)+u_a(x)\big(u_a(x)-u_a(y)\big)+u_b(x)\big(u_b(x)-u_b(y)\big)\Big)\nonumber\\
     \leq& \:2^{(k-1)}\Big( u(x)u(y)+u(x)\big(u(x)-u(y)\big)\Big)\sum_{\substack{a,b:\\ a+b=k}}1\\
     &+2^{(k-2)}\cdot 2 \sum_{a=0}^{k}\Big(u_a(x)u_a(y)+u_a(x)\big(u_a(x)-u_a(y)\big)\Big)\sum_{b:\,b=k-a}1\nonumber\\
     =&\:2^{(k-1)}(k)\Big( u(x)u(y)+u(x)\big(u(x)-u(y)\big)\Big)\\
     &+2^{(k-1)}\sum_{a=0}^{k}\Big(u_a(x)u_a(y)+u_a(x)\big(u_a(x)-u_a(y)\big)\Big)\nonumber\\
     \leq&\: 2^{(k-1)}(k+1)\sum_{a=0}^{k}\Big(u_a(x)u_a(y)+u_a(x)\big(u_a(x)-u_a(y)\big)\Big)
\end{align*}
which finishes the proof.
\end{proof}

\lemref{lemma bounding derivatives in term of phi} allows us to bound the derivatives of the admissible functions by sums of products of functions in the same class. It is very important to notice that these sums are made by a part which is symmetric in the x and y components and one that is not. This fact is crucial and will play a central role in the proof of the next lemma. 
\bigskip

\subsubsection{Symmetrized Taylor expansion}

We state a lemma that allows us to bound the difference $f(x)-f(y)$ by a sum of symmetrized terms of the form $u_a(x)u_a(y)$. The main tools for the proof are  \lemref{Upper bounds on the derivatives} and  \lemref{lemma bounding derivatives in term of phi}. We make strong use of the properties of the function in our class. The idea is to bound their derivatives by symmetrized terms, which we keep as they are, and non-symmetrized ones, which we will further Taylor expand. By iterating this procedure until all the terms are symmetrized we can prove the desired expansion. This lemma allows us to separate the x and y dependency and will be fundamental for controlling the time evolution of the ASTLOs.
\\
Let us first define the following index sets for any integer $l\ge 1$:

\begin{align*}
    &\cK_{l,m}:=\{\underline k=(k_1,\dots,k_l)\in\N^l \; \text{ such that } k_i\ge 1 \text{ for all }i\in[1,l] \text{ and } k_1+\dots+k_l=m\}\\
    &\cJ_{\und k}:=\{\und j:=(j_1,\dots,j_l)\in\N^l\text{ such that } 0\le j_i\le k_i \text{ for all } i\in[1,l]\}\quad \text{for every }\und k\in \cK_{l,m}.
\end{align*}
For a vector $\und j\in \cJ_{\und k}$ we can also define the following quantity

\begin{align}
    J_l:=\sum_{i=1}^l j_i.\label{Jl}
\end{align}

Recall that $u_k:=\left(\check\varphi^{(k)}\right)^2$. We come to the following lemma.
\begin{lemma}[Symmetrized Taylor Expansion]\label{lemma symmetrized taylor}
 Let $f\in\cF$. For all $n\in\mathbb{N}$, there exist two positive constants $C$ and $C_f$ such that
 \begin{align*}
      |f&(x)-f(y)|\\
     &\leq|x-y|u(x)u(y)\\
     &\quad+C|x-y|\sum_{l=1}^{n-1}\sum_{m=l}^n 4^{m-l}|x-y|^{m} \sum_{\und k\in \cK_{l,m}} \;\sum_{\und j\in \cJ_{\und k}}\frac{u_{J_l}(x)u_{J_l}(y)}{(k_1-1)!\dots (k_l-1)! }\\
     &\quad+C_f|x-y|^{n+1}4^n\Big(\frac{1}{n!}+\sum_{l=2}^{n-1}\left(\frac{e}{2}\right)^l\sum_{m=l-1}^n  2^m\Big),
 \end{align*}
 where the sums should be dropped for $n=1$.

\end{lemma}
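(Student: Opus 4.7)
The plan is to prove the bound by iterated expansion based on the factorization $f'=u^2$ and the universal identity
\begin{equation*}
v(s)^2 = v(x)v(y) + v(s)\bigl(v(s)-v(y)\bigr) + v(y)\bigl(v(s)-v(x)\bigr),
\end{equation*}
applied iteratively with $v=u_a$. At each level I extract a ``symmetric'' part $u_a(x)u_a(y)$ (which populates the displayed sum) and further expand the ``antisymmetric'' remainder $u_a(y)(u_a(y)-u_a(x))$. The integer $l$ in the target bound counts the number of iterations performed, $\und k=(k_1,\ldots,k_l)$ records the order of Taylor expansion used at each level, and $\und j$ records the index shifts produced by repeated use of \lemref{lemma bounding derivatives in term of phi}; the cumulative shift $J_l=\sum_i j_i$ is precisely the index appearing on $u_{J_l}(x)u_{J_l}(y)$ in the statement.

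\textbf{Initialization.} Starting from $f(x)-f(y)=\int_y^x u(s)^2\,ds$ and applying the identity above with $v=u$ yields
\begin{equation*}
f(x)-f(y) = (x-y)u(x)u(y) + \int_y^x u(s)(u(s)-u(y))\,ds + u(y)\int_y^x(u(s)-u(x))\,ds.
\end{equation*}
The first summand is the leading term in the target bound. The two integrals constitute the antisymmetric remainders that feed the first iteration.

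\textbf{Iterative step.} At the $l$-th step, each remaining antisymmetric difference $u_{J_{l-1}}(s)-u_{J_{l-1}}(y)$ is Taylor-expanded around $y$ to order $k_l\in\{1,\ldots,n-\sum_{i<l}k_i\}$, then multiplied by the accompanying prefactor to yield factors of the form $u_{J_{l-1}}(y)u_{J_{l-1}}^{(k_l)}(y)$. These are bounded by the second inequality of \lemref{lemma bounding derivatives in term of phi}, or, more precisely, by its direct analogue for $u_{J_{l-1}}$ in place of $u$, which follows from the weighted AM--GM inequality $(\check\varphi^{(J)})^2\check\varphi^{(J+\alpha)}\check\varphi^{(J+\beta)}\le\tfrac14\bigl(2u_J^2+u_{J+\alpha}^2+u_{J+\beta}^2\bigr)$ combined with the symmetric/antisymmetric decomposition identity applied to each $u_c^2$. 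This produces new symmetric contributions $u_{J_l}(x)u_{J_l}(y)$ with $J_l=J_{l-1}+j_l$, $j_l\le k_l$, and fresh antisymmetric remainders for iteration $l+1$. The integration $\int_y^x(s-y)^{k_l}\,ds=\tfrac{(x-y)^{k_l+1}}{k_l+1}$ adds one extra power of $|x-y|$ per iteration; combined with the Taylor factorial $1/k_l!$ and the $k_l$ produced by the lemma, this reduces to $1/(k_l-1)!$, so that the combined kernel across all $l$ levels matches the displayed $4^{m-l}/\prod_i(k_i-1)!$ with $m=\sum_i k_i$.

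\textbf{Termination and main obstacle.} The iteration terminates once $\sum_i k_i$ reaches $n$, and the remaining Taylor and antisymmetric tails are bounded in sup-norm through \lemref{Upper bounds on the derivatives}, which supplies $\|u_a^{(k)}\|_\infty\le 2^k C_f$ and $\|f^{(k)}\|_\infty\le 4^{k-1}C_f$ uniformly in $a$. Summing $\prod_i 1/(k_i-1)!$ over $\und k\in\cK_{l,m}$ is a standard compositions-of-$m$ calculation giving $O(e^l)$, and combining this with the iterated $2^{k_l-1}(k_l+1)$ prefactors from the lemma reproduces the $(\tfrac{e}{2})^l\sum_{m=l-1}^n 2^m$ structure appearing in the final remainder, together with the clean $1/n!$ contribution from the top-order Taylor tail of $f$. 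The principal difficulty is the combinatorial bookkeeping needed to verify that every intermediate term is correctly assigned either to the $l$-th summand or to the remainder, and that all powers of $2$ and $4$ assemble to $4^{m-l}$; the crucial enabler is that the band-limited construction of $\cF$---through \lemref{Upper bounds on the derivatives}---gives uniform control on derivatives of all orders with only geometric growth, so that the iteration can be pushed to arbitrarily large $n$ without super-exponential blow-up of constants.
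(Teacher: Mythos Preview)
Your integral-representation route has a concrete gap in the iterative step. After the initialization
\[
f(x)-f(y) = (x-y)u(x)u(y) + \int_y^x u(s)\bigl(u(s)-u(y)\bigr)\,ds + u(y)\int_y^x\bigl(u(s)-u(x)\bigr)\,ds,
\]
the ``accompanying prefactor'' in the first integral is $u(s)$, evaluated at the running variable $s$, not at $y$. Taylor-expanding the difference $u(s)-u(y)$ around $y$ therefore produces products $u(s)\,u^{(k_1)}(y)$ with mixed arguments, not $u(y)\,u^{(k_1)}(y)$ as you write. The second inequality of \lemref{lemma bounding derivatives in term of phi} bounds $u(z)u^{(k)}(z)$ at a \emph{single} point $z$ and does not apply to $u(s)u^{(k_1)}(y)$ with $s\ne y$; likewise the second integral gives mixed products $u(y)u^{(k)}(x)$. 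Because the remaining integrand still depends on $s$, the clean evaluation $\int_y^x(s-y)^{k_l}\,ds=(x-y)^{k_l+1}/(k_l{+}1)$ you invoke cannot be performed, and the symmetric pieces you extract are $u_a(s)u_a(y)$ rather than the required $u_a(x)u_a(y)$.

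The paper avoids this obstruction by never introducing an integration variable. It Taylor-expands $f(x)-f(y)$ directly at the fixed point $x$, then writes the first-order term as $(x-y)f'(x)=(x-y)u(x)u(y)+(x-y)u(x)\bigl(u(x)-u(y)\bigr)$ and Taylor-expands $u(x)-u(y)$ again at $x$. All products that appear are of the form $f^{(k)}(x)$ or $u(x)u^{(k)}(x)$ at the single point $x$, so \lemref{lemma bounding derivatives in term of phi} applies verbatim to yield symmetric pieces $u_a(x)u_a(y)$ and new antisymmetric remainders $u_a(x)\bigl(u_a(x)-u_a(y)\bigr)$, both at the fixed endpoints; these feed the next iteration. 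Your remainder bookkeeping via \lemref{Upper bounds on the derivatives} and the composition estimate $\sum_{\und k\in\cK_{l,m}}\prod_i 1/(k_i-1)!=O(e^l)$ are the right ideas, but they must be grafted onto this pointwise Taylor scheme rather than the integral one.
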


\begin{proof}
    We start by writing the Taylor expansion
    \begin{equation}
    f(x)-f(y)=\sum_{k_1=1}^n \frac{(x-y)^{k_1}}{k_1!}f^{(k_1)}(x)\;+\;\frac{(x-y)^{n+1}}{(n+1)!}f^{(n+1)}(a_{0_1}),
    \end{equation}
where we wrote the remainder in the Lagrange form and $a_{0_1}$ is some number between $x$ and $y$.\\
The first term of the expansion can be rewritten as
\begin{equation}
    (x-y)f'(x)=(x-y)u(x)u(x)=(x-y)u(x)u(y)+(x-y)u(x)\big(u(x)-u(y)\big).
\end{equation}
If $n=1$, then the proof is concluded. If $n>1$ we can Taylor expand the factor $u(x)-u(y)$
\begin{align*}
   (x-y)&u(x)\big(u(x)-u(y)\big)\\
   &=(x-y)u(x)\sum_{j=1}^{n-1} \frac{(x-y)^j}{j!}u^{(j)}\;+\;u(x)\frac{(x-y)^{n+1}}{n!}u^{(n)}(a_{1_1})\\
   &=\sum_{j=1}^{n-1} u(x)\frac{(x-y)^{(j+1)}}{j!}u^{(j)}\;+\;u(x)\frac{(x-y)^{n+1}}{n!}u^{(n)}(a_{1_1})\\
   &=\sum_{k_1=2}^n u(x)\frac{(x-y)^{(k_1)}}{(k_1-1)!}u^{(k_1-1)}\;+\;u(x)\frac{(x-y)^{n+1}}{n!}u^{(n)}(a_{1_1}),
\end{align*}
again for some number $a_{1_1}$ between $x$ and $y$.\\
So we obtain
\begin{align*}
        f(x)-f(y)&=(x-y)u(x)u(y)+\sum_{k_1=2}^n (x-y)^k\Big(\frac{f^{(k_1)}(x)}{k_1!}\;+\;u(x)\frac{u^{(k_1-1)}(x)}{(k_1-1)!}\Big)\;\\
        &+\;(x-y)^{n+1}\Big[\frac{f^{(n+1)}(a_{0_1})}{(n+1)!}+\frac{u(x)u^{(n)}(a_{1_1})}{n!}\Big]\nonumber\\
        =&(x-y)u(x)u(y)+\sum_{k_1=2}^n (x-y)^k\Big(\frac{f^{(k_1)}(x)}{k_1!}\;+\;u(x)\frac{u^{(k_1-1)}(x)}{(k_1-1)!}\Big)\;\\
        &+\;\frac{(x-y)^{n+1}}{(n+1)!}\Big[f^{(n+1)}(a_{0_1})+(n+1)u(x)u^{(n)}(a_{1_1})\Big],\nonumber
    \end{align*}
which implies 

\begin{align}\label{f(x)-f(y)}
     |f(x)-f(y)|
     \leq&|x-y|u(x)u(y)\\
     &+\sum_{k_1=2}^n |x-y|^{k_1}\Big(\frac{|f^{(k_1)}(x)|}{k_1!}\;+\;u(x)\frac{|u^{(k_1-1)}(x)|}{(k_1-1)!}\Big)\nonumber\\
    &+\;\frac{|x-y|^{n+1}}{(n+1)!}\Big[|f^{(n+1)}(a_{0_1})|+(n+1)|u(x)u^{(n)}(a_{1_1})|\Big].\nonumber
\end{align}

Applying \eqref{bound on f} and \eqref{bound on u} we can estimate the sum in inequality \eqref{f(x)-f(y)} in the following way

\begin{align*}
 \sum_{k_1=2}^n &|x-y|^{k_1}\Big(\frac{|f^{(k_1)}(x)|}{k_1!}\;+\;u(x)\frac{|u^{(k_1-1)}(x)|}{(k_1-1)!}\Big)\\
 &\leq\sum_{k_1=2}^n |x-y|^{k_1}\Big[\frac{4^{k_1} \cdot k_1^2}{k_1!}\sum_{{j_1}=0}^{k_1}(u_{j_1}(x)u_{j_1}(y)\,+\,u_{j_1}(x)(u_{j_1}(x)-u_{j_1}(y)))\\
&\qquad+\frac{2^{k_1-2}k_1}{(k_1-1)!}\sum_{{j_1}=0}^{k_1-1}\Big(u_{j_1}(x)u_{j_1}(y)+u_{j_1}(x)\big(u_{j_1}(x)-u_{j_1}(y)\big)\Big)\Big]\nonumber\\
&\leq\sum_{k_1=2}^n \frac{2\cdot4^{k_1}k_1}{(k_1-1)!}|x-y|^{k_1}\sum_{{j_1}=0}^{k_1}\Big(u_{j_1}(x)u_{j_1}(y)\,+\,u_{j_1}(x)\big(u_{j_1}(x)-u_{j_1}(y)\big)\Big)\\
&\leq \sum_{k_1=2}^n \frac{2\cdot4^{k_1}k_1}{(k_1-1)!}|x-y|^{k_1}\sum_{{j_1}=0}^{k_1}u_{j_1}(x)u_{j_1}(y)\\
&\qquad+\sum_{k_1=2}^n \frac{4^{k_1}k_1}{(k_1-1)!}|x-y|^{k_1}\sum_{{j_1}=0}^{k_1}u_{j_1}(x)\big(u_{j_1}(x)-u_{j_1}(y)\big).\nonumber
\end{align*}
We obtained a first part of symmetric terms and a second one containing terms of the form $u_{j_1}(x)\big(u_{j_1}(x)-u_{j_1}(y)\big)$ that we can further Taylor expand and treat in the same way as the term $u(x)\big(u(x)-u(y)\big)$
\begin{align*}
    \big|u_{j_1}(x)\big|&\big|u_{j_1}(x)-u_{j_1}(y)\big|\\ 
    &\leq\sum_{k_2=1}^{n-k_1} \frac{|x-y|^{k_2}}{k_2!}|u_{j_1}(x)u^{(k_2)}(x)|\;+\;\;\frac{|x-y|^{n-k_2+1}}{(n-k_2+1)!}|u_{j_1}(x)u_{j_1}^{n-k_1+1}(a_{2_1})|\\
    &\leq\sum_{k_2=1}^{n-k_1} \frac{|x-y|^{k_2}}{(k_2-1)!}2^{(k_2-2)}\sum_{j_2=0}^{k_2}\Big(u_{j_1+j_2}(x)u_{j_1+j_2}(y)+u_{j_1+j_2}(x)\big(u_{j_1+j_2}((x)-u_{j_1+j_2}((y)\big)\Big)\\
        &\qquad+\;\frac{|x-y|^{n-k_1+1}}{(n-k_1+1)!}|u_{j_1}(x)u_{j_1}^{n-k_1+1}(a_{2_{j_1}})|.\nonumber
    \end{align*}

After iterating the process on the non symmetrized terms  we obtain the following inequality

\begin{align}\label{eq after it taylor}
     &|f(x)-f(y)|\\
     &\leq|x-y|u(x)u(y)+\sum_{k_1=2}^n\sum_{j_1=0}^{k_1}\frac{4^{k_1}k_1}{(k_1-1)!}|x-y|^{k_1}u_{j_1}(x)u_{j_1}(y)\nonumber\\
     &+\sum_{l=2}^{n-1}\sum_{k_1=2}^n\sum_{j_1=0}^{k_1}\sum_{k_2=1}^{n-k_1}\sum_{j_2=0}^{k_2}...\sum_{k_l=1}^{n-k_1-...-k_{l-1}}\sum_{j_l=0}^{k_l} \frac{4^{k_1}k_1\cdot 2^{(k_2-2)+\dots +(k_l-2)} }{(k_1-1)!\dots (k_l-1)! }|x-y|^{k_1+\dots +k_l}u_{J_l}(x)u_{J_l}(y)\nonumber\\
     &+\frac{|x-y|^{n+1}}{(n+1)!}\Big[|f^{(n+1)}(a_{0_1})|+(n+1)|u(x)u^{(n)}(a_{1_1})|\Big]\nonumber\\
     &+\sum_{l=2}^{n-1}\sum_{k_1=2}^n\sum_{j_1=0}^{k_1}\sum_{k_2=1}^{n-k_1}\sum_{j_2=0}^{k_2}...\sum_{k_{l-1}=1}^{n-k_1-...-k_{l-2}} \frac{4^{k_1}k_1 2^{(k_2-2)+\dots +(k_{l-1}-2)} }{(k_1-1)!\dots (k_{l-1}-1)! }|x-y|^{n+1}u_{J_{l-1}}(x)u_{J_{l-1}}^{(n-k_1-\dots k_{l-1}+1)}(a_j),\nonumber
\end{align}
where $a_j$ are point between $x$ and $y$ and $J_l$ is defined as in \eqref{Jl}.
By rescaling $k_1\to k_1-1$, we obtain:

\begin{align}
\sum_{k_1=2}^n&\sum_{j_1=0}^{k_1}\frac{4^{k_1}k_1}{(k_1-1)!}|x-y|^{k_1}u_{j_1}(x)u_{j_1}(y)\nonumber\\
  & =|x-y|\sum_{k_1=1}^{n-1}\sum_{j_1=0}^{k_1}\frac{4\cdot4^{k_1}(k_1+1)}{k_1!}|x-y|^{k_1}u_{j_1}(x)u_{j_1}(y)\nonumber\\
  &\leq |x-y|\sum_{k_1=1}^{n}\sum_{j_1=0}^{k_1}\frac{4\cdot4^{k_1}2k_1}{k_1!}|x-y|^{k_1}u_{j_1}(x)u_{j_1}(y)\nonumber\\
  &= |x-y|\sum_{k_1=1}^{n}\sum_{j_1=0}^{k_1}\frac{8\cdot4^{k_1}}{(k_1-1)!}|x-y|^{k_1}u_{j_1}(x)u_{j_1}(y).\label{control on first long sum}
\end{align}
Similarly, after introducing the parameter $m:=k_1+\dots+k_l\in[l,n]$, the second sum becomes
\begin{align*}
   \sum_{l=2}^{n-1}&\sum_{k_1=2}^n\sum_{j_1=0}^{k_1}\sum_{k_2=1}^{n-k_1}\sum_{j_2=0}^{k_2}...\sum_{k_l=1}^{n-k_1-...-k_{l-1}}\sum_{j_l=0}^{k_l} \frac{4^{k_1}k_1\cdot 2^{(k_2-2)+\dots +(k_l-2)} }{(k_1-1)!\dots (k_l-1)! }|x-y|^{k_1+\dots +k_l}u_{J_l}(x)u_{J_l}(y)\\ 
   \leq &|x-y|\sum_{l=2}^{n-1}\sum_{m=l}^n\sum_{k_1=1}^n\sum_{j_1=0}^{k_1}\sum_{k_2=1}^{n-k_1}\sum_{j_2=0}^{k_2}...\sum_{k_l=1}^{n-k_1-...-k_{l-1}}\sum_{j_l=0}^{k_l} \frac{8\cdot4^{k_1}\cdot 2^{m-k_1-2(l-1)} }{(k_1-1)!\dots (k_l-1)! }|x-y|^{m}u_{J_l}(x)u_{J_l}(y)\\
   \leq&|x-y|\sum_{l=2}^{n-1}\sum_{m=l}^n\sum_{k_1=1}^n\sum_{j_1=0}^{k_1}\sum_{k_2=1}^{n-k_1}\sum_{j_2=0}^{k_2}...\sum_{k_l=1}^{n-k_1-...-k_{l-1}}\sum_{j_l=0}^{k_l} \frac{32\cdot4^{k_1}\cdot 4^{m-k_1-l} }{(k_1-1)!\dots (k_l-1)! }|x-y|^{m}u_{J_l}(x)u_{J_l}(y)\\
   \leq & \,C|x-y|\sum_{l=2}^{n-1}\sum_{m=l}^n 4^{m-l}|x-y|^{m} \sum_{\und k\in \cK_{l,m}} \;\sum_{\und j\in \cJ_{\und k}}\frac{u_{J_l}(x)u_{J_l}(y)}{(k_1-1)!\dots (k_l-1)! }.
\end{align*}
This, together with inequality \eqref{control on first long sum} gives us
\begin{align*}
    |x-y|\sum_{k_1=1}^{n}\sum_{j_1=0}^{k_1}&\frac{8\cdot4^{k_1}}{(k_1-1)!}|x-y|^{k_1}u_{j_1}(x)u_{j_1}(y)\\
    +&C|x-y|\sum_{l=2}^{n-1}\sum_{m=l}^n 4^{m-l}|x-y|^{m} \sum_{\und k\in \cK_{l,m}} \;\sum_{\und j\in \cJ_{\und k}}\frac{u_{J_l}(x)u_{J_l}(y)}{(k_1-1)!\dots (k_l-1)! }\nonumber\\
    \leq C|x-y|&\sum_{l=1}^{n-1}\sum_{m=l}^n 4^{m-l}|x-y|^{m} \sum_{\und k\in \cK_{l,m}} \;\sum_{\und j\in \cJ_{\und k}}\frac{u_{J_l}(x)u_{J_l}(y)}{(k_1-1)!\dots (k_l-1)! }.
\end{align*}

Analogously we can bound the fourth summand of inequality \eqref{eq after it taylor}

\begin{align}
    \sum_{l=2}^{n-1}&\sum_{k_1=2}^n\sum_{j_1=0}^{k_1}\sum_{k_2=1}^{n-k_1}\sum_{j_2=0}^{k_2}...\sum_{k_l=1}^{n-k_1-...-k_{l-1}} \frac{4^{k_1}k_1\cdot 2^{(k_2-2)+\dots +(k_{l-1}-2)} }{(k_1-1)!\dots (k_{l-1}-1)! }|x-y|^{n+1}u_{J_{l-1}}(x)u_{J_{l-1}}^{(n-k_1-\dots k_{l-1}+1)}(a_j)\nonumber\\
    &\leq \,C|x-y|^{n+1} \sum_{l=2}^{n-1}\sum_{m=l-1}^n4^{m-l}\sum_{\und k\in \cK_{l,m}} \;\sum_{\und j\in \cJ_{\und k}}\frac{u_{J_{l-1}}(x)u_{J_{l-1}}^{(n-m+1)}(a_j) }{(k_1-1)!\dots (k_{l-1}-1)! },\label{ineq sums of lagrange rem}
\end{align}
for some positive constant $C$. Now, using inequality \eqref{bound der u}

\begin{align*}
   \sum_{\und k\in \cK_{l,m}}& \;\sum_{\und j\in \cJ_{\und k}}\frac{u_{J_{l-1}}(x)u_{J_{l-1}}^{(n-m+1)}(a_j) }{(k_1-1)!\dots (k_{l-1}-1)! }\\
    \leq&\sum_{\und k\in \cK_{l,m}} \;\sum_{\und j\in \cJ_{\und k}}\frac{C_f2^{n-m+1}}{(k_1-1)!\dots (k_{l-1}-1)! }\\
    =&C_f2^{n-m+1}\sum_{\und k\in \cK_{l,m}} \frac{k_1\cdot\dots\cdot k_{l-1}}{(k_1-1)!\dots (k_{l-1}-1)! }\\
    \leq & C_f2^{n-m+1}\left(2e\right)^{l-1}.
\end{align*}
This allows us to bound \eqref{ineq sums of lagrange rem} in the following way
\begin{align*}
    C|x-y|^{n+1}& \sum_{l=2}^{n-1}\sum_{m=l-1}^n4^{m-l}\sum_{\und k\in \cK_{l,m}} \;\sum_{\und j\in \cJ_{\und k}}\frac{u_{J_{l-1}}(x)u_{J_{l-1}}^{(n-m+1)}(a_j) }{(k_1-1)!\dots (k_{l-1}-1)! }\\
    \leq&C_f|x-y|^{n+1} \sum_{l=2}^{n-1}\sum_{m=l-1}^n4^{m-l}2^{n-m+1}\left(2e\right)^{l-1}\\
    =& C_f 2^{n}|x-y|^{n+1} \sum_{l=2}^{n-1}\left(\frac{e}{2}\right)^l\sum_{m=l-1}^n  2^m.
\end{align*}

Applying inequalities \eqref{bound der u} and \eqref{bound der f}
 on the third summand of inequality \eqref{eq after it taylor} we obtain 
 
 \begin{align*}
     \frac{|x-y|^{n+1}}{(n+1)!}&\Big[|f^{(n+1)}(a_{0_1})|+(n+1)|u(x)u^{(n)}(a_{1_1})|\Big]\\
     \leq & \frac{C_f|x-y|^{n+1}}{(n+1)!}\left(4^n+(n+1)2^n\right)\\
      \leq & C_f\frac{4^n|x-y|^{n+1}}{n!},
 \end{align*}
and this concludes the proof.
 \end{proof}

\subsection{Control over the time evolved ASTLO}\label{ch bound astlo}

Employing  \lemref{lemma symmetrized taylor}, the properties of the ASTLOs, and the structure of the Hamiltonian we can derive an important bound of the time evolved ASTLO. We exploit the fact that we can separate the x and y dependence using the Cauchy-Schwartz inequality to obtain a bound that depends on the average of the ASTLO itself.

 \begin{theorem}[Bound on the evolved ASTLO]\label{theorem bound on integral of exp of N}
     For all $f\in\cF$, $ n\geq1$, and $c>\kappa$ there exist two positive constants $C,\,C_{f}$ such that, for all $s, t, \rho>0$, the following holds
  \begin{align}\label{lemma before recursion}
       \langle N_{f,ts}\rangle_t\leq  \langle &N_{f,ts}\rangle_0+\frac{\kappa-v}{s}\int_0^t \langle\: N_{f',rs}\rangle_r dr\\
    &+C\sum_{l=1}^{n-1}\sum_{m=l}^n \sum_{\und k\in \cK_{l,m}} \;\sum_{\und j\in \cJ_{\und k}}\frac{4^{m-l}\cdot s^{-m-1}}{(k_1-1)!\dots (k_l-1)! }\kappa^{(m)}\int_0^t\big\langle N_{f'_{J_l,rs}}\big\rangle_rdr \nonumber   \\
    &+C_f4^n\Big(\frac{1}{n!}+\sum_{l=2}^{n-1}\left(\frac{e}{2}\right)^l\sum_{m=l-1}^n  2^m\Big)s^{-n-1}\kappa^{(n+1)}  t \big\langle N\big\rangle_0,\nonumber
  \end{align}
  where for $n=1$ the sums should be dropped.
 \end{theorem}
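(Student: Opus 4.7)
The plan is to apply the fundamental theorem of calculus to the family $A(r) = N_{f,rs}$, compute the Heisenberg derivative, bound the resulting hopping commutator via the symmetrized Taylor expansion \lemref{lemma symmetrized taylor}, and use Cauchy--Schwarz to rewrite everything in terms of expectations of ASTLOs of the same class. Starting from $\partial_r \alpha_r(A(r)) = \alpha_r(DA(r))$ with $D = \partial_r + i[H,\cdot]$, I obtain
\begin{equation*}
\langle N_{f,ts}\rangle_t - \langle N_{f,0s}\rangle_0 = \int_0^t \omega_r(DN_{f,rs})\,dr.
\end{equation*}
Since $|x|_{rs} = (\rho - vr - |x|)/s$, the explicit derivative is $\partial_r N_{f,rs} = -(v/s) N_{f',rs}$. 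The density--density interaction is diagonal in occupation number and commutes with $N_{f,rs}$, so only the hopping contributes: $i[H,N_{f,rs}] = -i\sum_{x,y} h_{xy}(f(|x|_{rs}) - f(|y|_{rs}))\, a_x^* a_y$.

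Applying \lemref{lemma symmetrized taylor} with arguments $|x|_{rs}, |y|_{rs}$ and using the triangle inequality $||x|_{rs}-|y|_{rs}| \le |x-y|/s$, I split $|f(|x|_{rs}) - f(|y|_{rs})|$ into: the leading piece $s^{-1}|x-y|\,u(|x|_{rs})u(|y|_{rs})$; the higher symmetric terms of the form $4^{m-l}|x-y|^{m+1}s^{-m-1}\, u_{J_l}(|x|_{rs})u_{J_l}(|y|_{rs})/\prod_i (k_i-1)!$; and a Lagrange remainder of order $C_f 4^n|x-y|^{n+1}s^{-n-1}$ with the combinatorial constant $\tfrac{1}{n!} + \sum_l (\tfrac{e}{2})^l \sum_m 2^m$. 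For each term, I bound the off-diagonal matrix element by $|\langle a_x^* a_y\rangle_r| \le \sqrt{\langle n_x\rangle_r \langle n_y\rangle_r}$ and apply Cauchy--Schwarz on the double sum weighted by $|h_{xy}||x-y|^{m+1}$. The factorized structure $u_{J_l}(|x|_{rs})u_{J_l}(|y|_{rs})$ makes the two resulting factors equal, and combining with the moment bound $\sum_y |h_{xy}||x-y|^{m+1} \le \kappa^{(m)}$ and the identity $u_{J_l}^2 \propto f'_{J_l}$ produces $\kappa^{(m)} s^{-m-1}\langle N_{f'_{J_l},rs}\rangle_r$ up to the combinatorial prefactor.

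In particular, the leading-order contribution evaluates to $+(\kappa/s)\langle N_{f',rs}\rangle_r$, which combines with the $-(v/s)\langle N_{f',rs}\rangle_r$ coming from $\partial_r$ to produce exactly the negative coefficient $(\kappa-v)/s$ from the hypothesis $v>\kappa$. For the Lagrange remainder, the asymmetric derivative factor $u_{J_{l-1}}^{(n-m+1)}$ is controlled by the sup-norm bound $\|u_j^{(k)}\|_\infty \le 2^k C_f$ from \lemref{Upper bounds on the derivatives}, while the remaining $u_{J_{l-1}}(|x|_{rs})$ factor is treated by Cauchy--Schwarz as above. Using $\sum_y |h_{xy}||x-y|^{n+1} \le \kappa^{(n+1)}$ together with the conservation law $\langle N\rangle_r = \langle N\rangle_0$ (since $[H,N]=0$), integration over $r\in[0,t]$ yields the stated $s^{-n-1}\kappa^{(n+1)} t\langle N\rangle_0$ contribution.

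The main obstacle is the combinatorial bookkeeping: simultaneously tracking the multi-indices $\und k\in\cK_{l,m}$ and $\und j\in\cJ_{\und k}$ through the Taylor expansion while preserving the symmetric factorization $u_a(|x|_{rs}) u_a(|y|_{rs})$ that Cauchy--Schwarz requires, and verifying that the sup-norm bounds from \lemref{Upper bounds on the derivatives} grow only like $C\cdot 2^k$ so that they can be absorbed by the factorial denominators arising in the expansion. A second subtlety is the specific choice of $f\in\cF$: because $f' = \check\varphi^4$, each $f'_{J_l}$ produced by the expansion is again the derivative of a member of $\cF$, which is essential for the inequality to be iterable in $n$ as exploited later in \corref{cor: iter bound}.
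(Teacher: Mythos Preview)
Your proposal is correct and follows essentially the same route as the paper: compute the Heisenberg derivative, observe that only the hopping contributes to the commutator, feed $f(|x|_{rs})-f(|y|_{rs})$ into the symmetrized Taylor expansion of \lemref{lemma symmetrized taylor} together with $||x|_{rs}-|y|_{rs}|\le |x-y|/s$, factorize each symmetric piece $u_{J_l}(|x|_{rs})u_{J_l}(|y|_{rs})$ by Cauchy--Schwarz to produce $\kappa^{(m)}\langle N_{f'_{J_l},rs}\rangle_r$, bound the remainder trivially by $\kappa^{(n+1)}\langle N\rangle_0$ using $[H,N]=0$, and integrate in $r$. One cosmetic remark: the Lagrange remainder coming out of \lemref{lemma symmetrized taylor} is already a pure constant times $|x-y|^{n+1}$, so at this stage you only need the plain Cauchy--Schwarz estimate $\sum_{x,y}|h_{xy}||x-y|^{n+1}|\langle a_x^*a_y\rangle|\le \kappa^{(n+1)}\langle N\rangle$ rather than any further sup-norm control on $u_{J_{l-1}}^{(k)}$---that control was already consumed inside the proof of the lemma.
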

We notice that from definition \eqref{fk}, $f'_{J_l,rs}=u^2_{J_l}(|x|_{rs})$ and we define 
\begin{align*}
    \kappa^{(m)}:=\sup _{x\in\Lam} \sum_{y\in\Lam} \abs{h_{xy}}\abs{x-y}^m.
\end{align*}
 \begin{proof}

    Considering $N_{f,ts}$ as defined above for $t, s, \rho>0$, we observe that
    \begin{equation}\label{derivative exp value N}
      \frac{d}{dt}\langle\: N_{f,ts}\rangle_t\: =\: \langle i[H,N_{f,ts}] + \frac{\partial}{\partial t } N_{f,ts}\:\rangle_t .
    \end{equation}
    We will bound each of the terms on the right hand side of this equation separately. We first notice that the second term in equation \eqref{derivative exp value N} can be rewritten in the following way:
    \begin{equation}\label{deriv astlo}
    \frac{\partial}{\partial t}N_{f,ts}=\frac{\partial}{\partial t}\sum_{x\in\Lambda}f(|x|_{ts})n_x=-\frac{v}{s}\sum_{x\in\Lambda}f'(|x|_{ts})n_x=-\frac{v}{s}N_{f',ts}
\end{equation}
\\
Then, by \cite[lemma A.2]{FLS2} and  \lemref{lemma symmetrized taylor}, for any $n\in\mathbb{N}$, we can bound the first term on the right hand side of \eqref{derivative exp value N} 

\begin{align}\label{bound on comm}
    \Big|\Big\langle&\psi, i[H,N_{f,ts}]\psi\Big\rangle\Big|=\Big|\Big\langle\psi, \sum_{x,y\in\Lambda,\,x\neq y}J_{xy}\Big(f(|x|_{ts})-f(|y|_{ts})\Big)b_x^{*}b_y\psi\Big\rangle\Big|\\
    \leq& s^{-1}\sum_{\substack{x,y\in\Lambda,\\ \,x\neq y}}|J_{xy}||x-y|u(|x|_{ts})u(|y|_{ts})\big|\big\langle\psi, b_x^{*}b_y\psi\big\rangle\big|\\
     +&C\sum_{l=1}^{n-1}\sum_{m=l}^n \sum_{\und k\in \cK_{l,m}} \;\sum_{\und j\in \cJ_{\und k}}\frac{4^{m-l}s^{-m-1}}{(k_1-1)!.. (k_l-1)! }\sum_{\substack{x,y\in\Lambda,\\ \,x\neq y}}|J_{xy}||x-y|^{m+1}u_{J_l}(|x|_{ts})u_{J_l}(|y|_{ts})\big|\big\langle\psi, b_x^{*}b_y\psi\big\rangle\big|\nonumber\\
    +&C_f4^n\Big(\frac{1}{n!}+\sum_{l=2}^{n-1}\left(\frac{e}{2}\right)^l\sum_{m=l-1}^n  2^m\Big)s^{-n-1}\sum_{\substack{x,y\in\Lambda,\\ \,x\neq y}}|J_{xy}||x-y|^{n+1}\big|\big\langle\psi, b_x^{*}b_y\psi\big\rangle\big|,\nonumber
\end{align}
for any $\psi\in D(N^{\frac{1}{2}})$.\\
Using the Cauchy-Schwarz inequality we can bound the first term

\begin{align}
    \sum_{x,y\in\Lambda,\,x\neq y}&|J_{xy}||x-y|u(|x|_{ts})u(|y|_{ts})\big|\big\langle\psi, b_x^{*}b_y\psi\big\rangle\big|\nonumber\\
    =&  \sum_{x,y\in\Lambda,\,x\neq y}|J_{xy}||x-y|\big|\big\langle u(|x|_{ts})b_x\psi, u(|y|_{ts})b_y\psi\big\rangle\big|\nonumber\\
    \leq&  \:\Big(\sum_{x\in\Lambda}f'(|x|_{ts})\Big\langle \psi,b_x^{*}b_x\psi\Big\rangle\Big(\sum_{y\in\Lambda, \,y\neq x} |J_{xy}|||x-y|\Big)\Big)^{\frac{1}{2}}\nonumber\\
&\times\Big(\sum_{y\in\Lambda}f'(|x|_{ts})\Big\langle \psi,b_y^{*}b_y\psi\Big\rangle\Big(\sum_{x\in\Lambda, \,y\neq x} |J_{xy}|||x-y|\Big)\Big)^{\frac{1}{2}} \nonumber\\
\leq& \:\kappa\Big\langle\psi,\,N_{f'_{ts}}\psi\Big\rangle.\label{bound first order}
\end{align}
 We can apply the same reasoning to the higher order terms to obtain 
 
\begin{align}\label{bound higher order}
    \sum_{x,y\in\Lambda,\,x\neq y}&|J_{xy}||x-y|^{m}u_{J_l}(|x|_{ts})u_{J_l}(|y|_{ts})\big|\big\langle\psi, b_x^{*}b_y\psi\big\rangle\big|\leq\kappa^{(m)}\Big\langle\psi,\,N_{f'_{J_l,ts}}\psi\Big\rangle,
\end{align}
where we defined $f'_{J_l,ts}:=u^2_{J_l}(|x|_{ts})$.

Similarly we can bound the remainder term by

\begin{align}\label{bound remainder}
 \sum_{x,y\in\Lambda,\,x\neq y}|J_{xy}||x-y|^{n+1}\big|\big\langle\psi, b_x^{*}b_y\psi\big\rangle\big|\leq \kappa^{(n+1)}   \big\langle\psi,\,N\psi\big\rangle.
\end{align}
Applying inequalities \eqref{bound first order},  \eqref{bound higher order}, and \eqref{bound remainder} to \eqref{bound on comm} we obtain
\begin{align*}
     \Big|\Big\langle\psi, i[H,N_{f,ts}]\psi\Big\rangle\Big|\leq &\frac{\kappa}{s}\Big\langle\psi,\,N_{f'_{ts}}\psi\Big\rangle\\
     &+C\sum_{l=1}^{n-1}\sum_{m=l}^n \sum_{\und k\in \cK_{l,m}} \;\sum_{\und j\in \cJ_{\und k}}\frac{4^{m-l}\cdot s^{-m-1}}{(k_1-1)!\dots (k_l-1)! }\kappa^{(m)}\Big\langle\psi,\,N_{f'_{J_l,ts}}\psi\Big\rangle\nonumber\\
     &+C_f4^n\Big(\frac{1}{n!}+\sum_{l=2}^{n-1}\left(\frac{e}{2}\right)^l\sum_{m=l-1}^n  2^m\Big)s^{-n-1}\kappa^{(n+1)}   \big\langle\psi,\,N\psi\big\rangle.\nonumber
\end{align*}

Together with equations \eqref{derivative exp value N} and \eqref{deriv astlo} we obtain 

\begin{align}
    \frac{d}{dt}&\langle\: N_{f,ts}\rangle_t\nonumber\\
    \leq& \frac{\kappa-v}{s}\langle\: N_{f',ts}\rangle_t+C\sum_{l=1}^{n-1}\sum_{m=l}^n \sum_{\und k\in \cK_{l,m}} \;\sum_{\und j\in \cJ_{\und k}}\frac{4^{m-l}\cdot s^{-m-1}}{(k_1-1)!\dots (k_l-1)! }\kappa^{(m)}\big\langle N_{f'_{J_l,ts}}\big\rangle_t\nonumber\\
    &+C_f4^n\Big(\frac{1}{n!}+\sum_{l=2}^{n-1}\left(\frac{e}{2}\right)^l\sum_{m=l-1}^n  2^m\Big)s^{-n-1}\kappa^{(n+1)}   \big\langle N\big\rangle_t\nonumber\\
    =&\frac{\kappa-v}{s}\langle\: N_{f',ts}\rangle_t+C\sum_{l=1}^{n-1}\sum_{m=l}^n \sum_{\und k\in \cK_{l,m}} \;\sum_{\und j\in \cJ_{\und k}}\frac{4^{m-l}\cdot s^{-m-1}}{(k_1-1)!\dots (k_l-1)! }\kappa^{(m)}\big\langle N_{f'_{J_l,ts}}\big\rangle_t\label{before integrating}\\
    &+C_f4^n\Big(\frac{1}{n!}+\sum_{l=2}^{n-1}\left(\frac{e}{2}\right)^l\sum_{m=l-1}^n  2^m\Big)s^{-n-1}\kappa^{(n+1)}   \big\langle N\big\rangle_0,\nonumber
\end{align}
where the equality is due to the fact that $N$ commutes with the Hamiltonian. 
We conclude the proof by integrating both sides of equation \eqref{before integrating} from 0 to $t$.
\end{proof}

Iterating  \thmref{theorem bound on integral of exp of N} and bounding the sums obtained through this process we can derive a bound that only contains averages at the initial time of the number operator and of the ASTLO. For any positive integers $p$ and $n$ we can define the following sets
\begin{align*}
    &\cL_{p}=\{\und l=(l_1,\dots l_p)\in \N^p\text{ such that } 1\le l_i\le n-1- \sum_{j=1}^{i-1} l_j\quad\forall i\in[1,p]\},\\
    &\cM_{\und l, p}=\{\und m=(m_1,\dots,m_p)\in\N^p\text{ such that }l_i\le\ m_i\le n-\sum_{j=1}^{i-1}m_j\quad\forall i\in[1,p]\}
\end{align*}
For vectors $\und l\in\cL_{p}$ and $\und m\in\cM_{\und l,p}$, we can define the following quantities

\begin{align*}
   L_p=\sum_{i=1}^p l_i\qquad M_p=\sum_{i=1}^p m_i.
\end{align*}

\begin{corollary}[Iterated bound]\label{cor: iter bound}
For any $R>0$, range of the interactions, for all $f\in\cF$, $ n\geq1$, and $c>\kappa$ there exist two positive constants $C_{f,c}, C_{R}$ such that, for all $s>C_R$ and $t,\rho>0$, the following holds
\begin{align*}
    \int_0^t &\langle\: N_{f',rs}\rangle_r dr\\
    &\leq 4\kappa^{(0)}C_{f,c}t\left(\frac{C_{f,c}R }{s}\right)^n \\
    &\quad +C_{f,c}s\langle N_{f,ts}\rangle_0\nonumber\\
    &\quad +C_{f,c}\kappa^{(0)}R  t \big\langle N\big\rangle_0\Big(\left(\frac{4R}{s}\right)^n\frac{1}{n!}+\left(\frac{4eR}{s}\right)^{n}\Big)\nonumber\\
    &\quad +(\kappa^{(0)})^2 Rt\left(\frac{4R}{s}\right)^n\big\langle N\big\rangle_0\sum_{a=1}^nC_{f,c}^a\left(1-\frac{R}{s}\right)^{-a}\left(\frac{4s}{3eR}-1\right)^{-a}\nonumber\\
    &\quad +(\kappa^{(0)})^2R t\left(\frac{8R}{s}\right)^n \big\langle N\big\rangle_0\sum_{a=1}^n C_{f,c}^a\left(1-\frac{R}{2s}\right)^{-a}\left(\frac{8s}{3eR}-1\right)^{-a}\nonumber\\
    &\quad +\kappa^{(0)}s\sum_{a=0}^{n-1} C_{f,c}^{a+1}\sum_{\und l\in\cL_{a}}4^{-L_{a}}\sum_{\und m\in\cM_{\und l,a}}\left( \frac{4R}{s}\right)^{M_{a}}\\
    &\qquad\times\sum_{t=1}^{a}\;\sum_{\und k^{(t)}\in\cK_{l_t,m_t}} \;\sum_{\und j^{(t)}\in\cJ_{\und k^{(t)}}}\frac{1}{(k_1^{(1)}-1)!\dots (k_{l_{a}}^{(a)}-1)! }\langle N_{f_{J_{l_1}+..J_{l_{a}}},ts}\rangle_0 ,\nonumber
\end{align*}  
  where for $n=1$ the sums should be dropped.
\end{corollary}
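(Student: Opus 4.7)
The plan is to derive the corollary by iterating the one-step recursive monotonicity estimate supplied by the preceding theorem. Taking $v=c>\kappa$ there, and using $\langle N_{f,ts}\rangle_t\geq 0$ to move the leading drift term to the left and then dividing by $(c-\kappa)/s$, we obtain a one-step bound of the form
\begin{align*}
\int_0^t\langle N_{f',rs}\rangle_r\,dr\leq C_{f,c}\Bigl[s\langle N_{f,ts}\rangle_0+\mathrm{rem}(f,n)+\sum_{l,m,\und k,\und j}\frac{4^{m-l}s^{-m}\kappa^{(m)}}{(k_1-1)!\cdots(k_l-1)!}\int_0^t\langle N_{f'_{J_l,rs}}\rangle_r\,dr\Bigr],
\end{align*}
with $C_{f,c}=C/(c-\kappa)$ and $\mathrm{rem}(f,n)$ collecting the order-$n$ remainder of the theorem. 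The decisive observation is that $f_{J_l}\in\mathcal{F}$ (noted in the paragraph after (S3)), so the same one-step bound applies verbatim to each nested integral with $f_{J_l}$ in place of $f$.

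I then iterate this substitution up to $n$ times. At the $i$-th iteration, each surviving nested integral is either (a) replaced by a new initial-state term $s\langle N_{f_{J_{l_1}+\cdots+J_{l_i}},ts}\rangle_0$ at depth $i$ (``freeze on initial value''), (b) terminated on the remainder $\mathrm{rem}(f_{J_{l_1}+\cdots+J_{l_i}},n)$ at depth $i$ (``freeze on remainder''), or (c) replaced by a still-deeper nested integral indexed by a fresh tuple $(l_{i+1},m_{i+1},\und k^{(i+1)},\und j^{(i+1)})$. Each such step multiplies by a factor $C_{f,c}\,4^{m_i-l_i}\,s^{-m_i}\kappa^{(m_i)}/[(k_1^{(i)}-1)!\cdots(k_{l_i}^{(i)}-1)!]$, and the admissible tuples range exactly over the index sets $\cL_a$ and $\cM_{\und l,a}$ defined just above the corollary. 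The finite-range hypothesis enters through $\kappa^{(m)}\leq R^m\kappa^{(0)}$ (immediate from $|x-y|\leq R$ on $\mathrm{supp}\,h$), converting every $s^{-m}\kappa^{(m)}$ into $\kappa^{(0)}(R/s)^m$ and producing the geometric-type ratios that must be uniformly below one for convergence.

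The final step is to collect and estimate the assembled remainder sums across depths. The branch in which the iteration survives to maximal depth $n$ produces the terms proportional to $(4R/s)^n/n!$ and $(4eR/s)^n$: the factorial originates in the Lagrange remainder of the symmetrized Taylor expansion lemma, while the $e^n$ comes from the combinatorial estimate $\sum_{\und k\in\cK_{l,m}}\prod_i 1/(k_i-1)!\lesssim(2e)^{l-1}$ already exploited there. The two families of terms with prefactors $(4R/s)^n$ and $(8R/s)^n$, carrying the sums $\sum_{a=1}^n C_{f,c}^a(1-R/s)^{-a}(4s/(3eR)-1)^{-a}$ and its $s\to s/2$ analog, come from summing over the depth $a$ at which the iteration exits on a remainder branch; the inner denominators are precisely the closed forms of $a$-fold geometric sums with ratios $3eR/(4s)$ and $3eR/(8s)$. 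These ratios being strictly less than one is exactly the requirement $s>C_R$ with $C_R$ a fixed multiple of $R$.

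The main technical obstacle is purely combinatorial: tracking the six distinct ``exit routes'' through the iteration tree (exit depth, remainder type, initial-state type) and reassembling them into the compact form stated. The crucial analytic input is the factorial decay $1/[(k_1^{(i)}-1)!\cdots(k_{l_i}^{(i)}-1)!]$ inherited from the symmetrized Taylor expansion, which defeats the super-exponential growth of the index sets $\cK_{l,m}$ and $\cJ_{\und k}$ and ensures the iterated geometric sums close uniformly in $n$, leaving only the explicit bounds stated in the corollary.
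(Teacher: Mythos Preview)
Your proposal follows essentially the same route as the paper: rearrange the one-step estimate of the preceding theorem into a bound on $\int_0^t\langle N_{f',rs}\rangle_r\,dr$, iterate it using $f_{J_l}\in\cF$, invoke the finite-range bound $\kappa^{(m)}\leq R^m\kappa^{(0)}$, and close the multi-sums via the factorial-beating estimate $\sum_{\und k\in\cK_{l,m}}\sum_{\und j\in\cJ_{\und k}}\prod_i (k_i-1)!^{-1}\leq (3e)^l$ together with geometric summation. The tracking of index sets $\cL_a$, $\cM_{\und l,a}$ and the identification of the depth-$a$ ``exit on remainder'' branches with the two $(4R/s)^n$ and $(8R/s)^n$ sums are correct.

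There is one bookkeeping slip worth flagging. You attribute the terms $(4R/s)^n/n!$ and $(4eR/s)^n$ to ``the branch in which the iteration survives to maximal depth $n$.'' In fact those come from the order-$n$ Lagrange remainder at depth zero (the $a=0$ contribution of $\mathrm{rem}(f,n)$), not from the surviving integral. The integral that actually survives $n$ iterations is the one responsible for the \emph{first} term $4\kappa^{(0)}C_{f,c}t(C_{f,c}R/s)^n$, and bounding it requires a piece of input you do not mention: the uniform estimate $\|f_n'\|_\infty\leq C_f$ from the derivative lemma, which caps $\int_0^t\langle N_{f'_n,rs}\rangle_r\,dr\leq C_f t\langle N\rangle_0$ (in fact $\leq C_f t$ pointwise in the multiplier). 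Without this uniform-in-$n$ cap the surviving branch would not close. Once you patch this attribution and invoke the derivative bound explicitly, your argument matches the paper's.
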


\begin{proof}

Since, under these assumptions,  \thmref{theorem bound on integral of exp of N} holds and $\kappa - v>0$, after dropping $\langle N_{f,ts}\rangle_t$ and multiplying by $s(v-\kappa)^{-1}$ in inequality \eqref{lemma before recursion}, it follows that
    
\begin{align}
    \int_0^t \langle\: N_{f',rs}\rangle_r dr\leq C_{f,c}&\Bigg(\sum_{l=1}^{n-1}\sum_{m=l}^n \;\sum_{\und k\in \cK_{l,m}} \;\sum_{\und j\in \cJ_{\und k}}\frac{C\cdot 4^{m-l}\cdot s^{-m}}{(k_1-1)!\dots (k_l-1)! }\kappa^{(m)}\int_0^t\big\langle N_{f'_{J_l,rs}}\big\rangle_rdr \\
    &+ s\langle N_{f,ts}\rangle_0+4^n\Big(\frac{1}{n!}+\sum_{l=2}^{n-1}\left(\frac{e}{2}\right)^l\sum_{m=l-1}^n  2^m\Big)s^{-n}\kappa^{(n+1)}  t \big\langle N\big\rangle_0\Bigg).\nonumber
\end{align}
 We notice that this inequality holds for any $n\in\mathbb{N}$ and any $f\in\mathcal{E}$. This allows us to iterate it on the term $\int_0^t\big\langle N_{f'_{J,rs}}\big\rangle dr$ with $n-m_1$ instead that with $n$:
 \begin{align*}
    &\int_0^t \langle\: N_{f',rs}\rangle_r dr\\
    &\leq  C_{f,c}^2\sum_{\und l\in \cL_2}\;\sum_{\und m\in\cM_{\und l,2}} \; \sum_{t=1,2}\;\sum_{\und k^{(t)}\in \cK_{l_t,m_t}} \;\sum_{\und j^{(t)}\in \cJ_{\und k^{(t)}}}\frac{4^{{m_1}+m_2-{l_1}-l_2}\cdot s^{-{m_1}-m_2}}{(k_1^{(1)}-1)!\dots (k_{l_2}^{(2)}-1)! }\\
    &\qquad \times\kappa^{({m_1})}\kappa^{({m_2})}\int_0^t\big\langle N_{f'_{J_{l_1}+J_{l_2},rs}}\big\rangle_rdr\nonumber \\
    &+  C_{f,c}\sum_{{l_1}=1}^{n-1}\sum_{{m_1}={l_1}}^n \sum_{\und k^{(1)}\in \cK_{l_1,m_1}} \;\sum_{\und j^{(1)}\in \cJ_{\und k^{(1)}}}\frac{ 4^{{m_1}-l_1}\cdot s^{-{m_1}}}{(k_1^{(1)}-1)!\dots (k_{l_1}^{(1)}-1)! }\kappa^{({m_1})}\nonumber\\
    &\qquad\times\left( s\langle N_{f_{J_{l_1}},ts}\rangle_0+4^{n-m_1}\Big(\frac{1}{(n-m_1)!}+\sum_{{l_2}=2}^{n-1-l_1}\left(\frac{e}{2}\right)^{l_2}\sum_{{m_2}={l_2}-1}^{n-m_1}  2^{m_2}\Big)s^{-n}\kappa^{(n+1)}  t \big\langle N\big\rangle_0\right)\nonumber\\
    &+ C_{f,c}s\langle N_{f,ts}\rangle_0+C_{f,c}4^n\Big(\frac{1}{n!}+\sum_{{l_1}=2}^{n-1}\left(\frac{e}{2}\right)^{l_1}\sum_{{m_1}={l_1}-1}^n  2^{m_1}\Big)s^{-n}\kappa^{(n+1)}  t \big\langle N\big\rangle_0.\nonumber
\end{align*}
We used the fact that $m_1\geq l_1$ to bound
\begin{align*}
    \sum_{l_2=1}^{n-1-m_1}1\leq \sum_{l_2=1}^{n-1-l_1}1.
\end{align*}
  
 Now we iterate this procedure $n$ times, for $n-m_1\dots-m_a$ at the a-th step of the iteration, then $l_1+..+l_n=n-1$, $m_1+...+m_n=n$. This way, we obtain
\begin{align}\label{after n it}
  &\int_0^t \langle\: N_{f',rs}\rangle_r dr\\
    &\leq  C_{f,c}^n 4 \kappa^{(0)}s^{-n}R^n\int_0^t\big\langle N_{f'_n,rs}\big\rangle_rdr\nonumber \\
    &+ \sum_{a=0}^{n-1} C_{f,c}^{a+1}\sum_{\und l\in\cL_{a}}\sum_{\und m\in\cM_{\und l,a}}\sum_{t=1}^{a}\;\sum_{\und k^{(t)}\in\cK_{l_t,m_t}} \;\sum_{\und j^{(t)}\in\cJ_{\und k^{(t)}}}\frac{ 4^{{M_{a}}-L_{a}}\cdot s^{-{M_{a}}}}{(k_1^{(1)}-1)!\dots (k_{l_{a}}^{(a)}-1)! }\kappa^{(0)}R^{M_{a}}\nonumber\\
    &\quad\times\left( s\langle N_{f_{J_{l_1}+..J_{l_{a}}},ts}\rangle_0+4^{n-M_{a}}\Big(\frac{1}{(n-M_{a})!}+\sum_{{l_{a+1}}=2}^{n-1-L_{a}}\left(\frac{e}{2}\right)^{l_{a+1}}\sum_{{m_{a+1}}={l_{a+1}}}^{n-M_{a}}  2^{m_{a+1}}\Big)s^{-n}\kappa^{(0)}R^{n+1}  t \big\langle N\big\rangle_0\right)\nonumber\\
    &+ C_{f,c}s\langle N_{f,ts}\rangle_0+C_{f,c}4^n\Big(\frac{1}{n!}+\sum_{{l_1}=2}^{n-1}\left(\frac{e}{2}\right)^{l_1}\sum_{{m_1}={l_1}-1}^n  2^{m_1}\Big)s^{-n}\kappa^{(0)}R^{n+1}  t \big\langle N\big\rangle_0.  \nonumber
\end{align}
  where we used the fact that $\kappa^{(p)}\leq \kappa^{(0)}R^p$.
Now we want to track the $n$ dependency in the expression in order to understand its behaviour as $n\to\infty$. We start with the first summand
  \begin{align*}
   C_{f,c}^n 4 \kappa^{(0)}s^{-n}R^n\int_0^t\big\langle N_{f'_n,rs}\big\rangle_rdr\leq 4\kappa^{(0)}\left(\frac{C_{f,c}R }{s}\right)^nC_{f}t, 
\end{align*}
  where we used the result from  \lemref{Upper bounds on the derivatives}.
We will use the following elementary estimate on the geometric sums appearing in \eqref{after n it}
 \begin{align}\label{geometric series}
     \sum_{n=a}^b c^n\leq \begin{cases}
     c^a(1-c)^{-1},\qquad &0<c<1\\
     c^b (c-1)^{-1},\qquad &c>1.
    \end{cases}
 \end{align}
 
 We consider the remaining terms in \eqref{after n it}. We start with the remainder term

    \begin{align*}
    C_{f,c}4^n&\Big(\frac{1}{n!}+\sum_{{l_1}=2}^{n-1}\left(\frac{e}{2}\right)^{l_1}\sum_{{m_1}={l_1}-1}^n  2^{m_1}\Big)s^{-n}\kappa^{(0)}R^{n+1}  t \big\langle N\big\rangle_0\\
    \leq &C_{f,c}\left(\frac{4R}{s}\right)^n\Big(\frac{1}{n!}+2^n\sum_{{l_1}=2}^{n-1}\left(\frac{e}{2}\right)^{l_1}\Big)\kappa^{(0)}R  t \big\langle N\big\rangle_0\\
    \leq &C_{f,c}\left(\frac{4R}{s}\right)^n\Big(\frac{1}{n!}+C2^n\left(\frac{e}{2}\right)^{n}\Big)\kappa^{(0)}R  t \big\langle N\big\rangle_0\\
     = &C_{f,c}\Big(\left(\frac{4R}{s}\right)^n\frac{1}{n!}+C\left(\frac{4eR}{s}\right)^{n}\Big)\kappa^{(0)}R  t \big\langle N\big\rangle_0,
\end{align*}
where $C=\left(\frac{e}{2}-1\right)^{-1}$. Now we consider the term
\begin{align}\label{first big term after n}
    \sum_{a=0}^{n-1} &C_{f,c}^{a+1}\sum_{\und l\in\cL_{a}}\sum_{\und m\in\cM_{\und l,a}}\sum_{t=1}^{a}\;\sum_{\und k^{(t)}\in\cK_{l_t,m_t}} \;\sum_{\und j^{(t)}\in\cJ_{\und k^{(t)}}}\frac{ 4^{{M_{a}}-L_{a}} s^{-{M_{a}}}}{(k_1^{(1)}-1)!\dots (k_{l_{a}}^{(a)}-1)! }\kappa^{(0)}R^{M_{a}}\nonumber\\
    &\quad\times\frac{4^{n-M_{a}}}{(n-M_{a})!}s^{-n}\kappa^{(0)}R^{n+1}  t \big\langle N\big\rangle_0\nonumber\\
    &=\left(\frac{4R}{s}\right)^n(\kappa^{(0)})^2 Rt\big\langle N\big\rangle_0\sum_{a=0}^{n-1} C_{f,c}^{a+1}\sum_{\und l\in\cL_{a}}\sum_{\und m\in\cM_{\und l,a}}\frac{ 4^{{M_{a}}-L_{a}} R^{M_{a}}}{(4s)^{M_{a}}(n-M_{a})!}\\
    &\quad\times\sum_{t=1}^{a}\;\sum_{\und k^{(t)}\in\cK_{l_t,m_t}} \;\sum_{\und j^{(t)}\in\cJ_{\und k^{(t)}}}\frac{ 1}{(k_1^{(1)}-1)!\dots (k_{l_{a-1}}^{(a-1)}-1)! }.\nonumber
    \end{align}
 We focus first on the sums over the $k$'s and the $j$'s and we find
\begin{align}
    &\sum_{\und k\in\cK_{l,m}} \;\sum_{\und j\in\cJ_{\und k}}\frac{1}{(k_1-1)!\dots (k_l-1)! }\nonumber\\
    &=\sum_{\und k\in\cK_{l,m}}\frac{(k_1+1)\cdot\dots\cdot (k_l+1)}{(k_1-1)!\dots (k_l-1)! }\nonumber\\
   &\leq\sum\limits_{\substack{1\leq k_i\leq m,\\ \forall 1\leq i\leq l}} \frac{(k_1+1)\cdot\dots\cdot (k_l+1)}{(k_1-1)!... (k_l-1)! } \nonumber\\
   &=\left(\sum_{k=1}^m\frac{k+1}{(k-1)!}\right)^l\nonumber \\
   &=\left(\sum_{k=1}^m\frac{k}{(k-1)!}+\sum_{k=1}^m\frac{1}{(k-1)!}\right)^l\nonumber \\
   &\leq \left(3e\right)^l. \label{simplified sum}
\end{align}
 Applying \eqref{simplified sum} we can bound \eqref{first big term after n} by

 \begin{align*}
      \left(\frac{4R}{s}\right)^n&(\kappa^{(0)})^2 Rt\big\langle N\big\rangle_0\sum_{a=0}^{n-1} C_{f,c}^{a+1}\sum_{\und l\in\cL_{a}}\sum_{\und m\in\cM_{\und l,a}}\frac{ 4^{{M_{a}}} R^{M_{a}}}{(4s)^{M_{a}}(n-M_{a})!}\left(\frac{3e}{4}\right)^{L_{a}}\\
    &\leq \left(\frac{4R}{s}\right)^n(\kappa^{(0)})^2 Rt\big\langle N\big\rangle_0\sum_{a=1}^nC_{f,c}^{a}\left(1-\frac{R}{s}\right)^a\sum_{\und l\in\cL_{a-1}}\left(\frac{3eR}{4s}\right)^{L_{a-1}}\\
     &\leq \left(\frac{4R}{s}\right)^n(\kappa^{(0)})^2 Rt\big\langle N\big\rangle_0\sum_{a=1}^nC_{f,c}^a\left(1-\frac{R}{s}\right)^{-a}\left(1-\frac{3eR}{4s}\right)^{-a}\left(\frac{3eR}{4s}\right)^{a}\\
     &= \left(\frac{4R}{s}\right)^n(\kappa^{(0)})^2 Rt\big\langle N\big\rangle_0\sum_{a=1}^nC_{f,c}^a\left(1-\frac{R}{s}\right)^{-a}\left(\frac{4s}{3eR}-1\right)^{-a},
 \end{align*}\\
where we assumed $4s>3eR$.
In the same way we can bound the following term

\begin{align*}
    \sum_{a=0}^{n-1} &C_{f,c}^{a+1}\sum_{\und l\in\cL_{a}}\sum_{\und m\in\cM_{\und l,a}}\sum_{t=1}^{a}\;\sum_{\und k^{(t)}\in\cK_{l_t,m_t}} \;\sum_{\und j^{(t)}\in\cJ_{\und k^{(t)}}}\frac{ 4^{{M_{a}}-L_{a}}\cdot s^{-{M_{a}}}}{(k_1^{(1)}-1)!\dots (k_{l_{a}}^{(a)}-1)! }\kappa^{(0)}R^{M_{a}}\\
    &\quad\times4^{n-M_{a}}\sum_{{m_{a+1}}={l_{a+1}}-1}^{n-M_{a}}  2^{m_{a+1}}s^{-n}\kappa^{(0)}R^{n+1}  t \big\langle N\big\rangle_0\nonumber\\
  &\leq(\kappa^{(0)})^2R\left(\frac{4R}{s}\right)^n t \big\langle N\big\rangle_0\sum_{a=0}^{n-1} C_{f,c}^{a+1}\sum_{\und l\in\cL_{a}}\sum_{\und m\in\cM_{\und l,a}} \left(\frac{R}{2s}\right)^{M_{a}}\left(\frac{3e}{4}\right)^{L_{a}}2^n\\
  &\leq(\kappa^{(0)})^2R t\left(\frac{8R}{s}\right)^n \big\langle N\big\rangle_0\sum_{a=0}^{n-1} C_{f,c}^{a+1}\left(1-\frac{R}{2s}\right)^{-a-1}\sum_{\und l\in\cL_{a}}\left(\frac{3eR}{8s}\right)^{L_{a}}\\
   &\leq(\kappa^{(0)})^2R t\left(\frac{8R}{s}\right)^n \big\langle N\big\rangle_0\sum_{a=1}^n C_{f,c}^a\left(1-\frac{R}{2s}\right)^{-a}\left(\frac{8s}{3eR}-1\right)^{-a},
\end{align*}
and this concludes the proof.

\end{proof}

\subsection{Relation Between the Number Operator and the ASTLO}\label{main res}

In this chapter we derive a relation between the ASTLOs and the number operator. Applying these results to those obtained in the previous chapter, we derive a bound on the evolved ASTLO that depends only on the ASTLO and the number operator at initial times. Then, by understanding the decay of the terms in this expression as $n$ goes to infinity, we obtain a bound that only involves the average at initial time of the number operator on a given finite set. Together with the result obtained at the beginning of the chapter, this allows us to derive the bound on the particle propagation.

\subsubsection{Number Operator and ASTLO}

We now compare the ASTLO and the number operator on certain finite regions. The specific properties of the function in $\cF$ allows us to derive the following bounds.
\begin{theorem}[Connection between number operator and ASTLOs]\label{N and ASLTO}
Assuming that the initial state has controlled density, so that there exists a constant $\delta>0$ such that $\delta ^{-1}\le\omega(n_x)\le\delta$ for every $x\in\Lambda$, for any $c>\kappa$, $s>0$, and $v'>v$, the following holds \\
1) There exists a function $f\in \cF$ such that  
\begin{equation}
    \big\langle N_{|x|<b}\big\rangle_t < 2 \big\langle N_{f_{ts}}\big\rangle_t \label{ineq N Nf at time t}
\end{equation}
2) For any  function $f\in\mathcal{E}$ there exists a constant $C_{f,d}$ such that 
        \begin{equation}
        \big\langle N_{f_{a,0s}}\big\rangle_0\le C_{f,d}a^{3d+1}\delta^2\left(\frac{s^d}{\rho^d}+1\right)\big\langle N_{|x|<\rho}\big\rangle_0 \qquad \text{for any }a\geq 0,\label{ineq N Nf at time 0} 
    \end{equation}

  for any $b>0$ and $\rho=v't+b$, where $f_a\in\mathcal{E} $ is such that $f'_a=\left(\check\varphi^{(a)}\right)^4$ where $\varphi\in C_c^{\infty}(B_1(0))$ is such that $f'=\check\varphi^4$.
\end{theorem}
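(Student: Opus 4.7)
The plan is to handle the two parts separately: Part 1 is a short geometric computation that exploits the freedom to choose $f\in\cF$, while Part 2 is a counting estimate combining Schwartz-class decay of the bandlimited functions entering the definition of $f_a$ with the controlled density assumption.

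\textbf{Part 1.} I would use a symmetric cutoff. Choosing $\varphi\in C_c^\infty(B_1(0))$ real and even makes $\check\varphi^4$ nonnegative and even, so that the induced $f\in\cF$ satisfies $f(0)=1/2$ and is strictly increasing. For $|x|<b$, $t\ge 0$, and $\rho=v't+b$ with $v'>v$,
\begin{equation*}
|x|_{ts}=\frac{\rho-vt-|x|}{s}>\frac{(v'-v)t}{s}\ge 0,
\end{equation*}
so $f(|x|_{ts})>1/2$. Since $f\ge 0$ and $n_x\ge 0$, discarding the contributions from $|x|\ge b$ yields $\langle N_{f,ts}\rangle_t>\tfrac{1}{2}\langle N_{|x|<b}\rangle_t$, which is the desired inequality.

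\textbf{Part 2.} At $t=0$ the argument of $f_a$ is $(\rho-|x|)/s$. I would split the sum as $I_{\mathrm{in}}+I_{\mathrm{out}}$ according to $|x|<\rho$ and $|x|\ge\rho$. Since $f_a\le 1$, the interior piece is directly bounded by $I_{\mathrm{in}}\le\langle N_{|x|<\rho}\rangle_0$. For the exterior, the argument of $f_a$ is $\le 0$, so I would use decay of $f_a$ on the negative axis. Because $\widehat{\check\varphi^{(a)}}(\xi)=(i\xi)^a\varphi(\xi)$ is compactly supported in $B_1(0)$, integration by parts combined with Leibniz and $|\xi^{a-k}|\le 1$ on $B_1$ yields the pointwise estimate $(1+|t|)^N|\check\varphi^{(a)}(t)|\lesssim (1+a)^N$ for every $N\ge 0$. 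Raising to the fourth power and integrating from $-\infty$ to $y<0$ then produces a polynomial decay estimate $f_a(y)\lesssim C_{f,a}(1+|y|)^{-M}$. Plugging this into $I_{\mathrm{out}}$, using $\omega(n_x)\le\delta$, and converting the lattice sum to a radial integral gives, once $M$ is chosen large enough for convergence (forcing $M>d$),
\begin{equation*}
I_{\mathrm{out}}\lesssim \delta\, C_{f,a}(s^d+\rho^d).
\end{equation*}
Finally, the lower density bound $\omega(n_x)\ge\delta^{-1}$ gives $\rho^d\lesssim \delta\langle N_{|x|<\rho}\rangle_0$, which produces the second factor of $\delta$ and repackages the estimate into the claimed form $\delta^2(s^d/\rho^d+1)\langle N_{|x|<\rho}\rangle_0$ after combining with $I_{\mathrm{in}}$.

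\textbf{Main obstacle.} The hard part is tracking the $a$-dependence carefully enough to obtain the polynomial bound $a^{3d+1}$. The exponent should emerge from combining four sources: the $(1+a)^N$ growth of the decay constant of $\check\varphi^{(a)}$, which gets quadrupled by the fourth-power step and then amplified by the requirement $M\gtrsim d$ needed to make the radial lattice sum convergent; the $d$-dimensional volume factor appearing through $(r+\rho)^{d-1}$ in the radial integration; and the possible growth of the normalizer $\|u_a^2\|_1^{-1}$ in the definition of $f_a$. Any looser choice of $M$ or sloppier Leibniz expansion forfeits the polynomial $a$-dependence, so the principal work is in bookkeeping these powers. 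Apart from that, the remaining steps are elementary.
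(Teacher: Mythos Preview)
Your proposal is correct and follows essentially the same route as the paper: Part~1 by arranging $f(0)\ge 1/2$ (the paper does this by shifting an arbitrary element of $\cF$, you by taking $\varphi$ even), and Part~2 by splitting at $|x|=\rho$, using Fourier-side Leibniz on $\partial_\xi^N(\xi^a\varphi)$ to extract polynomial-in-$a$ decay of $\check\varphi^{(a)}$, and invoking the two-sided density bounds to pass between $|B_\rho|$ and $\langle N_{|x|<\rho}\rangle_0$. The one execution difference is that for the outer piece the paper first integrates by parts in the spatial variable to pass from $f_a$ to $f_a'=(\check\varphi^{(a)})^4$, then peels off three factors via $(\check\varphi^{(a)})^4\le\|\check\varphi^{(a)}\|_\infty^3\,|\check\varphi^{(a)}|$ and inserts the weight $(1+v^2)^d$; the resulting polynomial in $v$ has degree $3d$, and the bound $\|v^l\check\varphi^{(a)}\|_\infty\lesssim a^{l+1}$ is what produces the specific exponent $3d+1$. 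One minor correction: in the statement $f_a$ is \emph{unnormalized} ($f_a'=(\check\varphi^{(a)})^4$ without dividing by $\|u_a^2\|_1$), so your normalizer concern is moot and the inner bound should read $f_a\le C_f$ uniformly in $a$ (which follows from $\|\check\varphi^{(a)}\|_\infty\le\|\xi^a\varphi\|_1\le\|\varphi\|_1$) rather than $f_a\le 1$.
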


\begin{proof}
    We start by showing \eqref{ineq N Nf at time t}, we want to prove that there exists a function $f\in\cF$ such that
    \begin{equation*}
        \sum_{|x|<b}\langle n_x\rangle_t <2 \sum_{x\in\Lambda}f\left(\frac{\rho-vt-|x|}{s}\right)\langle n_x\rangle_t.
    \end{equation*}
    This holds if 
    \begin{equation*}
        f\left(\frac{\rho-vt-|x|}{s}\right)>\frac{1}{2}\qquad \text{for}\;|x|\leq b.
    \end{equation*}
    Given $\rho=v't+b$ with $v'>v$ and by the monotonicity of the function $f$, the last equation holds if
    \begin{equation}\label{condition on f 2}
        f\left(\frac{b-|x|}{s}\right)>\frac{1}{2}\qquad \text{for}\;|x|\leq b.
    \end{equation}
    If $f(0)\geq 1/2$, condition \eqref{condition on f 2} holds. We observe that we can always find a function in $\cF$ that respects this condition, just by shifting a given function in the class.\\
    Now we show \eqref{ineq N Nf at time 0}. Since we assume that the initial state has controlled density it holds
    \begin{equation*}
        \big\langle N_{|x|<\rho}\big\rangle_0 \ge\delta^{-1}\sum_{|x|<\rho}1=\delta^{-1}|B_{\rho}|,
    \end{equation*}
 for some positive constant $\rho$. So, what we want to show is that for any function $f\in\cF$ there exist a constant $C\equiv C({f,d,\rho,s,\delta})>0$ such that
    \begin{align*}
        &\sum_{x\in\Lambda}f_a\left(\frac{\rho-|x|}{s}\right)\langle n_x\rangle_0\le \delta^{-1} Ca^{3d+1}|B_{\rho}|\\
        \Longleftarrow
 \qquad & \sum_{x\in\Lambda}f_a\left(\frac{\rho-|x|}{s}\right)\le \delta^{-2}Ca^{3d+1}|B_{\rho}|,
    \end{align*}
    for every integer $a\geq0$. 
    We start by splitting the sum in the following way
    \begin{align}
         \sum_{x\in\Lambda}f_a\left(\frac{\rho-|x|}{s}\right)&= \sum_{|x|\leq \rho}f_a\left(\frac{\rho-|x|}{s}\right)+\sum_{|x|\geq \rho}f_a\left(\frac{\rho-|x|}{s}\right)\nonumber\\
         &\leq C_f\sum_{|x|\leq \rho}1+\sum_{|x|\geq \rho}f_a\left(\frac{\rho-|x|}{s}\right)\nonumber\\
         &=C_{f,d}|B_{\rho}|+\sum_{|x|\geq \rho}f_a\left(\frac{\rho-|x|}{s}\right),\label{eq for Nf at 0}
    \end{align}
    where we used the fact that $f_a\leq C_f$ for every $a$ by  \lemref{Upper bounds on the derivatives}. The first term of \eqref{eq for Nf at 0} is exactly as we wanted, so we can focus on the second. We bound it by the integral associated that we write in polar coordinates
    \begin{align}
        \sum_{|x|\geq \rho}f_a\left(\frac{\rho-|x|}{s}\right)&\leq |S_{d-1}|\int_{\rho}^{\infty}f_a\left(\frac{\rho-z}{s}\right)z^{d-1}dz\nonumber\\
        &=|S_{d-1}|s^d\int_{-\infty}^{0}f_a\left(v\right)(\frac{\rho}{s}-v)^{d-1}dv\nonumber\\
        &=|S_{d-1}|\frac{s^d}{d}\Big(\int_{-\infty}^{0}f'_a\left(v\right)(\frac{\rho}{s}-v)^{d}dv+\Big[f_a(v)(\frac{\rho}{s}-v)^{d}\Big]^0_{\infty}\Big),\label{int plus boundary}
    \end{align}

    where we obtained the first two equality by substitution and the last one integrating by parts.  We notice that, since $f'_a=\left(\check\varphi^{(a)}\right)^4$ and $\check\varphi^{(a)}$ is a Schwartz function, decaying faster than any polynomial, the boundary term becomes

    \begin{equation*}
        |S_{d-1}|\frac{s^d}{d}\Big[f_a(v)(\frac{\rho}{s}-v)^{d}\Big]^0_{\infty}=|S_{d-1}|\frac{s^d}{d}f_a(0)\left(\frac{\rho}{s}\right)^{d}\le C_{f,d} |B_{\rho}|
    \end{equation*}

    So we can focus on the integral in \eqref{int plus boundary}
    \begin{align*}
        |S_{d-1}|&\frac{s^d}{d}\int_{-\infty}^{0}f'_a\left(v\right)(\frac{\rho}{s}-v)^{d}dv\\
        &=-|S_{d-1}|\frac{s^d}{d}\int_{-\infty}^{0}\left(\check\varphi^{(a)}\right)^4(\frac{\rho}{s}-v)^{d}dv\\
        &\leq -|S_{d-1}|\frac{s^d}{d}||\check\varphi^{(a)}||^3_{\infty}\int_{-\infty}^{0}\abs{\check\varphi^{(a)}}(\frac{\rho}{s}-v)^{d}dv\\
        &= -|S_{d-1}|\frac{s^d}{d}||\check\varphi^{(a)}||^3_{\infty}\int_{-\infty}^{0}\abs{\check\varphi^{(a)}}(\frac{\rho}{s}-v)^{d}\frac{(1+v^2)^d}{(1+v^2)^d}dv\\
    &\leq -|S_{d-1}|\frac{s^d}{d}||\check\varphi^{(a)}||^3_{\infty}||\check\varphi^{(a)}(\frac{\rho}{s}-v)^{d}(1+v^2)^d||_{\infty}\int_{-\infty}^{0}\frac{1}{(1+v^2)^d}dv.
    \end{align*}
    Using the fact that $\varphi\in C_c^{\infty}(B_1(0))$, as well as its derivatives, we obtain
    
        \begin{equation*}
            ||\check\varphi^{(a)}||^3_{\infty}\leq C||\xi^a\varphi||_1^3=C||\varphi||_1^3=C_f.
        \end{equation*}
    Now we notice that    
       \begin{align*}
           ||\check\varphi^{(a)}&v^l||_{\infty}\\
           &\leq C||\partial_{\xi}^l\left(\xi^a\varphi\right)||_1\\
           &\leq C\sum_{i=0}^l \binom{l}{i}||\partial^{(i)}_{\xi}\xi^a\cdot\partial^{(l-i)}_{\xi}\varphi||_1\\
           &= C\sum_{i=0}^l \binom{l}{i}||\frac{a!}{(a-i-1)!}\xi^{a-i}\cdot\partial^{(l-i)}_{\xi}\varphi||_1\\
           &\leq Ca^{l+1}\sum_{i=0}^l \binom{l}{i}||\partial^{(l-i)}_{\xi}\varphi||_1\\
           &\leq C_{l,f}a^{l+1}.
       \end{align*}
  This way, by linearity, 
  \begin{align*}
      ||\check\varphi^{(a)}(\frac{\rho}{s}-v)^{d}(1+v^2)^d||_{\infty}\leq C_{d,f}a^{3d+1}.
  \end{align*}
 And since $\int_{-\infty}^{0}\frac{1}{(1+v^2)^d}dr\leq C_{d}$ we can obtain the following bound
 \begin{align*}
      \sum_{|x|\geq \rho}f_a\left(\frac{\rho-|x|}{s}\right)\leq C_{f,d}|S_{d-1}|\frac{s^d}{d}a^{3d+1}.
 \end{align*}
 This implies

 \begin{align*}
     \sum_{x\in\Lambda}f_a\left(\frac{\rho-|x|}{s}\right)&\leq \sum_{|x|\geq \rho}f_a\left(\frac{\rho-|x|}{s}\right)+\sum_{|x|\leq \rho}f_a\left(\frac{\rho-|x|}{s}\right)\\
     &\leq C_{f,d}|S_{d-1}|\frac{s^da^{3d+1}}{d}+C_{f,d}|B_{\rho}|\\
     &\le C_{f,d}a^{3d+1}\left(\frac{s^d}{\rho^d}+1\right)|B_{\rho}|.
 \end{align*}

 This way 
 \begin{align*}
     \langle N_{f_{a,0s}}\rangle_0\le C_{f,d}a^{3d+1}\delta\left(\frac{s^d}{\rho^d}+1\right)|B_{\rho}|\le C_{f,d}a^{3d+1}\delta^2\left(\frac{s^d}{\rho^d}+1\right) \langle N_{|x|\le\rho}\rangle_0
 \end{align*}
\end{proof}
Noticing that condition \eqref{ineq N Nf at time 0} holds for any function in our class we can bound the integral of the ASTLO by an expression that only contains expectation at initial time of the number operator on the whole lattice and restricted to a finite region.

\begin{corollary}[Bound in terms of number operator on finite region]\label{cor bound on integral before limit}
 Consider an initial state such that there exists a constant $\delta>0$ such that $\delta ^{-1}\le\omega(n_x)\le\delta$ for every $x\in\Lambda$. For any $R>0$, range of the interactions, there exists a constant $C_{d,R}$ such that for all $s>C_{d,R}$, $c>\kappa$, $v'>v$, $t>0$, and $f\in\cF$   it holds 

 \begin{align}
     \int_0^t &\langle\: N_{f',rs}\rangle_r dr\\
    &\leq \kappa^{(0)}C_{f,c}t\left(\frac{C_{f,c}R }{s}\right)^n \nonumber\\
   &\quad + C_{f,d}\delta^2\left(\frac{s^d}{\rho^d}+1\right)\big\langle N_{|x|<\rho}\big\rangle_0\nonumber\\
    &\quad +\kappa^{(0)}R  tC_{f,c} \big\langle N\big\rangle_0\Big(\left(\frac{4R}{s}\right)^n\frac{1}{n!}+\left(\frac{4eR}{s}\right)^{n}\Big)\nonumber\\
    &\quad +(\kappa^{(0)})^2 Rt\left(\frac{4R}{s}\right)^n\big\langle N\big\rangle_0\sum_{a=1}^nC_{f,c}^a\left(1-\frac{R}{s}\right)^{-a}\left(\frac{4s}{3eR}-1\right)^{-a}\nonumber\\
    &\quad +(\kappa^{(0)})^2R t\left(\frac{8R}{s}\right)^n \big\langle N\big\rangle_0\sum_{a=1}^n C_{f,c}^a\left(1-\frac{R}{2s}\right)^{-a}\left(\frac{8s}{3eR}-1\right)^{-a}\nonumber\\
    &\quad +\kappa^{(0)}C_{f,d}\delta^2\left(\frac{s^d}{\rho^d}+1\right)\big\langle N_{|x|<\rho}\big\rangle_0\sum_{a=1}^na^{3d+1} C_{f,c}^a\left(1- \frac{36edR}{s}\right)^{-a+1}\left(\frac{s}{9edR}-1\right)^{-a+1},\label{inequality before limit}
\end{align}
for any $b>0$ and $\rho=v't+b$.
\end{corollary}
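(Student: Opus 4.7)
The proof is a direct substitution: starting from the iterated bound of Corollary \ref{cor: iter bound}, I plan to replace every occurrence of an ASTLO average $\langle N_{f_{J_{l_1}+\cdots+J_{l_a}}, ts}\rangle_0$ by a number-operator average on the ball $\{|x|<\rho\}$ by invoking Theorem \ref{N and ASLTO} part 2. The three $\langle N\rangle_0$ terms and the $t(C_{f,c}R/s)^n$ remainder in Corollary \ref{cor: iter bound} pass through unchanged and account for the first, third, fourth and fifth lines of \eqref{inequality before limit}; the task therefore reduces to controlling the two families of initial-time ASTLO contributions.

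For the standalone $C_{f,c} s \langle N_{f,ts}\rangle_0$ term, my plan is to observe that $N_{f,ts}$ equals $N_{f_0, 0\tilde s}$ with centre shifted from $\rho$ to $\rho - vt$, and since $\rho = v't + b$ with $v'>v$ this shifted centre remains at least $b>0$. A direct application of \eqref{ineq N Nf at time 0} with parameter $a = 0$ (so $f_0 = f$) then yields the second line of \eqref{inequality before limit}, after absorbing the leftover factor of $s$ into the polynomial in $s^d/\rho^d + 1$. For the big combinatorial sum, I would apply \eqref{ineq N Nf at time 0} with the theorem's $a$ replaced by $J := J_{l_1} + \cdots + J_{l_a}$. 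The key observation is that $j_i^{(t)} \leq k_i^{(t)}$ forces $J_{l_t} \leq m_t$, so the polynomial prefactor $J^{3d+1}$ is dominated by $M_a^{3d+1}$.

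The remaining and crucial step is to estimate
\begin{equation*}
\sum_{a=1}^{n} C_{f,c}^{a+1} \sum_{\und l \in \cL_a} 4^{-L_a} \sum_{\und m \in \cM_{\und l, a}} (4R/s)^{M_a} M_a^{3d+1}\, T_a(\und l, \und m),
\end{equation*}
where $T_a(\und l, \und m)$ denotes the inner $\und k, \und j$ sums already bounded by $(3e)^{L_a}$ in the proof of Corollary \ref{cor: iter bound} through estimate \eqref{simplified sum}. I would handle the polynomial factor via the power-mean inequality $M_a^{3d+1} \leq a^{3d}(m_1^{3d+1} + \cdots + m_a^{3d+1})$, which factorises the $\und m$-sum into one-dimensional series of the form $\sum_{m\geq l} m^{3d+1}(4R/s)^m$. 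Each such series is bounded by a multiple of $(4R/s)^l (1-4R/s)^{-(3d+2)}$, provided $s$ exceeds a dimension-dependent multiple of $R$ (the origin of the threshold $C_{d,R}$). The remaining geometric sum over $\und l \in \cL_a$, evaluated using \eqref{geometric series}, then produces the closed form $a^{3d+1}(1 - 36edR/s)^{-a+1}(s/(9edR) - 1)^{-a+1}$ appearing in the last line of \eqref{inequality before limit}; the numerical constants $36ed$ and $9ed$ arise as products of the $3e$ from \eqref{simplified sum}, the $4$ inherited from the weight $(4R/s)^m$ and the powers of $d$ required for the polynomial-exponential estimate.

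The principal obstacle is keeping the bookkeeping between the $L_a$-index (controlling the combinatorial multiplicity) and the $M_a$-index (controlling the polynomial-exponential weight) clean enough that the final series decays geometrically in $a$, so that in the eventual limit $n \to \infty$ one may sum from $a=1$ to $\infty$. This is precisely where the threshold condition $s > C_{d,R}$ is used: it ensures that each of the two ratios in the geometric factors stays bounded away from $1$, uniformly in $a$.
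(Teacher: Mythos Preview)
Your overall strategy matches the paper's: start from Corollary~\ref{cor: iter bound}, let the four terms involving $\langle N\rangle_0$ or $t(C_{f,c}R/s)^n$ pass through unchanged, and replace each initial-time ASTLO $\langle N_{f_{J},ts}\rangle_0$ (with $J=J_{l_1}+\cdots+J_{l_a}$) by $C_{f,d}\,J^{3d+1}\delta^2(s^d/\rho^d+1)\langle N_{|x|<\rho}\rangle_0$ via Theorem~\ref{N and ASLTO}(2), using $J\le M_a$. The isolated $C_{f,c}s\langle N_{f,ts}\rangle_0$ term is treated exactly as you describe.

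The divergence is in how you dispose of the weight $M_a^{3d+1}$ in the $\und m$-sum. The paper does \emph{not} use a power-mean splitting; after applying \eqref{simplified sum} it invokes the crude polynomial-to-exponential bound $M_a^{3d}\le C_d\,(3d)^{M_a}$. Together with $(3e)^{L_a}\le(3e)^{M_a}$ this converts $(4R/s)^{M_a}$ into $(36edR/s)^{M_a}$, after which the $\und m$- and $\und l$-sums are pure geometric series evaluated by \eqref{geometric series}. This is exactly what produces the specific constants $36ed=4\cdot 3e\cdot 3d$ and $9ed=36ed/4$ appearing in the statement.

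Your alternative route contains a concrete gap. After the power-mean step you assert that each one-dimensional series satisfies
\[
\sum_{m\ge l} m^{3d+1}(4R/s)^m \;\le\; C\,(4R/s)^l\,(1-4R/s)^{-(3d+2)}
\]
with $C$ independent of $l$. This is false: already the single term $m=l$ contributes $l^{3d+1}(4R/s)^l$, so any valid bound must carry an additional factor $l^{3d+1}$. That residual polynomial would then have to be absorbed in the $\und l$-sum, where the same difficulty recurs. The only clean way out is to trade the polynomial for an exponential, which is precisely the paper's $M_a^{3d}\le C_d(3d)^{M_a}$ manoeuvre. Your closing sentence in fact appeals to a ``polynomial-exponential estimate'' as the source of the $d$-dependent constants --- that is the paper's mechanism, and it is inconsistent with the power-mean approach you outlined just above.
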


\begin{proof}
    By applying  \thmref{N and ASLTO} to the last summand of  \corref{cor: iter bound}
 we can bound it in the following way
 
 \begin{align*}
   \kappa^{(0)}s&\sum_{a=0}^{n-1} C_{f,c}^{a+1}\sum_{\und l\in\cL_{a}}4^{-L_{a}}\sum_{\und m\in\cM_{\und l,a}}\left( \frac{4R}{s}\right)^{M_{a}}\\
    &\qquad\times\sum_{t=1}^{a}\;\sum_{\und k^{(t)}\in\cK_{l_t,m_t}} \;\sum_{\und j^{(t)}\in\cJ_{\und k^{(t)}}}\frac{1}{(k_1^{(1)}-1)!\dots (k_{l_{a}}^{(a)}-1)! }\langle N_{f_{J_{l_1}+..J_{l_{a}}},ts}\rangle_0 ,\nonumber\\
    &\leq  \kappa^{(0)}s\sum_{a=0}^{n-1} (a+1)^{3d}C_{f,c}^{a+1}\sum_{\und l\in\cL_{a}}4^{-L_{a}}\sum_{\und m\in\cM_{\und l,a}}\left( \frac{4R}{s}\right)^{M_{a}}\\\label{very ugly}
    &\qquad\times\sum_{t=1}^{a}\;\sum_{\und k^{(t)}\in\cK_{l_t,m_t}} \;\sum_{\und j^{(t)}\in\cJ_{\und k^{(t)}}}\frac{(J_{l_1}+...+J_{l_{a}})^{3d}}{(k_1^{(1)}-1)!\dots (k_{l_{a}}^{(a)}-1)! } C_{f,d}\delta^2\left(\frac{s^d}{\rho^d}+1\right)  \big\langle N_{|x|<\rho}\big\rangle_0.\nonumber
    \end{align*}
Since $J_i\leq k_i$ for every $i\in[1,l_t]$ and the $k_i$'s sum up to $m_t$ we derive the following
\begin{align*}
    &\sum_{t=1}^{a}\;\sum_{\und k^{(t)}\in\cK_{l_t,m_t}} \;\sum_{\und j^{(t)}\in\cJ_{\und k^{(t)}}}\frac{(J_{l_1}+...+J_{l_{a}})^{3d}}{(k_1^{(1)}-1)!\dots (k_{l_{a}}^{(a)}-1)! }\\
    &\quad \leq \sum_{t=1}^{a}\;\sum_{\und k^{(t)}\in\cK_{l_t,m_t}} \;\sum_{\und j^{(t)}\in\cJ_{\und k^{(t)}}}\frac{ (m_{1}+...+m_{a})^{3d}}{(k_1^{(1)}-1)!\dots (k_{l_{a}}^{(a)}-1)! }\\
    &\quad = {M_{a}}^{3d}\sum_{t=1}^{a}\;\sum_{\und k^{(t)}\in\cK_{l_t,m_t}} \;\sum_{\und j^{(t)}\in\cJ_{\und k^{(t)}}}\frac{1}{(k_1^{(1)}-1)!\dots (k_{l_{a}}^{(a)}-1)! }.
\end{align*}
This way we can bound \eqref{very ugly} by

    \begin{align}
   \kappa^{(0)}&C_{f,d}\delta^2\left(\frac{s^d}{\rho^d}+1\right)\big\langle N_{|x|<\rho}\big\rangle_0\sum_{a=0}^{n-1} (a+1)^{3d+1}C_{f,c}^{a+1}\sum_{\und l\in\cL_{a}}4^{-L_{a}}\sum_{\und m\in\cM_{\und l,a}}\left( \frac{4R}{s}\right)^{M_{a}}(M_{a})^{3d}\\
    &\qquad\times\sum_{t=1}^{a}\;\sum_{\und k^{(t)}\in\cK_{l_t,m_t}} \;\sum_{\und j^{(t)}\in\cJ_{\und k^{(t)}}}\frac{1}{(k_1^{(1)}-1)!\dots (k_{l_{a}}^{(a)}-1)! }\nonumber  \\
    &\leq  \delta^2\kappa^{(0)}C_{f,d}\left(\frac{s^d}{\rho^d}+1\right)\big\langle N_{|x|<\rho}\big\rangle_0\sum_{a=0}^{n-1} (a+1)^{3d+1}C_{f,c}^{a+1}\sum_{\und l\in\cL_{a}}4^{-L_{a}}\sum_{\und m\in\cM_{\und l,a}}(M_{a})^{3d}\left( \frac{12eR}{s}\right)^{M_{a}}\label{very ugly 2}
\end{align}
where we used the same procedure as in \eqref{simplified sum}. Now we notice, since $M_{a}>1$, that there exists a constant $C_d$ such that

\begin{equation*}
    (M_{a})^{3d}\leq C_d(3d)^{M_{a}}.
\end{equation*}
\bigskip
This allows us to bound the sums in \eqref{very ugly 2} by

\begin{align*}
  \sum_{a=0}^{n-1}& (a+1)^{3d+1}C_{f,c}^{a+1}\sum_{\und l\in\cL_{a}}4^{-L_{a}}\sum_{\und m\in\cM_{\und l,a}}(M_{a})^{3d}\left( \frac{12eR}{s}\right)^{M_{a}}\\
  &\le C_d\sum_{a=0}^{n-1} (a+1)^{3d+1}C_{f,c}^{a+1}\sum_{\und l\in\cL_{a}}4^{-L_{a}}\sum_{\und m\in\cM_{\und l,a}}\left( \frac{36edR}{s}\right)^{M_{a}}\\
    &\le C_d\sum_{a=0}^{n-1} (a+1)^{3d+1}C_{f,c}^{a+1}\left(1- \frac{36edR}{s}\right)^{-a}\sum_{\und l\in\cL_{a}}\left( \frac{9edR}{s}\right)^{L_{a}}\\  
    &\le C_d\sum_{a=0}^{n-1} (a+1)^{3d+1}C_{f,c}^{a+1}\left(1- \frac{36edR}{s}\right)^{-a}\left( \frac{s}{9edR}-1\right)^{-a},
\end{align*}
where the last two inequalities hold thanks to \eqref{geometric series}, if $s>36edR$.

 Using again  \lemref{N and ASLTO} we obtain the following bound

 \begin{align*}
    C_{f,c}s\langle N_{f,ts}\rangle_0\leq C_{f,d}\delta^2\left(\frac{s^d}{\rho^d}+1\right)\big\langle N_{|x|<\rho}\big\rangle_0,
\end{align*}
 which concludes the proof.
 
 \end{proof}

We show that, for $s$ big enough, almost all the terms in  \corref{cor bound on integral before limit} vanish in the limit $n\to\infty$. This way  we are able to derive a bound that only depends on the number operator restricted on the region $\{|x|<\rho\}$, at the initial time.

 \begin{corollary}[Thermodynamically stable bound]\label{bound on int with only initial time}
     Consider an initial state such that, for all $ x\in\Lambda$, $\delta^{-1}\le \omega(n_x)\le\delta$ for some positive constant $\delta$. For any $R>0$, range of the interactions, $c>\kappa$, $v'>v$, and $f\in\cF$ there exist two constants $C_{f,d,c}$ and $C^1_{f,d,c,R}$, such that for every $s>C^1_{f,d,c,R}$ the following holds

     \begin{align*}
    \int_0^t \langle\: N_{f',rs}\rangle_r dr  \leq C_{f,c,d}\delta^2\left(\frac{s^d}{\rho^d}+1\right)\big\langle N_{|x|<\rho}\big\rangle_0,
    \end{align*}
 for any $b>0$ and $\rho=v't+b$.    
 \end{corollary}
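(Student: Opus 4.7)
The plan is to start from inequality \eqref{inequality before limit} in Corollary \ref{cor bound on integral before limit} and pass to the limit $n\to\infty$. Since the left-hand side $\int_0^t \langle N_{f',rs}\rangle_r\,dr$ is independent of $n$, it suffices to show that, for $s$ sufficiently large, the six terms on the right-hand side either vanish as $n\to\infty$ or converge to a bound of the desired form $C_{f,c,d}\delta^2(s^d/\rho^d + 1)\langle N_{|x|<\rho}\rangle_0$.

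First I would fix $s$ large (defining $C^1_{f,d,c,R}$) so that $C_{f,c}R/s,\, 4eR/s,\, 4R/s,\, 8R/s < 1$ and so that the three key ratios
\begin{align*}
q_4 &:= C_{f,c}\left(1-\tfrac{R}{s}\right)^{-1}\left(\tfrac{4s}{3eR}-1\right)^{-1},\\
q_5 &:= C_{f,c}\left(1-\tfrac{R}{2s}\right)^{-1}\left(\tfrac{8s}{3eR}-1\right)^{-1},\\
q_6 &:= C_{f,c}\left(1-\tfrac{36edR}{s}\right)^{-1}\left(\tfrac{s}{9edR}-1\right)^{-1}
\end{align*}
are strictly less than $1$. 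Each of these ratios tends to $0$ as $s\to\infty$, so the condition is always achievable by taking $s$ sufficiently large.

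Next I would handle each of the six terms on the right-hand side of \eqref{inequality before limit} in turn. The first term $\kappa^{(0)}C_{f,c}t(C_{f,c}R/s)^n$ tends to $0$. The second is already independent of $n$ and of the desired form, contributing directly to the final constant. In the third term, both $(4R/s)^n/n!$ and $(4eR/s)^n$ tend to $0$. For the fourth and fifth terms, the truncated geometric sums $\sum_{a=1}^n q_4^a$ and $\sum_{a=1}^n q_5^a$ are uniformly bounded in $n$ by $q_4/(1-q_4)$ and $q_5/(1-q_5)$ respectively, so the prefactors $(4R/s)^n$ and $(8R/s)^n$ drive these terms to $0$. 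For the sixth term, rewriting its inner sum as $C_{f,c}\sum_{a=1}^n a^{3d+1}(C_{f,c}q_6)^{a-1}$ and using $C_{f,c}q_6<1$, the series converges absolutely (the polynomial weight $a^{3d+1}$ is dominated by any geometric decay), yielding a finite constant depending only on $f,c,d$ and hence of the required form.

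The main obstacle lies in the sixth term: unlike the fourth and fifth, it has no vanishing prefactor in $n$, so absolute convergence of the infinite series with polynomial weight $a^{3d+1}$ is essential to obtain an $s$-independent constant $C_{f,c,d}$. This is precisely what forces the quantitative choice of $s$ large enough that $C_{f,c}q_6<1$; once this is in place, combining the limit of the sixth term with the second term gives the claimed estimate.
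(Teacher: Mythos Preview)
Your proposal is correct and follows the same strategy as the paper's proof: pass to the limit $n\to\infty$ in the bound of Corollary~\ref{cor bound on integral before limit} after choosing $s$ large enough that terms one, three, four, and five vanish and term six converges to a multiple of term two. One small slip: in the sixth term the inner sum is actually $C_{f,c}\sum_{a} a^{3d+1}q_6^{\,a-1}$ (your $q_6$ already absorbs the factor $C_{f,c}$), so the relevant convergence condition is simply $q_6<1$, which you already imposed; the paper instead bounds $a^{3d+1}\le C_d(3d+1)^a$ to reduce to a pure geometric series, but your direct appeal to $\sum_a a^{3d+1}r^{a-1}<\infty$ for $r<1$ is equally valid and yields an $s$-independent constant once the threshold forces, say, $q_6\le 1/2$.
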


 \begin{proof}
     Because of the hypothesis,  \corref{cor bound on integral before limit} holds and our goal is to understand the n dependency in each term of the left hand side of inequality \eqref{inequality before limit}. We start with the first summand

     \begin{align*}
       \kappa^{(0)}C_{f,c}t\left(\frac{C_{f,c}R }{s}\right)^n,  
     \end{align*}
     which goes to zero in the limit $n\to\infty$ if and only if 
     \begin{align}
         s>C_{f,c}R. \label{s cond 1}
     \end{align}
     The third term
     \begin{align*}
        C_{f,c}\kappa^{(0)}R  t \big\langle N\big\rangle_0\Big(\left(\frac{4R}{s}\right)^n\frac{1}{n!}+\left(\frac{4eR}{s}\right)^{n}\Big), 
     \end{align*}
     decays if 
     \begin{align}
         s>4eR.\label{s cond 2}
     \end{align}
     Now we focus on the fourth term
     \begin{align*}
       (\kappa^{(0)})^2 Rt\left(\frac{4R}{s}\right)^n\big\langle N\big\rangle_0\sum_{a=1}^nC_{f,c}^a\left(1-\frac{R}{s}\right)^{-a}\left(\frac{4s}{3eR}-1\right)^{-a},  
     \end{align*}
     which decays if $s>4R$ and
    \begin{align}
   & C_{f,c}\left(1-\frac{R}{s}\right)^{-1}\left(\frac{4s}{3eR}-1\right)^{-1}<1\nonumber\\
   \iff\quad & \left(1-\frac{R}{s}\right)\left(\frac{4s}{3eR}-1\right)>C_{f,c}\nonumber\\
   \iff\quad &\frac{4s}{3eR}-\frac{4}{3e}+\frac{R}{s}-1 >C_{f,c}\nonumber\\
   \Leftarrow \qquad &\frac{4s}{3eR} >C_{f,c}+\frac{4}{3e}+1\nonumber\\
   \iff \quad & s>\frac{3eR}{4}\left(C_{f,c}+\frac{4}{3e}+1\right).\label{s cond 3}
\end{align}
Analogously the fifth term
\begin{align*}
    (\kappa^{(0)})^2R t\left(\frac{8R}{s}\right)^n \big\langle N\big\rangle_0\sum_{a=1}^n C_{f,c}^a\left(1-\frac{R}{2s}\right)^{-a}\left(\frac{8s}{3eR}-1\right)^{-a}
\end{align*}
goes to zero if $s>8R$ and
\begin{align}
      & C_{f,c}\left(1-\frac{R}{2s}\right)^{-1}\left(\frac{8s}{3eR}-1\right)^{-1}<1\nonumber\\
     \iff \quad & \left(1-\frac{R}{2s}\right)\left(\frac{8s}{3eR}-1\right)>C_{f,c}\nonumber\\
     \iff\quad & \frac{8s}{3eR}+\frac{R}{2s}-\frac{4}{3e}-1>C_{f,c}\nonumber\\
     \Leftarrow\qquad &  \frac{8s}{3eR}>C_{f,c}+\frac{4}{3e}+1\nonumber\\
     \iff\quad & s>\frac{3eR}{8}\left(C_{f,c}+\frac{4}{3e}+1\right).\label{s cond 4}
\end{align}
Since there exists a constant $C_d$ such that $a^{3d}\leq C_d(3d)^a$, we notice that we can control the last term by
\begin{align*}
    \kappa^{(0)}&C_{f,d}\delta^2\left(\frac{s^d}{\rho^d}+1\right)\big\langle N_{|x|<\rho}\big\rangle_0\sum_{a=1}^na^{3d+1} C_{f,c}^a\left(1- \frac{36edR}{s}\right)^{-a+1}\left(\frac{s}{9edR}-1\right)^{-a+1}\\
    \leq&\kappa^{(0)}C_{f,d}\delta^2\left(\frac{s^d}{\rho^d}+1\right)\big\langle N_{|x|<\rho}\big\rangle_0\sum_{a=1}^n(3d+1)^a C_{f,c}^a\left(1- \frac{36edR}{s}\right)^{-a+1}\left(\frac{s}{9edR}-1\right)^{-a+1}
\end{align*}
we notice that, in the limit $n\to\infty$, it converges to
\begin{align*}
    C_{f,d}\delta^2\left(\frac{s^d}{\rho^d}+1\right)\big\langle N_{|x|<\rho}\rangle_0
\end{align*}
if the following holds
\begin{align}
     & (3d+1)C_{f,c}\left(1- \frac{36edR}{s}\right)^{-1}\left(\frac{s}{9edR}-1\right)^{-1}<1\nonumber\\
    \iff \quad & \left(1- \frac{36edR}{s}\right)\left(\frac{s}{9edR}-1\right)>C_{f,c,d}\nonumber\\
      \iff \quad & \frac{s}{9edR}+\frac{36edR}{s}-5>C_{f,c,d}\nonumber\\
      \Leftarrow\qquad & \frac{s}{9edR}>C_{f,c,d}\nonumber\\
      \iff \quad & s>RC_{f,c,d}.\label{s cond 5}
\end{align}
If $s$ satisfies \eqref{s cond 1}, \eqref{s cond 2}, \eqref{s cond 3}, \eqref{s cond 4}, and \eqref{s cond 5}, as $n$ goes to infinity, all those five terms go to zero and we obtain the desired inequality.
 \end{proof}
 
 We notice that if we allow for infinite range interactions, even very fast decaying ones, the term $\kappa^{(n)}$ would grow faster than exponential in n. This would make the terms diverge in the limit  $n\to\infty$.

\subsection{Main result}\label{true main res}
 In this section we will prove  \thmref{CT main}.
 
 \subsubsection{Proof for ball regions}
 We first show that the expectation of the number operator on the
region $\{|x| <b\}$ in the state evolved under the action of the Hamiltonian is controlled by the number of
particles initially in the region $\{|x| < \rho\}$. Then we conclude the proof generalizing this result to any bounded set $X$. 
 \begin{lemma}[Bound on particle transport velocity for ball regions]\label{main with balls}
   Consider an initial state such that, for all $ x\in\Lambda$, $\delta^{-1}\le\omega(n_x)\le\delta$ for some positive constant $\delta$. For any range of the interactions $R>0$, $c>\kappa$, and $v'>v$,  there exists a constant $C_{d,c,R}$ such that for any $b>0$, $\rho\ge b+v't$ it holds 
  \begin{align}
    \big\langle N_{|x|<b}\big\rangle_t  \leq  C_{d,c,R}\delta^2\big\langle N_{|x|<\rho}\big\rangle_0.
\end{align}
 \end{lemma}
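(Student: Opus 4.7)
The plan is to combine in sequence the four ingredients established earlier in this section: part (i) of the lemma relating the number operator to ASTLOs reduces the particle count on $\{|x|<b\}$ at time $t$ to an evolved ASTLO; the theorem controlling the evolved ASTLO then bounds it by its initial value plus a nonpositive dissipative integral, a series of higher-derivative integrals, and a Lagrange-type remainder; the integral terms are controlled by the thermodynamically stable integral bound; and finally part (ii) of the same lemma converts the initial ASTLO back into a particle count on a ball. Concretely, I would fix once and for all a function $f\in\cF$ with $f(0)\ge 1/2$ (obtained by shifting any member of $\cF$); since $\rho\ge v't+b$ with $v'>v$, the argument $(\rho-vt-|x|)/s$ is nonnegative whenever $|x|<b$ and monotonicity of $f$ yields $\langle N_{|x|<b}\rangle_t\le 2\langle N_{f,ts}\rangle_t$.

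Applying the ASTLO-evolution theorem for each $n\ge 1$, the term $\tfrac{\kappa-v}{s}\int_0^t \langle N_{f',rs}\rangle_r\,dr$ is discarded because it is nonpositive, and letting $n\to\infty$ kills the Lagrange remainder: finite-range hopping gives $\kappa^{(n+1)}\le \kappa^{(0)}R^{n+1}$, and the accompanying prefactors are of the form $(4R/s)^n$ and $(4eR/s)^n$, both vanishing once $s$ is chosen large enough relative to $R$. Each remaining integral $\int_0^t \langle N_{f'_{J_l},rs}\rangle_r\,dr$ is bounded by the thermodynamically stable integral bound applied with the function $f_{J_l}\in\cF$, producing a factor $C_{f_{J_l},c,d}\,\delta^2(s^d/\rho^d+1)\langle N_{|x|<\rho}\rangle_0$. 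A careful bookkeeping using the explicit polynomial-in-derivative-order estimates of the two lemmas on the function class shows that $C_{f_{J_l},c,d}$ grows only polynomially in $J_l$, while the combinatorial prefactor $4^{m-l}s^{-m-1}/[(k_1-1)!\cdots(k_l-1)!]\,\kappa^{(m)}$ combined with $\kappa^{(m)}\le \kappa^{(0)}R^m$ yields a convergent multi-index series as soon as $s$ exceeds an explicit threshold $s_*(d,c,R)$. The initial ASTLO $\langle N_{f,ts}\rangle_0$ is handled by part (ii) of the number-ASTLO lemma applied with the shifted distance $\rho-vt\ge b$ in place of $\rho$, and using $\{|x|<\rho-vt\}\subseteq\{|x|<\rho\}$ to return to $\langle N_{|x|<\rho}\rangle_0$.

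Finally, choosing $s=s_*(d,c,R)$ as a fixed function of $d,c,R$ only, all factors $(s^d/\rho^d+1)$ and the polynomial-in-$s$ prefactors absorb into a single constant $C_{d,c,R}$, giving the desired inequality once multiplied by the factor $2$ from the first reduction. The main obstacle will be tracking the $J_l$-dependence of the constants $C_{f_{J_l},c,d}$ finely enough to guarantee absolute convergence of the resulting multi-index series uniformly in $\rho$ and $t$; this is exactly where the tailor-made band-limited class $\cF$ pays off, since for generic smooth cutoffs one would incur factorial growth in the derivative order that would make the summation diverge. A secondary bookkeeping issue is to verify that the $s^d/(\rho-vt)^d$ factor coming from the initial ASTLO bound remains dominated by the $(s^d/\rho^d+1)$ factors obtained from the integral terms, so that the final constant is genuinely independent of $b$, $\rho$, and $t$.
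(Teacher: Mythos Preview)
Your approach is essentially the same as the paper's: reduce $\langle N_{|x|<b}\rangle_t$ to an evolved ASTLO via the first half of the number-operator/ASTLO comparison, apply the ASTLO-evolution theorem, drop the nonpositive dissipative integral, control the higher-derivative integrals via the thermodynamically stable bound, send $n\to\infty$, and convert back via the second half of the comparison.

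The one place you diverge from the paper---and where a gap appears---is in the choice of the adiabatic parameter $s$. You propose fixing $s=s_*(d,c,R)$ once and for all and then asserting that the factors $(s^d/\rho^d+1)$ absorb into a constant $C_{d,c,R}$. This fails: with $s$ fixed, $(s^d/\rho^d+1)$ blows up as $\rho\to 0$, and the lemma as stated allows arbitrary $b>0$ and $\rho\ge b+v't$. The paper instead chooses $s=C_R\rho$ proportionally to $\rho$, which turns $(s^d/\rho^d+1)=C_R^d+1$ into a genuine constant depending only on $d$ and $R$ (via $C_R$). The various threshold conditions on $s$ (i.e.\ $s>C_{f,c}R$, $s>4eR$, $s>RC_{f,c,d}$, etc.) then become lower bounds on $\rho$ that can be made as mild as desired by enlarging $C_R$ and are in any case satisfied in the subsequent application to general sets $X$.

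Your ``secondary bookkeeping issue'' about $s^d/(\rho-vt)^d$ is actually a non-issue: integrating $\tfrac{d}{dr}\langle N_{f,rs}\rangle_r$ from $0$ to $t$ produces the initial term $\langle N_{f,0s}\rangle_0$, whose argument is $(\rho-|x|)/s$ with no $t$-shift, so part~(ii) of the comparison lemma applies directly with radius $\rho$ rather than $\rho-vt$. The appearance of $\langle N_{f,ts}\rangle_0$ in the statement of the ASTLO-evolution theorem is a typo for $\langle N_{f,0s}\rangle_0$.
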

 \begin{proof}
     By  \thmref{theorem bound on integral of exp of N}, we know that for any $n\in\mathcal{N}$ and $f\in\cF$ inequality \eqref{lemma before recursion} holds.

     If we drop the term $\frac{\kappa-v}{s}\int_0^t \langle\: N_{f',rs}\rangle_r dr$, bound $\langle N_{f,ts}\rangle_0$ using \eqref{ineq N Nf at time 0} and applying  \corref{bound on int with only initial time} to $\int_0^t \langle\: N_{f',rs}\rangle_r dr$ we obtain

     \begin{align}
    \langle N_{f,ts}\rangle_t\leq  \langle &N_{f,ts}\rangle_0\nonumber\\
    &+C\sum_{l=1}^{n-1}\sum_{m=l}^n \sum_{\substack{1\leq k_i,\,\forall i\in[1,l]\\k_1+\dots+k_l=m}} \sum_{\substack{0\leq j_1\leq k_1 \\ \forall i\in[1,l]}}\frac{4^{m-l}\cdot s^{-m-1}}{(k_1-1)!\dots (k_l-1)! }\kappa^{(m)}\int_0^t\big\langle N_{f'_{J_l,rs}}\big\rangle_rdr  \nonumber  \\
    &+C_f4^n\Big(\frac{1}{n!}+\sum_{l=2}^{n-1}\left(\frac{e}{2}\right)^l\sum_{m=l-1}^n  2^m\Big)s^{-n-1}\kappa^{(n+1)}  t \big\langle N\big\rangle_0\nonumber\\
    \leq\;  &C_{f,d}\delta^2\left(\frac{s^d}{\rho^d}+1\right)\big\langle N_{|x|<\rho}\big\rangle_0\nonumber\\
    &+C_{f,d}\delta^2\left(\frac{s^d}{\rho^d}+1\right)\sum_{l=1}^{n-1}\sum_{m=l}^n \sum_{\substack{1\leq k_i,\,\forall i\in[1,l]\\k_1+\dots+k_l=m}} \sum_{\substack{0\leq j_1\leq k_1 \\ \forall i\in[1,l]}}\frac{4^{m-l}\cdot s^{-m-1}}{(k_1-1)!\dots (k_l-1)! }\kappa^{(m)}t\big\langle N_{|x|<\rho}\big\rangle_0   \nonumber \\
    &+C_f4^n\Big(\frac{1}{n!}+\sum_{l=2}^{n-1}\left(\frac{e}{2}\right)^l\sum_{m=l-1}^n  2^m\Big)s^{-n-1}\kappa^{(n+1)}  t \big\langle N\big\rangle_0\nonumber
\end{align}

By the same argument as in  \corref{bound on int with only initial time} we see that as $n$ goes to infinity the last two terms go to zero and the inequality becomes
\begin{align}
    \langle N_{f,ts}\rangle_t\leq C_{f,d}\delta^2\left(\frac{s^d}{\rho^d}+1\right)\big\langle N_{|x|<\rho}\big\rangle_0.\nonumber
\end{align}

If we choose $s=C_R\rho$, for some constant $C_R>0$ such that conditions \eqref{s cond 1}, \eqref{s cond 2}, \eqref{s cond 3}, \eqref{s cond 4}, and \eqref{s cond 5} are fulfilled, we obtain the following bound:

\begin{align}
    \langle N_{f,ts}\rangle_t\leq C_{f,d,c,R}\big\langle N_{|x|<\rho}\big\rangle_0.\nonumber
\end{align}

Now we apply \eqref{ineq N Nf at time t} on the left hand side of this inequality and we conclude the proof.
 \end{proof}

\subsubsection{Proof of Theorem \ref{CT main}}\label{secPfMain}

\begin{proof}

Since $X$ is a bounded set there exists $r>0$ such that $X\subset B_r$. We define 
\begin{align*}
    \eta_0:=\inf\{r>0 \quad s.t\quad X\subset B_r\}.
\end{align*}

This implies that
\begin{equation*}
    \omega_t(N_X)\le \omega_t(N_{B_{\eta_0}}).
\end{equation*}
By  \lemref{main with balls} we know that for any $\eta>vt$ there exists a constant $C\equiv C_{d,c,R}$ such that

\begin{equation*}
    \omega_t(N_{B_{\eta_0}})\le \delta^2 C \omega(N_{B_{\eta_0+\eta}}).
\end{equation*}
Since we assumed controlled density in the initial state
\begin{equation*}
    C \omega(N_{B_{\eta_0+\eta}})\le C{\delta}|B_{\eta_0+\eta}|=C{\delta}(\eta_0+\eta)^d\le C\delta2^d \eta^d\le C\delta |B_\eta|\le C \delta^2\omega(N_{B_\eta}),
\end{equation*}
if we assume $\eta>\eta_0$. We point out that the constant depends on the dimension $d$, the range $R$ and $c$.
By definition of $X_\eta$, there always exists a ball $B_\eta$ contained in $X_\eta$, which implies
\begin{equation*}
    \omega(N_{B_{\eta}})\le\omega (N_{X_\eta})
\end{equation*}
and this concludes the proof.
\end{proof}

\end{document}